\newtheorem{theorem}{Theorem}[section]
\newtheorem{lemma}{Lemma}[section]
\newtheorem{proposition}{Proposition}[section]
\let\epsilon=\varepsilon
\def\tire{\thinspace--\thinspace}
\newcommand{\eps}{\varepsilon}
\def\notes#1{}
\title{How Smooth  Should be  the System Initially to Escape Unbounded Chaos}
\author{A.A.~Lykov\thanks{Mechanics and Mathematics Faculty, Lomonosov Moscow
State University, Leninskie Gory~1, Moscow, 119991, Russia} \and  V.A.~Malyshev\footnotemark[1]}
\begin{document}
\maketitle

\begin{abstract}
We consider infinite harmonic chain on the real line with deterministic
dynamics (no stochasticity). We indicate classes of uniformly bounded
initial conditions when the trajectories of particles stay uniformly
bounded. 
\end{abstract}

\tableofcontents{}
\allowdisplaybreaks

\section{The model}

This paper is, in some sense, a continuation of our paper \cite{LM_1},
and uses the same model and notation. Thus, we consider infinite particle
system with trajectories $\{x_{k}(t),k\in Z\}$ on $R$. Put $q_{k}=x_{k}-a_{k}$
for fixed 
\[
\ldots <a_{k}<a_{k+1}<\ldots 
\]
The formal potential energy is defined as 
\[
U=\frac{\omega^{2}}{2}\sum_{k}(x_{k+1}-x_{k}-a_{k+1}+a_{k})^{2}=\frac{\omega^{2}}{2}\sum_{k}(q_{k+1}-q_{k})^{2}
\]
where $\omega>0$. The trajectories are defined by the following linear
system of equations for $q_{k}(t)$ 
\begin{equation}
\frac{d^{2}q_{k}}{dt^{2}}=\omega^{2}(q_{k+1}-2q_{k}+q_{k-1})=\omega^{2}(\Delta q)_{k}\label{infSystem}
\end{equation}
and initial conditions are always assumed to be $q(0)\in l_{\infty},\ p(0)=0$ (we denote $p(t)=\dot{q}(t)$),
that is, 
\[
\sup_{k}|q_{k}(0)|<\infty,\ p_{k}(0)=0.
\]
We will try to find sub-classes of the initial conditions such that $q_{k}(t)$
are bounded uniformly in $k$ and $t$.

\section{Results}

We will need the following definitions. For any sequence $q\in l_{\infty}$
define the new sequence 
\[
q^{\Delta}=-\Delta q,\quad q_{k}^{\Delta}=2q_{k}-q_{k+1}-q_{k-1},\ k\in\mathbb{Z}.
\]
Denote by $l^{\Delta}\subset l_{\infty}(\mathbb{Z})$ the set of sequences
$q\in l_{\infty}(\mathbb{Z})$, for which the following conditions
hold: 
\begin{enumerate}
\item $q^{\Delta}\in l_{2}(\mathbb{Z})$. In this case the Fourier transform
of $q^{\Delta}$ is defined as 
\[
Q^{\Delta}(\lambda)=\sum_{k}e^{ik\lambda}q_{k}^{\Delta}\in L_{2}([0,2\pi]).
\]
\item For some real number $A\in\mathbb{R}$ the function 
\begin{equation}
h(\lambda)=\frac{1}{\lambda}\left(\frac{Q^{\Delta}(\lambda)}{\lambda}-iA\right)\label{phiDef_s}
\end{equation}
belongs to $L_{1}[0,\pi]$, where $i^{2}=-1$, that is 
\[
\int_{0}^{\pi}| h(\lambda)|d\lambda<\infty.
\]
\end{enumerate}

Main theorems, which are given below, about uniform boundedness and limit behavior, will
hold for initial conditions from subspace $l^{\Delta}$.

It is clear that $l^{\Delta}$ is linear space over $\mathbb{R}$,
and below we shall present its properties in more detail. But first
of all we will explain the intuitive sense of condition (\ref{phiDef_s}):
we show informally that (\ref{phiDef_s}) holds if the sequence $q_{k}^{\Delta}$
tends to zero sufficiently fast if $|k|\to\infty$. It is clear 
that $h(\lambda)$ is absolutely integrable on any interval $[\delta,\pi],\ \delta>0$.
That is why we should understand what occurs around zero. One can
write: 
\[
q_{k}^{\Delta}=-(\delta_{k+1}-\delta_{k}),\ \quad\delta_{k}=q_{k}-q_{k-1}.
\]
Thus it is natural to expect that $Q^{\Delta}(0)=0$. It will
follow that in the case when $Q^{\Delta}(\lambda)$ is sufficiently smooth
we could write 
\[
Q^{\Delta}(\lambda)=c\lambda+O(\lambda^{2}),\quad c=\frac{d}{d\lambda}Q^{\Delta}(0)=i\sum_{k}kq_{k}^{\Delta}.
\]
Consequently, if we put $A=\sum_{k}kq_{k}^{\Delta}$ in $(\ref{phiDef_s})$,
we will see that $h(\lambda)=O(1)$, and thus $h(\lambda)\in L_{1}([0,\pi])$,
and the corresponding condition on $h(\lambda)$ holds.

Note now that condition \ref{phiDef_s} is equivalent to the following one
(that we will use below): for some real number $A\in\mathbb{R}$ the
function 
\begin{equation}
\phi(\lambda)=\frac{1}{\sin (\lambda / 2)}\Bigl(\frac{Q^{\Delta}(\lambda)}{\sin (\lambda / 2)}-iA\Bigr)\label{phiDef}
\end{equation}
belongs to $L_{1}[0,\pi]$.

\begin{theorem}[On uniform boundedness] \label{unBoundTh} Assume
that $q(0)\in l^{\Delta},\ p(0)=0$, then the solution $q(t)$ is
uniformly bounded, that is, 
\[
\sup_{t\geqslant0}\sup_{k\in\mathbb{Z}}|q_{k}(t)|<\infty.
\]
\end{theorem}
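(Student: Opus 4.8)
My plan is to represent the solution as a Fourier integral and then use the two conditions defining $l^{\Delta}$ to split that integral into a manifestly bounded piece plus one genuinely singular piece, which I will control via Bessel functions. For the Fourier representation I would start from the formula of \cite{LM_1}: since $p(0)=0$, the solution of \eqref{infSystem} is
\[
q_{k}(t)=\frac{1}{2\pi}\int_{0}^{2\pi}e^{-ik\lambda}\cos\!\bigl(2\omega t\sin(\lambda/2)\bigr)\,\widehat q_{0}(\lambda)\,d\lambda,\qquad \widehat q_{0}(\lambda)=\frac{Q^{\Delta}(\lambda)}{4\sin^{2}(\lambda/2)},
\]
the integral read as $\lim_{\delta\to0}\int_{\delta}^{2\pi-\delta}$. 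Because $q^{\Delta}$ is real one has $\overline{\widehat q_{0}(\lambda)}=\widehat q_{0}(2\pi-\lambda)$, and the cosine factor is invariant under $\lambda\mapsto2\pi-\lambda$; substituting $\lambda\mapsto2\pi-\lambda$ on $[\pi,2\pi]$ therefore collapses the integral to
\[
q_{k}(t)=\frac{1}{\pi}\,\mathrm{Re}\int_{0}^{\pi}e^{-ik\lambda}\cos\!\bigl(2\omega t\sin(\lambda/2)\bigr)\frac{Q^{\Delta}(\lambda)}{4\sin^{2}(\lambda/2)}\,d\lambda .
\]

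Next I would rewrite \eqref{phiDef} as $\dfrac{Q^{\Delta}(\lambda)}{\sin^{2}(\lambda/2)}=\phi(\lambda)+\dfrac{iA}{\sin(\lambda/2)}$ and substitute. The $\phi$-contribution equals $\tfrac{1}{4\pi}\,\mathrm{Re}\int_{0}^{\pi}e^{-ik\lambda}\cos(2\omega t\sin(\lambda/2))\phi(\lambda)\,d\lambda$ and is bounded by $\tfrac{1}{4\pi}\|\phi\|_{L_{1}[0,\pi]}$, uniformly in $k$ and $t$, since the remaining factor has modulus $\le1$. In the other term the surviving factor $\cos(2\omega t\sin(\lambda/2))/\sin(\lambda/2)$ is real, so taking the real part leaves exactly $\dfrac{A}{4\pi}\,I(k,t)$ with
\[
I(k,t):=\int_{0}^{\pi}\frac{\sin(k\lambda)}{\sin(\lambda/2)}\,\cos\!\bigl(2\omega t\sin(\lambda/2)\bigr)\,d\lambda ,
\]
which converges absolutely for each fixed $(k,t)$ because $\sin(k\lambda)/\sin(\lambda/2)$ stays bounded near $\lambda=0$. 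So the whole theorem reduces to the uniform estimate $\sup_{k\in\mathbb Z,\,t\ge0}|I(k,t)|<\infty$, and this is the crux.

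To prove that estimate I would put $\theta=\lambda/2$, expand $\cos(2\omega t\sin\theta)=\sum_{m\in\mathbb Z}J_{2m}(2\omega t)\cos(2m\theta)$ (Jacobi--Anger; the interchange with the integral is legitimate for fixed $t$), and combine $2\sin(2k\theta)\cos(2m\theta)=\sin2(k{+}m)\theta+\sin2(k{-}m)\theta$ with $\int_{0}^{\pi/2}\frac{\sin2n\theta}{\sin\theta}\,d\theta=2\sum_{j=1}^{n}\frac{(-1)^{j-1}}{2j-1}=:S_{n}$ (and $S_{0}=0$, $S_{-n}=-S_{n}$) to get the convolution identity $I(k,t)=2\sum_{m\in\mathbb Z}J_{2m}(2\omega t)\,S_{k+m}$. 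I would then write $S_{n}=\tfrac{\pi}{2}\operatorname{sgn}(n)+r_{n}$; since $|r_{n}|\le2/(2|n|+1)$ the remainder $(r_{n})$ lies in $l_{2}(\mathbb Z)$, so by Cauchy--Schwarz together with the Parseval bound $\sum_{m}J_{2m}(x)^{2}\le\sum_{n}J_{n}(x)^{2}=1$ the piece $2\sum_{m}J_{2m}(2\omega t)r_{k+m}$ is bounded by $2\|r\|_{2}$. The sign piece is $\pi\bigl(2\sum_{m>-k}J_{2m}(2\omega t)+J_{2k}(2\omega t)-1\bigr)$, and since $\sum_{m\in\mathbb Z}J_{2m}(x)=1$ it is controlled by the truncated Bessel sums $\sum_{m<M}J_{2m}(x)$; differentiating in $x$ and telescoping via $J_{n}'=\tfrac12(J_{n-1}-J_{n+1})$ gives the closed form $\sum_{m<M}J_{2m}(x)=\mathbf 1[M\ge1]-\tfrac12\int_{0}^{x}J_{2M-1}(s)\,ds$, which is bounded uniformly in $M$ and $x$ by the classical fact that the partial integrals $\int_{0}^{x}J_{\nu}(s)\,ds$ are uniformly bounded. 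Putting the two pieces together yields $\sup_{k,t}|I(k,t)|<\infty$, hence $\sup_{t\ge0}\sup_{k}|q_{k}(t)|<\infty$.

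The hard part is this uniform estimate on $I(k,t)$. The naive bound $|I(k,t)|\le 2\sum_{m}|J_{2m}(2\omega t)|\,|S_{k+m}|\lesssim\sum_{m}|J_{2m}(2\omega t)|$ diverges like $(\omega t)^{1/2}$ as $t\to\infty$, so genuine cancellation is unavoidable; the route that works is the one above — turn $I(k,t)$ into a convolution against the bounded sequence $(S_{n})$, discard its $l_{2}$ tail, and reduce the remaining step part to boundedness of truncated Bessel sums (equivalently of $\int_{0}^{x}J_{\nu}$). A secondary, routine matter is making the singular Fourier integral and the sum--integral interchanges rigorous, for which the $[\delta,2\pi-\delta]$ truncation and dominated convergence suffice.
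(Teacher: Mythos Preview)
Your reduction is correct and, in fact, coincides with the paper's: both arguments strip off an $L_{1}$-bounded piece coming from $\phi$ and are left with exactly the oscillatory integral
\[
I(k,t)=\int_{0}^{\pi}\frac{\sin(k\lambda)}{\sin(\lambda/2)}\cos\!\bigl(2\omega t\sin(\lambda/2)\bigr)\,d\lambda
=2\int_{0}^{\pi/2}\frac{\sin(2kx)}{\sin x}\cos\!\bigl(2\omega t\sin x\bigr)\,dx,
\]
which the paper calls $2I_{2k}(2\omega t)$ (Theorem~\ref{Intlemma}). Your convolution/telescoping manipulation via Jacobi--Anger is also essentially what the paper does in Section~5.1, where it is shown that uniform boundedness of $I_{n}(t)$ is \emph{equivalent} to uniform boundedness of the Bessel primitives $G_{n}(t)=\int_{0}^{t}J_{n}(s)\,ds$.

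The gap is in your last line. You invoke as a ``classical fact'' that $\sup_{\nu,x}\bigl|\int_{0}^{x}J_{\nu}(s)\,ds\bigr|<\infty$, but this is precisely the hard content of the theorem. The paper states explicitly that the authors could not locate any such uniform estimate in the literature (Watson, Luke, etc.): pointwise in $\nu$ the bound is easy since $\int_{0}^{\infty}J_{\nu}=1$, but uniformity across the transition regime $x\sim\nu$ is not. The paper then devotes the bulk of Section~5 (Lemmas~\ref{lessonelemma}--\ref{gammagreateronelemma}, roughly ten pages of stationary-phase analysis split into four parameter regions $t/n\lessgtr1$) to proving exactly this. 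So your proposal reproduces the paper's reduction but assumes the theorem's core estimate rather than proving it; to complete the argument you would still need either a reference for the uniform bound on $\int_{0}^{x}J_{\nu}$ or an independent proof of it.
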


\begin{theorem}[On large time behaviour of the system] \label{limitTheorem}
  $\phantom{aa}$  \linebreak Assume that 
  $q(0)\in l^{\Delta},$ $p(0)=0$, then there exists $\nu\in\mathbb{R}$
such that for any $k\in\mathbb{Z}$ we have: 
\[
\lim_{t\rightarrow\infty}q_{k}(t)=\nu.
\]
\end{theorem}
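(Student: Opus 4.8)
\emph{Proof plan.}
I would obtain Theorem~\ref{limitTheorem} from the spectral integral representation of the solution that underlies the proof of Theorem~\ref{unBoundTh}. Since $\Delta$ commutes with \eqref{infSystem} and $q^\Delta(0)\in l_2(\mathbb Z)$, $p^\Delta(0)=0$, the second-difference sequence $q^\Delta(t)=-\Delta q(t)$ stays in $l_2$ with Fourier transform $Q^\Delta(\lambda)\cos\!\bigl(2\omega t\sin(\lambda/2)\bigr)$; recovering $q(t)$ from $q^\Delta(t)$ inside the class of bounded solutions of \eqref{infSystem} with zero initial velocity fixes a constant $c=c(q(0))\in\mathbb R$ such that
\[
q_k(t)=c+v_k(t),\qquad v_k(t)=\frac1{2\pi}\int_{-\pi}^{\pi}e^{-ik\lambda}\,\frac{Q^\Delta(\lambda)}{4\sin^2(\lambda/2)}\,\cos\!\bigl(2\omega t\sin(\lambda/2)\bigr)\,d\lambda .
\]
So it is enough to prove that $v_k(t)\to0$ as $t\to\infty$ for each fixed $k$; the limit $\nu=c$ will then automatically be the same for all $k$.

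The point of the space $l^\Delta$ is precisely to make the integrand above an honest $L_1$ function for each fixed $k$. Using $Q^\Delta(-\lambda)=\overline{Q^\Delta(\lambda)}$ and the evenness of $\sin^2(\lambda/2)$ and of $\cos\!\bigl(2\omega t\sin(\lambda/2)\bigr)$,
\[
v_k(t)=\frac1\pi\int_0^{\pi}\operatorname{Re}\!\Bigl[e^{-ik\lambda}\,\frac{Q^\Delta(\lambda)}{4\sin^2(\lambda/2)}\Bigr]\cos\!\bigl(2\omega t\sin(\lambda/2)\bigr)\,d\lambda .
\]
I would then insert the decomposition contained in \eqref{phiDef}, i.e.\ $\dfrac{Q^\Delta(\lambda)}{\sin^2(\lambda/2)}=\dfrac{iA}{\sin(\lambda/2)}+\phi(\lambda)$ with $\phi\in L_1[0,\pi]$. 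Since $\operatorname{Re}\bigl(ie^{-ik\lambda}\bigr)=\sin k\lambda$,
\[
\operatorname{Re}\!\Bigl[e^{-ik\lambda}\,\frac{Q^\Delta(\lambda)}{4\sin^2(\lambda/2)}\Bigr]=\frac14\Bigl(\frac{A\sin k\lambda}{\sin(\lambda/2)}+\operatorname{Re}\bigl(e^{-ik\lambda}\phi(\lambda)\bigr)\Bigr)=:g_k(\lambda).
\]
The first term is bounded on $[0,\pi]$ — it equals $2A\,\operatorname{sgn}(k)\sum_{j=1}^{|k|}\cos\!\bigl((j-\tfrac12)\lambda\bigr)$, so its modulus is at most $2|A|\,|k|$ — while the second lies in $L_1[0,\pi]$; hence $g_k\in L_1[0,\pi]$ for every fixed $k$.

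It now remains to note that $v_k(t)=\frac1\pi\int_0^{\pi}g_k(\lambda)\cos\!\bigl(\psi(\lambda)t\bigr)\,d\lambda$, where the phase $\psi(\lambda)=2\omega\sin(\lambda/2)$ is a smooth strictly increasing bijection of $[0,\pi]$ onto $[0,2\omega]$ with $\psi'>0$ on $[0,\pi)$ and $\psi'(\pi)=0$. The substitution $u=\psi(\lambda)$ has absolutely continuous inverse — the Jacobian $(\psi^{-1})'(u)$ carries only an integrable $(2\omega-u)^{-1/2}$ singularity at $u=2\omega$ — and turns the integral into $\int_0^{2\omega}G_k(u)\cos(ut)\,du$ with $\int_0^{2\omega}|G_k(u)|\,du=\int_0^{\pi}|g_k(\lambda)|\,d\lambda<\infty$. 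By the Riemann--Lebesgue lemma this tends to $0$ as $t\to\infty$, so $v_k(t)\to0$ and hence $q_k(t)\to c$ for every $k$. The theorem follows with $\nu=c$.

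The one genuinely delicate ingredient is the first step — the representation $q_k(t)=c+v_k(t)$ itself. Because $q(0)$ need not lie in $l_2$, the Fourier transform $Q(\lambda,0)$ is a priori a distribution possibly carrying $\delta$- and $\delta'$-type mass at $\lambda=0$; one has to check that boundedness of $q(0)$ rules out the $\delta'$-part (which would force linear growth in $k$), that the vanishing initial velocity freezes the $\delta$-part to a single constant $c$, and therefore that $q-v$ is independent of both $k$ and $t$. This bookkeeping is already performed in the proof of Theorem~\ref{unBoundTh}; granting it, the content specific to Theorem~\ref{limitTheorem} is just the short oscillatory-integral argument above, whose only mild subtlety is the degenerate stationary point of $\psi$ at $\lambda=\pi$ — harmless here, since the change of variables still delivers an $L_1$ amplitude.
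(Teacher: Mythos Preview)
Your argument is correct and follows essentially the same route as the paper's: both use the integral representation \eqref{solqviaqvfol} together with the $l^\Delta$ decomposition \eqref{phiDef} to reduce the $t\to\infty$ limit to a Riemann--Lebesgue statement, and both rely on the identity $q_n(0)-\widetilde q_n(0)=c$ (which is actually established in the proof of Theorem~\ref{ldeltainfinity}, not of Theorem~\ref{unBoundTh}). The only noticeable difference is packaging: the paper keeps the form $q_n(t)=q_n(0)-D_n(t)-B_n(t)$ and computes the limits of $D_n,B_n$ separately, whereas you combine everything into $q_k(t)=c+v_k(t)$ up front and are more explicit about the change of variables $u=2\omega\sin(\lambda/2)$ needed to invoke Riemann--Lebesgue.
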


Number $\nu$ is also related to the limit of $q_{k}(0)$ when $k\to\infty$:
see theorem \ref{ldeltainfinity} below.

\subsection{Properties of the space $\boldsymbol{l^{\Delta}}$}

First, we give examples of sequences $q\in l^{\Delta}$. 

1. \ (Sign) Put 
\[
q_{k}=\mathrm{sign}(k)=\begin{cases}
1, & k>0,\\
0, & k=0,\\
-1, & k<0.
\end{cases}
\]
Obviously, $q_{k}^{\Delta}=0$ for $|k|>1$. Moreover, 
\[
q_{1}^{\Delta}=1,\quad q_{-1}^{\Delta}=-1,\quad q_{0}^{\Delta}=0.
\]
That is why 
\[
Q^{\Delta}(\lambda)=(e^{i\lambda}-e^{-i\lambda})=2i\sin(\lambda).
\]
Put $A=4$ in (\ref{phiDef}). Then 
\[
\phi(\lambda)=\frac{1}{\sin \lambda / 2}\Bigl(\frac{2\sin(\lambda)}{\sin \lambda / 2}-4\Bigr)=\frac{4}{\sin \lambda / 2}\Bigl(\cos \frac{\lambda}{ 2}-1\Bigr).
\]
It is clear that $\phi(\lambda)\in L_{1}[0,\pi]$. Then $\mathrm{sign}(k)\in l^{\Delta}$.
Below we give the graph of the solution with $\omega= 1/2$ and
initial condition $q_{k}(0)=\mathrm{sign}(k),\ p(0)=0$.

\begin{figure}[!hbtp]
  \hspace*{-20mm}\includegraphics[scale=0.33]{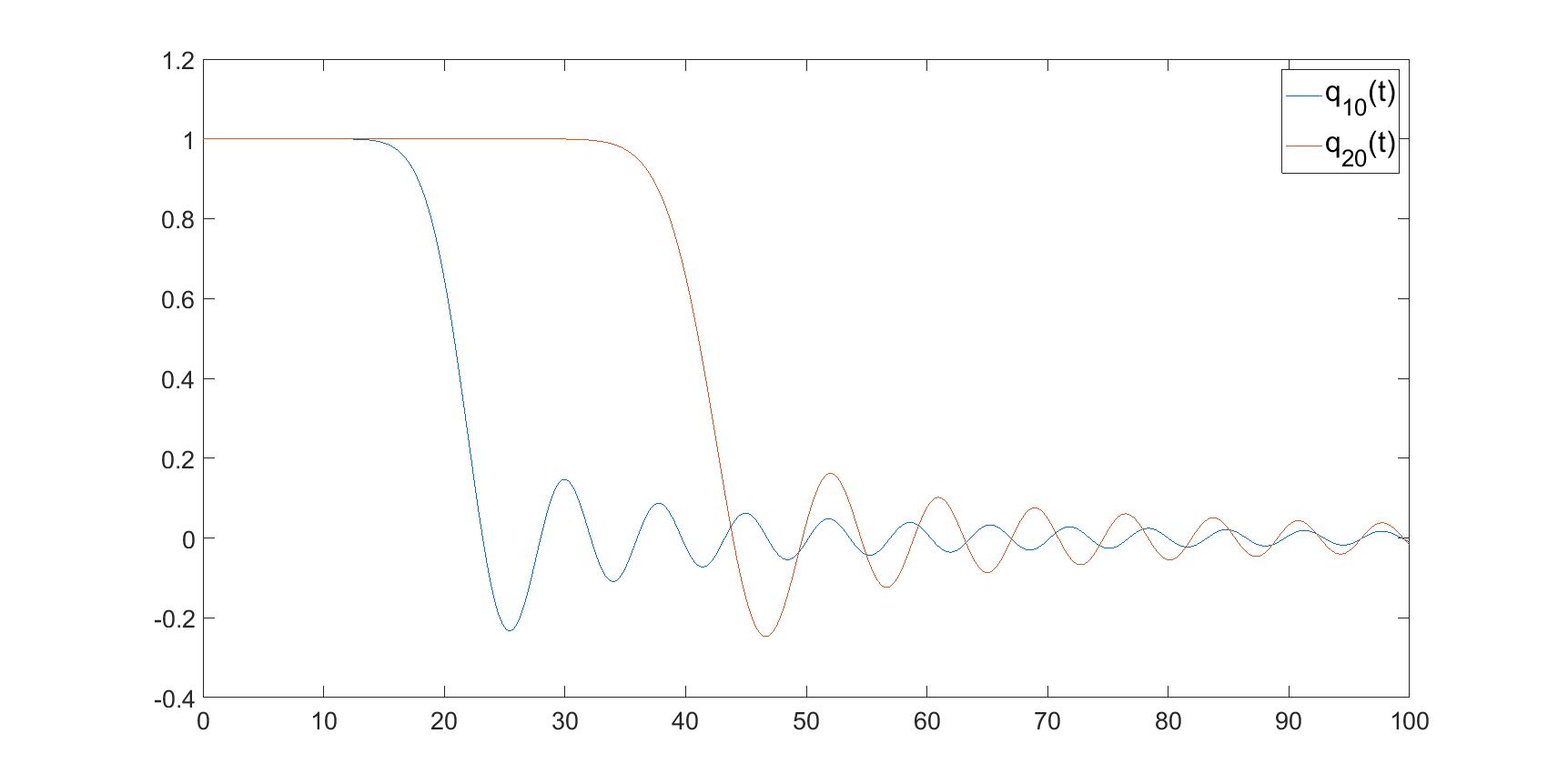}
\end{figure}
  
Both particles with numbers $10$ and $20$, up to the time of order
$t\ll 2n$ oscillate around point $1$ with exponentially small amplitude.
These oscillations are even not seen on the graph. Then they ``quickly''
enter the regime of damped oscillations around the equilibrium position.
In our case the solution is given by formula (3.2) of \cite{LM_1}:
\begin{equation}
q_{n}(t)=J_{0}(t)+2\sum_{k=1}^{n-1}J_{2k}(t)+J_{2n}(t)=1+J_{2n}(t)-2\sum_{k=2n}^{\infty}J_{2k}(t),\ n\geqslant1,\label{q_n_t}
\end{equation}
where 
\[
J_{n}(t)=\frac{1}{\pi}\int_{0}^{\pi}\cos(nx-t\sin x)dx,\ t\geqslant0
\]
is the Bessel function of the first kind. In (\ref{q_n_t}) we used
the known formula (see \cite{Watson}): 
\[
2\sum_{k=1}^{\infty}J_{2k}(t)+J_{0}(t)=1.
\]

\medskip

2. \ Now consider example which is in some sense opposite to the previous one:
\[
q_{k}=\begin{cases}
1, & k\ne0,\\
b, & k=0,
\end{cases}
\]
for some $b\in\mathbb{R}$. Then 
\begin{align*}
  Q^{\Delta}(\lambda)&=e^{i\lambda}(2-b-1)+2b-2+e^{-i\lambda}(2-b-1)\\
  &=2(b-1)(1-\cos\lambda)=4(b-1)\sin^{2}\frac{\lambda}{2}.
\end{align*}
Put $A=0$ in (\ref{phiDef}). Then 
\[
\phi(\lambda)=4(b-1).
\]
Again we see that $\phi(\lambda)\in L_{1}[0,\pi]$ and thus $q_{k}\in l^{\Delta}$.
Uniform boundedness for this case could be proven differently. Namely,
we have the following presentation: 
\[
q=g+(b-1)e_{0},
\]
where the sequence $g$ consists of one's only, $e_{0}$ contains
only zeroes except the zeroth component which is equal to $1$. Then $\Delta g=0$,
and hence the solution can be written as 
\[
q(t)=g+(b-1)\tilde{q}(t),
\]
where $\tilde{q}(t)$ is the solution with initial condition $e_{0}\in l_{2}$,
that is evidently uniformly bounded.

\medskip

3. \  Now consider the following sequence 
\[
q_{k}=(-1)^{k}.
\]
Then 
\[
(\Delta q)_{k}=(-1)^{k}(-1-1-2)=-4q_{k}
\]
and $q\notin l^{\Delta}$. Nevertheless one can prove uniform boundedness
with such initial condition. It is known that 
\[
q(t)=\cos(t\sqrt{V})q(0),\quad V=-\omega_{1}^{2}\Delta
\]
(see \cite{LM_1}, lemma 3.3). 
It follows that 
\[
q(t)=\sum_{k=0}^{\infty}(-1)^{k}\frac{t^{2k}V^{k}}{(2k)!}q=\sum_{k=0}^{\infty}(-1)^{k}\frac{(4\omega_{1}^{2})^{k}t^{2k}}{(2k)!}q=\cos(2\omega_{1}t)q.
\]
The uniform boundedness of $q(t)$ follows. 

\begin{theorem} \label{sufficientCondld} Assume that 
\begin{equation}
\sum_{k\ne0}|q_{k}^{\Delta}||k|\ln|k|<\infty . \label{qvCondforlv}
\end{equation}
Then $q\in l^{\Delta}$. \end{theorem}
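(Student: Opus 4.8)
The plan is to verify the two defining conditions of the space $l^{\Delta}$ directly, with the choice $A:=\sum_{k}kq_{k}^{\Delta}$.

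First I would record the consequences of the hypothesis (\ref{qvCondforlv}). Since $|k|\ln|k|\to\infty$ as $|k|\to\infty$, (\ref{qvCondforlv}) already forces $\sum_{k}|q_{k}^{\Delta}|<\infty$, so that $q^{\Delta}\in l_{1}(\mathbb{Z})\subset l_{2}(\mathbb{Z})$ — this is condition~1 in the definition of $l^{\Delta}$ — and $Q^{\Delta}(\lambda)=\sum_{k}e^{ik\lambda}q_{k}^{\Delta}$ is then a continuous function defined by an absolutely and uniformly convergent series. Comparing with (\ref{qvCondforlv}) for $|k|\geqslant3$ and noting that the finitely many terms with $|k|\leqslant2$ are bounded (because $q\in l_{\infty}$), (\ref{qvCondforlv}) also yields $\sum_{k}|k||q_{k}^{\Delta}|<\infty$, so $A$ is well defined. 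Finally I would check that $Q^{\Delta}(0)=\sum_{k}q_{k}^{\Delta}=0$: with $\delta_{k}=q_{k}-q_{k-1}$ one has $q_{k}^{\Delta}=\delta_{k}-\delta_{k+1}$, so $\sum_{|k|\leqslant N}q_{k}^{\Delta}=\delta_{-N}-\delta_{N+1}$; since $q^{\Delta}\in l_{1}$ the increments $\delta_{k}$ converge as $k\to\pm\infty$, and both limits must vanish, for otherwise $q_{k}=q_{0}+\sum_{j=1}^{k}\delta_{j}$ (and the analogous expression as $k\to-\infty$) would be unbounded, contradicting $q\in l_{\infty}$.

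With these facts in hand the remaining condition becomes an $L_{1}$-estimate. Using $\sum_{k}q_{k}^{\Delta}=0$ and $\sum_{k}kq_{k}^{\Delta}=A$ (both absolutely convergent), one rewrites, for $\lambda\in(0,\pi]$,
\[
h(\lambda)=\frac{1}{\lambda}\Bigl(\frac{Q^{\Delta}(\lambda)}{\lambda}-iA\Bigr)=\frac{1}{\lambda^{2}}\sum_{k}\bigl(e^{ik\lambda}-1-ik\lambda\bigr)q_{k}^{\Delta},
\]
so it remains to bound $\int_{0}^{\pi}|h(\lambda)|\,d\lambda$. From the elementary inequalities $|e^{ix}-1-ix|\leqslant\frac{1}{2}x^{2}$ and $|e^{ix}-1-ix|\leqslant2|x|$ one gets $|e^{ik\lambda}-1-ik\lambda|\leqslant\min\bigl(\frac{1}{2}k^{2}\lambda^{2},\,2|k|\lambda\bigr)$, and splitting each series at the threshold $|k|=1/\lambda$ gives
\[
|h(\lambda)|\leqslant\frac{1}{2}\sum_{1\leqslant|k|\leqslant1/\lambda}k^{2}|q_{k}^{\Delta}|+\frac{2}{\lambda}\sum_{|k|>1/\lambda}|k||q_{k}^{\Delta}|.
\]
Integrating over $(0,\pi]$ and using Tonelli's theorem, the first term contributes at most $\frac{1}{2}\sum_{|k|\geqslant1}k^{2}|q_{k}^{\Delta}|\cdot|k|^{-1}=\frac{1}{2}\sum_{|k|\geqslant1}|k||q_{k}^{\Delta}|<\infty$, while the second contributes at most
\[
2\sum_{|k|\geqslant1}|k||q_{k}^{\Delta}|\int_{1/|k|}^{\pi}\frac{d\lambda}{\lambda}=2\sum_{|k|\geqslant1}|k||q_{k}^{\Delta}|\bigl(\ln|k|+\ln\pi\bigr),
\]
which is finite precisely by (\ref{qvCondforlv}), the terms with $|k|\leqslant2$ (where $\ln|k|$ vanishes or is small) being harmless. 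Hence $h\in L_{1}[0,\pi]$, condition~2 holds, and $q\in l^{\Delta}$.

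The computation is essentially routine; the point I expect to matter most is seeing why the weight in (\ref{qvCondforlv}) is $|k|\ln|k|$ and not merely $|k|$: the two-regime estimate of the Taylor remainder $e^{ik\lambda}-1-ik\lambda$ inevitably loses a logarithm through $\int_{1/|k|}^{\pi}d\lambda/\lambda=\ln(\pi|k|)$, and the extra $\ln|k|$ factor in the hypothesis is exactly what absorbs it. A secondary technical point, easy but worth stating carefully, is the verification that $Q^{\Delta}(0)=0$, which is the one place where the ambient assumption $q\in l_{\infty}$ is used.
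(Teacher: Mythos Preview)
Your proof is correct, and it takes a cleaner route than the paper's. The paper works with the equivalent formulation $\phi(\lambda)$ based on $\sin(\lambda/2)$, splits it into real and imaginary parts $\phi^{+},\phi^{-}$, and then bounds each separately: for $\phi^{+}$ it uses $\int_{0}^{\pi}\frac{1-\cos k\lambda}{\sin^{2}(\lambda/2)}\,d\lambda\leqslant\pi^{2}|k|$, and for $\phi^{-}$ it invokes a dedicated Lemma~\ref{vnineqlemma} showing $\int_{0}^{\pi}\bigl|\frac{\sin(n\lambda)-2n\sin(\lambda/2)}{\sin^{2}(\lambda/2)}\bigr|\,d\lambda\leqslant cn\ln n$, whose proof in turn splits the integration interval at $\lambda\sim 1/n$. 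You bypass the real/imaginary decomposition and the auxiliary lemma entirely by working directly with the complex Taylor remainder $e^{ik\lambda}-1-ik\lambda$ and its two-regime bound $\min(\tfrac{1}{2}k^{2}\lambda^{2},2|k|\lambda)$; the split at $|k|=1/\lambda$ followed by Tonelli does in one stroke what the paper achieves in two separate estimates. What the paper's route buys is a slightly more explicit accounting (constants, the two integrals treated individually), while your route buys brevity and makes transparent exactly where the $\ln|k|$ weight enters, through $\int_{1/|k|}^{\pi}d\lambda/\lambda$. The preliminary verification that $Q^{\Delta}(0)=0$ is identical in both arguments.
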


For example, consider the sequence with 
\[
q_{k}=\frac{\sin(\ln\ln|k|)}{\ln^{2}(|k|)},\ \mbox{for}\ |k|>1
\]
and $q_{k}=0$ for $|k|\leqslant1$. It is easy to see that 
\[
q_{k}^{\Delta}=O\left(\frac{1}{k^{2}\ln^{3}|k|}\right).
\]
Then the conditions of theorem \ref{sufficientCondld} hold, and thus
$q\in l^{\Delta}$.

Generalizing the previous example we put 
\[
q_{k}=f(k)
\]
for some $C^{2}$-smooth and bounded function $f(x),\ x\in\mathbb{R}$.
Then simple arguments show the convergence of the integral 
\begin{equation}
\int_{-\infty}^{+\infty}|f''(x)| \, |x|\ln(1+|x)\ dx<\infty \label{202001261810}
\end{equation}
implies (\ref{qvCondforlv}) and $q\in l^{\Delta}$. Indeed, by Lagrange
theorem we can write 
\[
q_{k}^{\Delta}=f'(x_{k+1})-f'(x_{k})
\]
for some points $x_{k}\in(k-1,k)$. Putting $h(x)=|x|\ln(1+|x|)$,
we get the inequalities: 
\begin{align*}
  \sum_{k\ne0}|q_{k}^{\Delta}| \, |k|\, \ln|k|& \leqslant\sum_{k}|f'(x_{k+1})-f'(x_{k})|h(x_{k+1})\\
  &\leqslant\sum_{k}h(x_{k+1})\int_{x_{k}}^{x_{k+1}}|f''(x)|\ dx 
\leqslant2\int_{-\infty}^{+\infty}|f''(x)|h(x)dx.
\end{align*}
The latter inequality follows from the mean value theorem, because
for some point $u_{k}\in(x_{k},x_{k+1})$ the following equality holds:
\[
\int_{x_{k}}^{x_{k+1}}|f''(x)|h(x)\ dx=h(u_{k})\int_{x_{k}}^{x_{k+1}}|f''(x)|\ dx\geqslant\frac{1}{2}h(x_{k+1})\int_{x_{k}}^{x_{k+1}}|f''(x)|\ dx.
\]
Thereby we have proven that (\ref{qvCondforlv}) follows from (\ref{202001261810}).

\begin{theorem}[Limits at infinity of $\boldsymbol{l^\Delta}$] \label{ldeltainfinity}
Assume that $q\in l^{\Delta}$. Then the following finite limits exist
$
\lim_{k\rightarrow+\infty}q_{k}=L_{+},$ $\lim_{k\rightarrow-\infty}q_{k}=L_{-}
$
and moreover the following equalities hold: 
\begin{align}
L_{+}-L_{-}= & \ \frac{A}{2},\label{lplusminulimitminus}\\
\frac{L_{+}+L_{-}}{2}= & \ \nu,\label{lplusminulimitplus}
\end{align}
where number $A$ was defined in (\ref{phiDef}), and $\nu$ was introduced
in theorem \ref{limitTheorem}.
\end{theorem}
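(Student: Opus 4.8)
The plan is to split off the ``jump at infinity'' of $q$ into the explicit sequence $\mathrm{sign}(k)$ of Example~1, whose transform has exactly the singularity at $\lambda=0$ encoded by the constant $A$. Put
\[
r:=q-\tfrac{A}{4}\,\mathrm{sign},\qquad r^{\Delta}=q^{\Delta}-\tfrac{A}{4}\,\mathrm{sign}^{\Delta}.
\]
Since $\mathrm{sign}^{\Delta}$ has transform $2i\sin\lambda$, one gets $R^{\Delta}(\lambda)=Q^{\Delta}(\lambda)-\tfrac{iA}{2}\sin\lambda$, and using \eqref{phiDef} in the form $Q^{\Delta}(\lambda)/\sin^{2}(\lambda/2)=\phi(\lambda)+iA/\sin(\lambda/2)$ together with $2i\sin\lambda/\sin^{2}(\lambda/2)=4i\cot(\lambda/2)$ a short computation gives
\[
\frac{R^{\Delta}(\lambda)}{\sin^{2}(\lambda/2)}=\phi(\lambda)+iA\tan\frac{\lambda}{4}.
\]
As $\tan(\lambda/4)$ is bounded on $[0,\pi]$, the right side lies in $L_{1}[0,\pi]$; and since $R^{\Delta}$ is the transform of a real sequence, $R^{\Delta}(2\pi-\lambda)=\overline{R^{\Delta}(\lambda)}$, while $\sin^{2}(\lambda/2)$ is symmetric about $\lambda=\pi$, so $G(\lambda):=R^{\Delta}(\lambda)/(4\sin^{2}(\lambda/2))$ belongs to $L_{1}[0,2\pi]$. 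This is the main step; the only delicate point is integrability of $G$ near $\lambda=2\pi$, which is precisely what the conjugate symmetry resolves. (Equivalently: $r$ is a ``$A=0$'' element of $l^{\Delta}$.)

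Next I would recover $r$ itself. Set $\tilde r_{k}:=\frac{1}{2\pi}\int_{0}^{2\pi}G(\lambda)e^{-ik\lambda}\,d\lambda$. Then $\tilde r$ is bounded, $|\tilde r_{k}|\le\frac{1}{2\pi}\|G\|_{L_{1}}$, it is real (again by the symmetry of $G$), and $(-\Delta\tilde r)_{k}=\frac{1}{2\pi}\int_{0}^{2\pi}4\sin^{2}(\lambda/2)G(\lambda)e^{-ik\lambda}\,d\lambda=\frac{1}{2\pi}\int_{0}^{2\pi}R^{\Delta}(\lambda)e^{-ik\lambda}\,d\lambda=r^{\Delta}_{k}$, since $R^{\Delta}\in L_{2}$ is the Fourier transform of $r^{\Delta}\in l_{2}$. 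Hence $r-\tilde r$ is a bounded solution of $-\Delta u=0$ on $\mathbb{Z}$, i.e.\ a constant $b$. By the Riemann--Lebesgue lemma $\tilde r_{k}\to0$ as $|k|\to\infty$, so $r_{k}\to b$; therefore $q_{k}=\tfrac{A}{4}\mathrm{sign}(k)+r_{k}$ has the finite limits $L_{+}=\tfrac{A}{4}+b$ and $L_{-}=-\tfrac{A}{4}+b$. This gives existence of $L_{\pm}$, the identity \eqref{lplusminulimitminus} $L_{+}-L_{-}=A/2$, and $\tfrac12(L_{+}+L_{-})=b$.

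It remains to identify $b$ with the $\nu$ of Theorem~\ref{limitTheorem}. Using $q(t)=\cos(t\sqrt V)q(0)$ I would write $q(t)=\tfrac{A}{4}\cos(t\sqrt V)\mathrm{sign}+\cos(t\sqrt V)r$. For the first term, formula \eqref{q_n_t} together with $2\sum_{k\ge1}J_{2k}(t)+J_{0}(t)=1$ gives, for each fixed $n\ge1$, $q_{n}(t)=1+J_{2n}(t)-2\sum_{k=2n}^{\infty}J_{2k}(t)\to1-2\cdot\tfrac12=0$ (and $\to0$ for all $n$ by the odd symmetry of the solution), so $\cos(t\sqrt V)\mathrm{sign}\to0$ componentwise. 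For the second term, constants are fixed by $\cos(t\sqrt V)$, so it suffices to show $\bigl(\cos(t\sqrt V)(r-b)\bigr)_{k}\to0$; writing this via the Fourier symbol $\cos(t\sigma(\lambda))$, $\sigma(\lambda)=2\omega\sin(\lambda/2)$, as $\frac{1}{2\pi}\int_{0}^{2\pi}\cos(t\sigma(\lambda))G(\lambda)e^{-ik\lambda}\,d\lambda$, it follows from a standard oscillatory-integral argument: Riemann--Lebesgue away from the single stationary point $\lambda=\pi$ of $\sigma$, plus absolute continuity of $\int|G|$ to absorb a shrinking neighbourhood of $\pi$, which needs only $G\in L_{1}$. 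Hence $q_{k}(t)\to b$, so $\nu=b=\tfrac12(L_{+}+L_{-})$, which is \eqref{lplusminulimitplus}. The two places that need care are the $\lambda=2\pi$ integrability in the first step and making this dispersion estimate rigorous with a merely integrable density; both are routine but must be handled with attention because $q(0)$ itself is not in $l_{2}$.
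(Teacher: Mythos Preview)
Your argument is correct and runs parallel to the paper's, with one organizational twist worth noting. The paper splits $Q^{\Delta}$ into its even and odd parts on $[0,\pi]$, obtaining $\phi^{+}=Q_{+}^{\Delta}/\sin^{2}(\lambda/2)$ and $\phi^{-}=\bigl(Q_{-}^{\Delta}/\sin(\lambda/2)-A\bigr)/\sin(\lambda/2)$, then defines a comparison sequence $\widetilde q_{n}$ directly from the three integrals $\int\phi^{+}\cos n\lambda$, $\int\phi^{-}\sin n\lambda$, and the Dirichlet-type integral $\int_{0}^{\pi}\sin(n\lambda)/\sin(\lambda/2)\,d\lambda$; the last of these carries the jump $\pm A/4$. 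You instead peel off the jump concretely by subtracting $(A/4)\,\mathrm{sign}$ and observing that the remainder $r$ lies in $l^{\Delta}$ with constant $A_{r}=0$, so that $G=R^{\Delta}/(4\sin^{2}(\lambda/2))\in L_{1}[0,2\pi]$ outright. From there the two proofs are identical in substance: build a bounded sequence with the same discrete Laplacian, invoke that bounded discrete harmonic functions on $\mathbb Z$ are constant, and apply a Riemann--Lebesgue argument for the limits. Your packaging has the mild advantage that the $A=0$ reduction lets you work with a single $L_{1}$ density on the full period, avoiding the even/odd bookkeeping and the tabulated value of the Dirichlet integral; the paper's version keeps everything on $[0,\pi]$ and uses that tabulated value in its place.

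For equation~\eqref{lplusminulimitplus} the paper, like you, defers to the dynamics: it shows $\lim_{t\to\infty}q_{n}(t)=q_{n}(0)-\widetilde q_{n}(0)=c$ via the solution formula~\eqref{solqviaqvfol} and the same oscillatory limit, then reads off $c=(L_{+}+L_{-})/2$. Your route through $(A/4)\cos(t\sqrt V)\,\mathrm{sign}\to0$ (using~\eqref{q_n_t}) and $\cos(t\sqrt V)\,\tilde r\to0$ is the same computation, split along your decomposition rather than the paper's. The two points you flag as needing care --- integrability of $G$ near $\lambda=2\pi$, and the Riemann--Lebesgue step for the phase $\sigma(\lambda)=2\omega\sin(\lambda/2)$ with its stationary point at $\pi$ --- are handled exactly as you indicate (the first by the conjugate symmetry $R^{\Delta}(2\pi-\lambda)=\overline{R^{\Delta}(\lambda)}$, the second by the monotone substitution $\mu=\sin(\lambda/2)$ away from $\pi$ together with absolute continuity of $\int|G|$), and the paper treats them at the same level of detail, simply writing ``by the Riemann--Lebesgue theorem''.
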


\section{Proof of Theorem \ref{ldeltainfinity}}

Consider the formal Fourier transform of $q_{k}$ 
\[
Q(\lambda)=\sum_{k}e^{ik\lambda}q_{k}
\]
and formal equalities following from it: 
\begin{align}
  Q^{\Delta}(\lambda)&=\sum_{k}e^{ik\lambda}q_{k}^{\Delta}=\sum_{k}e^{ik\lambda}(2q_{k}-q_{k-1}-q_{k+1}) \nonumber \\
  &=(2-2\cos\lambda)Q(\lambda)=4\sin^{2}\left(\frac{\lambda}{2}\right)Q(\lambda), \nonumber
\\
  q_{n}&=\frac{1}{2\pi}\int_{0}^{2\pi}e^{-in\lambda}Q(\lambda)d\lambda=\frac{1}{2\pi}\int_{0}^{2\pi}e^{-in\lambda}
         \frac{1}{4\sin^{2}(\lambda / 2 )}Q^{\Delta}(\lambda)d\lambda.\label{qkformalformula}
\end{align}
Then $Q^{\Delta}$ can be written as the sum of even and odd functions:
\begin{equation}
Q^{\Delta}(\lambda)=Q_{+}^{\Delta}(\lambda)+iQ_{-}^{\Delta}(\lambda),\label{QDeltaSymAsymF}
\end{equation}
where 
\[
Q_{+}^{\Delta}(\lambda)=q_{0}^{\Delta}+\sum_{k=1}^{\infty}q_{k}^{\Delta,+}\cos k\lambda,\quad Q_{-}^{\Delta}(\lambda)=\sum_{k=1}^{\infty}q_{k}^{\Delta,-}\sin k\lambda,\quad q_{k}^{\Delta,\pm}=q_{k}^{\Delta}\pm q_{-k}^{\Delta}.
\]
It is clear that 
\[
Q_{+}^{\Delta}(\lambda)=Q_{+}^{\Delta}(2\pi-\lambda),\quad Q_{-}^{\Delta}(\lambda)=-Q_{-}^{\Delta}(2\pi-\lambda).
\]
Substitute (\ref{QDeltaSymAsymF}) to the formula (\ref{qkformalformula})
for $q_{k}$ and note that $q_{k}$ is real. Again formally we get:
\[
  q_{n}=\frac{1}{2\pi}\int_{0}^{2\pi}(Q_{+}^{\Delta}(\lambda)\cos(n\lambda)+Q_{-}^{\Delta}(\lambda)\sin(n\lambda))
  \frac{1}{4\sin^{2}(\lambda / 2 )}\, d\lambda.
\]
Now using the symmetry of the integrand with respect to point $2\pi$,
we can rewrite the last formula as: 
\[
  q_{n}=\frac{1}{\pi}\int_{0}^{\pi}(Q_{+}^{\Delta}(\lambda)\cos(n\lambda)+Q_{-}^{\Delta}(\lambda)\sin(n\lambda))
  \frac{1}{4\sin^{2}(\lambda / 2)}\, d\lambda.
\]
It is not difficult to see that the function $\phi$, defined in the
second condition (\ref{phiDef}) in the definition of $l^{\Delta}$,
belongs to $L_{1}([0,\pi])$ if and only  if 
\[
  \phi^{+}(\lambda)=\frac{Q_{+}^{\Delta}(\lambda)}{\sin^{2}\frac{\lambda}{2}}\in L_{1}([0,\pi])\quad \mbox{and }\
  \phi^{-}(\lambda)=\frac{1}{\sin\frac{\lambda}{2}}\Bigl(\frac{Q_{-}^{\Delta}(\lambda)}{\sin\frac{\lambda}{2}}-A\Bigr)\in L_{1}([0,\pi]).
\]
Using this observation, we can rewrite the latter formal formula for $q_{k}$
in terms of $\phi^{+}(\lambda),\phi^{-}(\lambda)$: 
\begin{equation}
q_{n}=\frac{1}{4\pi}\int_{0}^{\pi}\phi^{+}(\lambda)\cos(n\lambda)d\lambda+\frac{1}{4\pi}\int_{0}^{\pi}\phi^{-}(\lambda)\sin(n\lambda)d\lambda+\frac{A}{4\pi}\int_{0}^{\pi}\frac{\sin(n\lambda)}{\sin(\lambda / 2)}\, d\lambda.\label{qkregdecompose}
\end{equation}
The last integral in this formula is known (see \cite{GR}, p.\thinspace 605, 3.612
(4)): 
\[
  \int_{0}^{\pi}\frac{\sin(n\lambda)}{\sin\lambda / 2}\, d\lambda=2\int_{0}^{\pi / 2}\frac{\sin(2nx)}{\sin x}dx=4\sum_{k=0}^{n-1}
  \frac{(-1)^{k}}{2k+1}=\pi+\bar{\bar{o}}(1),\ \mbox{as}\ n\rightarrow\infty.
\]
That is why the right-hand side of  (\ref{qkregdecompose})
defines a sequence in $l_{\infty}$. Denote it by $\widetilde{q}$:
\begin{equation}
\widetilde{q}_{n}=\frac{1}{4\pi}\int_{0}^{\pi}\phi^{+}(\lambda)\cos(n\lambda)d\lambda+\frac{1}{4\pi}\int_{0}^{\pi}\phi^{-}(\lambda)\sin(n\lambda)d\lambda+\frac{A}{4\pi}\int_{0}^{\pi}\frac{\sin(n\lambda)}{\sin(\lambda / 2)}\, d\lambda.\label{tildeqndef}
\end{equation}
It is easy to see that $\Delta\widetilde{q}=\Delta q$. Since $\widetilde{q},q\in l_{\infty}$
and $\Delta(\widetilde{q}-q)=0$,  there exists a constant $c$
such that 
\begin{equation}
q_{n}=c+\widetilde{q}_{n}\label{qntildeqnrel}
\end{equation}
for any $n\in\mathbb{Z}$. We want now to find the limit of $\widetilde{q}_{n}$
for large $|n|$. Since $\phi^{+}(\lambda),\phi^{-}(\lambda)\in L_{1}([0,\pi])$,
 by the Riemann\tire Lebesgue theorem the integrals in (\ref{tildeqndef})
containing these functions tend to zero as $|n|$ grows. Thus, 
\[
\lim_{n\rightarrow+\infty}\widetilde{q}_{n}=\frac{A}{4},\quad\lim_{n\rightarrow-\infty}\widetilde{q}_{n}=-\frac{A}{4}.
\]
From these equalities and formula (\ref{qntildeqnrel}) the theorem
follows. Formula (\ref{lplusminulimitplus}) will be proved during
the proof of Theorem \ref{limitTheorem}.

\section{Proof of Theorem \ref{sufficientCondld}}

Note first that condition (\ref{qvCondforlv}) implies that $Q^{\Delta}(\lambda)$
is $C^{1}$-smooth function, and the following equality holds: 
\begin{equation}
Q^{\Delta}(0)=0.\label{qdeltazeroeq}
\end{equation}
Smoothness follows from classical theorems for Fourier series. Let
us prove now (\ref{qdeltazeroeq}). By definition we have: 
\begin{equation}
Q^{\Delta}(0)=\sum_{k}q_{k}^{\Delta}.\label{qdeltazero}
\end{equation}
Moreover, the latter series is convergent. Write $q_{k}^{\Delta}$
as 
\[
q_{k}^{\Delta}=-(\delta_{k+1}-\delta_{k}),\ \quad\delta_{k}=q_{k}-q_{k-1}.
\]
Then from equality (\ref{qdeltazero}) it follows that 
\[
Q^{\Delta}(0)=\lim_{N\rightarrow+\infty}\sum_{|k|\leqslant N}q_{k}^{\Delta}=-\lim_{N\rightarrow+\infty}(\delta_{N}-\delta_{-N}).
\]
Let us show that $\delta_{N}\rightarrow0$ as $N\rightarrow\pm\infty$.
For this write $Q^{\Delta}(0)$ as follows: 
\[
Q^{\Delta}(0)=\sum_{k\geqslant0}q_{k}^{\Delta}+\sum_{k<0}q_{k}^{\Delta}.
\]
From absolute convergence in (\ref{qdeltazero}) it follows that in
the last formula both series converge. It is not difficult to see
that $\delta_{n}-\delta_{n+k},\ n,k\geqslant0,$ is the difference
of partial sums of the first series. Thus, by Cauchy principle, there
exists a finite limit 
\[
c=\lim_{n\rightarrow+\infty}\delta_{n}.
\]
Since for all $n,N\geqslant0$ we have 
\[
q_{N+n}=q_{N}+\sum_{k=N+1}^{n+N}\delta_{k}
\]
and $\sup_{n}|q_{n}|<\infty$, it follows that $c=0$. Similarly one can consider
the case $\lim_{n\rightarrow-\infty}\delta_{n}$. The equality (\ref{qdeltazeroeq})
is proved.

For the derivative of $Q^{\Delta}(\lambda)$ at zero we have: 
\[
\frac{d}{d\lambda}Q^{\Delta}(0)=i\sum_{k}kq_{k}^{\Delta}.
\]
In the formula (\ref{phiDef}), where the function $\phi(\lambda)$
was defined, put 
\[
A=2\frac{1}{i}\frac{d}{d\lambda}Q^{\Delta}(0)=2\sum_{k}kq_{k}^{\Delta}.
\]
Then we get equalities: 
\begin{align*}
  \phi(\lambda)&=\frac{1}{\sin(\lambda / 2)}\left(\frac{Q^{\Delta}(\lambda)}{\sin(\lambda / 2)}-iA\right)=
                 \frac{1}{\sin^{2} (\lambda / 2) }\sum_{k}q_{k}^{\Delta}\Bigl(e^{ik\lambda}-2ik\sin(\lambda / 2) \Bigr)\\
  &=\phi^{+}(\lambda)+i\phi^{-}(\lambda),
\end{align*}
where 
\[
\phi^{+}(\lambda)=\frac{1}{\sin^{2}(\lambda / 2)}\sum_{k}q_{k}^{\Delta}\cos(k\lambda),
\]
\[
\phi^{-}(\lambda)=\frac{1}{\sin^{2} (\lambda / 2)} \sum_{k}q_{k}^{\Delta}\Bigl(\sin(k\lambda)-2k\sin\frac{\lambda}{2}\Bigr).
\]
Since 
\[
|\phi^{-}(\lambda)|\leqslant\sum_{k}|q_{k}^{\Delta}|\left|\frac{\sin(k\lambda)-2k\sin\frac{\lambda}{2}}{\sin^{2}\frac{\lambda}{2}}\right|,
\]
from lemma \ref{vnineqlemma}, condition (\ref{qvCondforlv})
and Fatou theorem it follows that $\phi^{-}(\lambda)\in L_{1}([0,\pi])$.

Moreover, as $Q^{\Delta}(0)=0$, we have
\[
q_{0}^{\Delta}=-\sum_{k\ne0}q_{k}^{\Delta}.
\]
It follows that 
\[
\phi^{+}(\lambda)=\frac{1}{\sin^{2}(\lambda / 2)}\sum_{k}q_{k}^{\Delta}(\cos(k\lambda)-1).
\]
For all $x\in[0,\pi / 2]$ we have $\sin x\geqslant  2x / \pi$,
hence 
\begin{align*}
\int_{0}^{\pi}\frac{1-\cos k\lambda}{\sin^{2}(\lambda / 2)}\, d\lambda&=2\int_{0}^{\pi / 2}\frac{1-\cos2kx}{\sin^{2}x}dx\\
&\leqslant\pi\int_{0}^{\pi / 2} \frac{1-\cos2kx}{x^{2}}\, dx\leqslant\pi\int_{0}^{+\infty}\frac{1-\cos2kx}{x^{2}}dx=\pi^{2}|k|.
\end{align*}
We took an explicit formula for the latter integral  from \cite{GR},
p.\thinspace 691, 3.782 (2). Finally, from condition (\ref{qvCondforlv}) and
the Fatou theorem we get that $\phi^{+}(\lambda)\in L_{1}([0,\pi])$.
Theorem \ref{sufficientCondld} is proven.

\begin{lemma} \label{vnineqlemma} There exists a constant $c>0$ such
that for all $n>0$ we have the inequality 
\[
V_{n}=\int_{0}^{\pi}\Big| \frac{\sin(n\lambda)-2n\sin(\lambda / 2)}{\sin^{2} (\lambda / 2)}\Big| d\lambda\leqslant cn\ln n.
\]
\end{lemma}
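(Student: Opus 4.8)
The plan is to split the integral at the scale $\lambda \sim 1/n$, since the numerator $\sin(n\lambda) - 2n\sin(\lambda/2)$ behaves quite differently in the two regimes. First I would record elementary bounds: for $\lambda\in[0,\pi]$ one has $\sin(\lambda/2)\ge \lambda/\pi$ (by concavity), so $\sin^2(\lambda/2)\ge \lambda^2/\pi^2$, and this replaces the awkward denominator by $\lambda^2$ up to a constant. For the numerator, on the small-$\lambda$ part I would use the Taylor/mean-value estimates $|\sin(n\lambda)-n\lambda|\le (n\lambda)^3/6$ and $|2n\sin(\lambda/2)-n\lambda|\le n\lambda^3/24$, so that $|\sin(n\lambda)-2n\sin(\lambda/2)| \le C n^3\lambda^3$ for $\lambda\le 1/n$. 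Hence
\[
\int_0^{1/n}\Big|\frac{\sin(n\lambda)-2n\sin(\lambda/2)}{\sin^2(\lambda/2)}\Big|\,d\lambda \le \pi^2 C n^3\int_0^{1/n}\lambda\,d\lambda = O(n),
\]
which is comfortably below $n\ln n$.

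For the main contribution, over $[1/n,\pi]$, I would split the numerator and bound the two terms separately. The second term gives
\[
\int_{1/n}^{\pi}\frac{2n\,\sin(\lambda/2)}{\sin^2(\lambda/2)}\,d\lambda = 2n\int_{1/n}^{\pi}\frac{d\lambda}{\sin(\lambda/2)} \le 2\pi n\int_{1/n}^{\pi}\frac{d\lambda}{\lambda} = 2\pi n\ln(\pi n) = O(n\ln n),
\]
which is exactly where the logarithmic factor enters and shows the bound is essentially sharp. The first term is easier: $|\sin(n\lambda)|\le 1$ gives
\[
\int_{1/n}^{\pi}\frac{|\sin(n\lambda)|}{\sin^2(\lambda/2)}\,d\lambda \le \pi^2\int_{1/n}^{\pi}\frac{d\lambda}{\lambda^2} = \pi^2\big(n - \tfrac{1}{\pi}\big) = O(n).
\]
Adding the three pieces yields $V_n \le c\, n\ln n$ for a suitable absolute constant $c$ and all $n\ge 2$ (the case of small $n$ being trivial since $V_n$ is a fixed finite number; one absorbs it into $c$ by enlarging the constant, noting $V_1$ is finite because the integrand is bounded near $0$).

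The only point requiring a little care is the behavior of the integrand near $\lambda=0$: there the denominator vanishes to second order, so one must check the numerator vanishes fast enough for integrability — this is precisely the cubic cancellation $\sin(n\lambda)-2n\sin(\lambda/2) = O(\lambda^3)$ used above, which comes from matching the linear terms $n\lambda$. Everything else is routine estimation with $\sin(\lambda/2)\gtrsim\lambda$; there is no genuine obstacle, and the $n\ln n$ rate is dictated by the $\int_{1/n}^\pi d\lambda/\lambda$ appearing in the $2n\sin(\lambda/2)$ term.
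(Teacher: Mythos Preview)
Your proof is correct and follows essentially the same route as the paper: replace $\sin^2(\lambda/2)$ by $c\lambda^2$ via the concavity bound, split the integral at the scale $\lambda\sim 1/n$, use the cubic vanishing of the numerator near $0$, and integrate crude bounds on the tail. The only cosmetic difference is that on the tail the paper bounds the whole numerator by $|h|\le 4n\lambda$ and integrates $4n/\lambda$, whereas you treat the two terms $\sin(n\lambda)$ and $2n\sin(\lambda/2)$ separately, getting $O(n)$ and $O(n\ln n)$ respectively; the arithmetic is the same, and your explicit handling of the case $n=1$ (where the stated bound $cn\ln n$ vanishes) is a small improvement in rigor over the paper.
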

\begin{proof}
  For any $x\in[0, \pi / 2]$ we have $\sin x\geqslant 2x / \pi $,
consequently, 
\begin{align}
    V_{n}&=2\int_{0}^{\pi / 2}\Bigl| \frac{\sin(2nx)-2n\sin x}{\sin^{2}x}\Bigr| dx \nonumber \\
    &\leqslant\frac{\pi^{2}}{2}\int_{0}^{\pi / 2}
         \Bigl| \frac{\sin(2nx)-2n\sin x}{x^{2}}\Bigr|dx   \label{Veintegie} \\
       &=\frac{\pi^{2}}{2}\biggl(\int_{0}^{\pi /(2n)}\Bigl| \frac{\sin(2nx)-2n\sin x}{x^{2}}\Bigr| dx+
       \int_{\pi / (2n)}^{\pi / 2}\Bigl| \frac{\sin(2nx)-2n\sin x}{x^{2}}\Bigr| dx\biggr). \nonumber 
\end{align}
Denote $h(x)=\sin(2nx)-2n\sin x$. The following equalities hold:
\[
h(0)=0,\ h'(0)=0,\quad h''(x)=-4n^{2}\sin(2nx)+2n\sin x.
\]
Then for all $x$ we have the inequality: 
\[
|h(x)|\leqslant3n^{2}x^{2}.
\]
Using this inequality, we can get the following bound for the integral
in (\ref{Veintegie}): 
\[
\int_{0}^{\pi /(2n)}\Bigl| \frac{\sin(2nx)-2n\sin x}{x^{2}}\Bigr| dx\leqslant3n^{2}\int_{0}^{\pi / (2n)}dx=\frac{3\pi}{2}n.
\]
To estimate the second integral note that $h(x)$ satisfies the following
inequality: 
\[
|h(x)|\leqslant4nx.
\]
Consequently, the second term in (\ref{Veintegie}) admits the 
bound: 
\[
  \int\limits_{\pi / (2n)}^{\pi /2} \Bigl| \frac{\sin(2nx)-2n\sin x}{x^{2}}\Bigr| dx \leqslant 4n\int\limits_{\pi / (2n)}^{\pi / 2} \frac{1}{x}dx=
  4n\Bigl(\ln\frac{\pi}{2}-\ln\frac{\pi}{2n}\Bigr)=4n\ln n.
\]
Thereby Lemma \ref{vnineqlemma} is proved.
\end{proof}

\section{Proof of Theorem \ref{unBoundTh} }

\begin{lemma} For any $n\in\mathbb{Z}$ and $t\geqslant0$ the following
equality holds true: 
\begin{equation}
  q_{n}(t)=q_{n}(0)-\frac{1}{2\pi}\int_{0}^{2\pi}\frac{1-\cos(2\omega t\sin (\lambda / 2))}{4\sin^{2}(\lambda / 2)}
  Q^{\Delta}(\lambda)e^{-in\lambda}d\lambda.\label{solqviaqvfol}
\end{equation}
\end{lemma}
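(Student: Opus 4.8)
The plan is to reduce the claim to a statement about the sequence $q^{\Delta}(t):=-\Delta q(t)$, which, unlike $q(t)$ itself, lies in $l_{2}(\mathbb{Z})$ and has an explicitly computable Fourier transform; formula (\ref{solqviaqvfol}) for $q_{n}(t)$ will then follow by integrating the equation of motion twice in $t$. I will use, as a known fact from \cite{LM_1}, that $q(t)=\cos(t\sqrt{V})\,q(0)$ with $V=-\omega^{2}\Delta$, understood through the norm-convergent series $\cos(t\sqrt{V})=\sum_{m\geqslant0}\frac{(-1)^{m}t^{2m}}{(2m)!}V^{m}$ of powers of the \emph{bounded} operator $V$; in particular $q(t)\in l_{\infty}$ for each fixed $t$, and $q_{n}(t)$ is real-analytic in $t$, so the equation $\ddot q_{n}(t)=\omega^{2}(\Delta q(t))_{n}=-\omega^{2}q^{\Delta}_{n}(t)$ holds classically.

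First I would record two facts about $q^{\Delta}(t)$. Since $\Delta$ commutes with every power $V^{m}$, it commutes with $\cos(t\sqrt{V})$, so
\[
q^{\Delta}(t)=-\Delta\cos(t\sqrt{V})q(0)=\cos(t\sqrt{V})\bigl(-\Delta q(0)\bigr)=\cos(t\sqrt{V})\,q^{\Delta}(0),
\]
and $q^{\Delta}(0)=q^{\Delta}\in l_{2}$ by the definition of $l^{\Delta}$. Under the Fourier transform $\Delta$ is multiplication by $-4\sin^{2}(\lambda/2)$, hence $V$ is multiplication by $4\omega^{2}\sin^{2}(\lambda/2)$ and, summing the series, $\cos(t\sqrt{V})$ is multiplication by $\cos\!\bigl(2\omega t\sin(\lambda/2)\bigr)$ on $[0,2\pi]$. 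Thus $q^{\Delta}(t)\in l_{2}$ with
\[
q^{\Delta}_{n}(t)=\frac{1}{2\pi}\int_{0}^{2\pi}e^{-in\lambda}\cos\!\bigl(2\omega t\sin(\lambda/2)\bigr)Q^{\Delta}(\lambda)\,d\lambda ,\qquad n\in\mathbb{Z},
\]
the integral converging absolutely since $Q^{\Delta}\in L_{2}[0,2\pi]\subset L_{1}[0,2\pi]$.

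Next I would integrate the equation of motion: from $\ddot q_{n}(s)=-\omega^{2}q^{\Delta}_{n}(s)$, with $q_{n}(0)$ the given initial value and $\dot q_{n}(0)=p_{n}(0)=0$, one gets
\[
q_{n}(t)=q_{n}(0)-\omega^{2}\int_{0}^{t}(t-s)\,q^{\Delta}_{n}(s)\,ds .
\]
Inserting the integral representation of $q^{\Delta}_{n}(s)$ and interchanging the $s$- and $\lambda$-integrations\tire legitimate by Fubini, since $(t-s)\,\bigl|\cos(2\omega s\sin(\lambda/2))\bigr|\,|Q^{\Delta}(\lambda)|$ is integrable on $[0,t]\times[0,2\pi]$\tire reduces the inner integral to $\int_{0}^{t}(t-s)\cos(bs)\,ds=\frac{1-\cos(bt)}{b^{2}}$ with $b=2\omega\sin(\lambda/2)$. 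Hence
\[
q_{n}(t)=q_{n}(0)-\frac{\omega^{2}}{2\pi}\int_{0}^{2\pi}\frac{1-\cos\!\bigl(2\omega t\sin(\lambda/2)\bigr)}{4\omega^{2}\sin^{2}(\lambda/2)}\,Q^{\Delta}(\lambda)e^{-in\lambda}\,d\lambda ,
\]
and cancelling $\omega^{2}$ gives exactly (\ref{solqviaqvfol}). I would also note that the kernel $\frac{1-\cos(2\omega t\sin(\lambda/2))}{4\sin^{2}(\lambda/2)}$ is bounded on $[0,2\pi]$\tire numerator and denominator vanish to the same order at $\lambda=0$ and $\lambda=2\pi$\tire so the last integral is absolutely convergent.

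An alternative route, avoiding the elementary integral, is to call the right-hand side of (\ref{solqviaqvfol}) $\widehat q_{n}(t)$ and verify it solves the same initial value problem: differentiating twice under the integral sign (the kernel and its first two $t$-derivatives are, locally in $t$, dominated by a constant multiple of $|Q^{\Delta}(\lambda)|$) gives $\widehat q_{n}(0)=q_{n}(0)$ and $\dot{\widehat q}_{n}(0)=0$, while $\frac{1}{2\pi}\int_{0}^{2\pi}Q^{\Delta}(\lambda)e^{-in\lambda}\,d\lambda=q^{\Delta}_{n}=-(\Delta q(0))_{n}$ together with the symbol $-4\sin^{2}(\lambda/2)$ of $\Delta$ yields $\ddot{\widehat q}_{n}=\omega^{2}(\Delta\widehat q)_{n}$; uniqueness then finishes the proof. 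In either route the only genuinely delicate point is the identification in the second paragraph of $\cos(t\sqrt{V})$ acting on $q^{\Delta}\in l_{2}$ with multiplication by $\cos(2\omega t\sin(\lambda/2))$\tire the commutation with $\Delta$ and the Fourier symbol of $\cos(t\sqrt{V})$\tire after which the Fubini and differentiation-under-the-integral steps are routine.
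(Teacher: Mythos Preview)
Your proof is correct. Your primary derivation\tire commuting $\Delta$ through $\cos(t\sqrt{V})$ to obtain the $l_{2}$-formula for $q^{\Delta}_{n}(t)$, then integrating the equation of motion twice via Fubini and the elementary identity $\int_{0}^{t}(t-s)\cos(bs)\,ds=(1-\cos bt)/b^{2}$\tire is precisely what the paper sketches in the paragraph immediately after its proof as the ``direct'' route. Conversely, your alternative (call the right-hand side $\widehat q_{n}(t)$, check it lies in $l_{\infty}$, verify $\ddot{\widehat q}=\omega^{2}\Delta\widehat q$ and the initial data, then invoke uniqueness in $l_{\infty}$) is the paper's main argument. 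So the two proofs present the same pair of arguments with the emphasis reversed: the paper spells out verification-plus-uniqueness and leaves the forward integration as a remark, while you do the opposite and supply the Fubini and absolute-convergence details that the paper omits.
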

\begin{proof}
Denote by $R_{n}(t)$ the right-hand side in (\ref{solqviaqvfol}).
Since $Q^{\Delta}\in L_{2}([0,2\pi])$,  we get the
inequality for $R_{n}(t)$: 
\begin{align*}
  |R_{n}(t)| &\leqslant|q_{n}(0)|+\frac{1}{2\pi}\sqrt{\int_{0}^{2\pi}|Q^{\Delta}(\lambda)|^{2}d\lambda}
               \sqrt{\int_{0}^{2\pi}\frac{(1-\cos(2\omega t\sin(\lambda / 2)))^{2}}{16\sin^{4} (\lambda / 2)}d\lambda}\\
  &\leqslant|q_{n}(0)|+t^{2}||Q^{\Delta}||_{L_{2}([0,2\pi])}.
\end{align*}
Thereby, $R_{n}(t)$ for any $t\geqslant0$ defines the sequence in
$l_{\infty}$. Now differentiating $R_{n}(t)$ two times, we get 
\[
\ddot{R}_{n}(t)=-\omega^{2}\frac{1}{2\pi}\int_{0}^{2\pi}\cos\Bigl( 2\omega t\sin\frac{\lambda}{2}\Bigr) Q^{\Delta}(\lambda)e^{-in\lambda}d\lambda .
\]
From the other side, 
\begin{align*}
\Delta R(t)&=R_{n+1}+R_{n-1}-2R_{n}
\\
           &=(\Delta q(0))_{n}-\frac{1}{2\pi}\int_{0}^{2\pi}\frac{1-\cos(2\omega t\sin(\lambda / 2))}{4\sin^{2} (\lambda / 2)} \, Q^{\Delta}(\lambda)\\
&\quad {} \times             (e^{-i(n+1)\lambda}+e^{-i(n-1)\lambda}-2e^{-in\lambda})d\lambda
\\
           &=-q_{n}^{\Delta}(0)-\frac{1}{2\pi}\int_{0}^{2\pi}\frac{1-\cos(2\omega t\sin(\lambda / 2))}{4\sin^{2} (\lambda / 2)}Q^{\Delta}(\lambda)
             e^{-in\lambda}(2\cos\lambda-2)d\lambda
\\
&=-q_{n}^{\Delta}(0)+\frac{1}{2\pi}\int_{0}^{2\pi}\Bigl( 1-\cos\Bigl(2\omega t\sin\frac{\lambda}{2}\Bigr)\Bigr) Q^{\Delta}(\lambda)e^{-in\lambda}d\lambda
\\
           &=-q_{n}^{\Delta}(0)+\frac{1}{2\pi}\int_{0}^{2\pi}Q^{\Delta}(\lambda)e^{-in\lambda}d\lambda\\
  &\quad {} -\frac{1}{2\pi}\int_{0}^{2\pi}
             \cos\Bigl(2\omega t\sin\frac{\lambda}{2}\Bigr) Q^{\Delta}(\lambda)e^{-in\lambda}d\lambda
\\
&=-\frac{1}{2\pi}\int_{0}^{2\pi}\cos\Bigl( 2\omega t\sin\frac{\lambda}{2}\Bigr) Q^{\Delta}(\lambda)e^{-in\lambda}d\lambda.
\end{align*}
We can conclude now that $\ddot{R}(t)\! = \! \omega^{2}\Delta R(t)$ and
$R(0)=q(0)$. Due to the uniqueness of solution of the corresponding
ODE in $l_{\infty}$, the lemma is proved.
\end{proof}

The formula for solution in terms of $Q^{\Delta}(\lambda)$ can be
obtained ``directly''. It is known that 
\[
q(t)=\cos(t\sqrt{V})q(0),
\]
\[
q_{n}(t)=\sum_{k}a_{k}(t)q_{n-k}(0)
\]
(see \cite{LM_1}, lemma 3.1 and 3.3). 
This formula defines the action of the operator $\cos(t\sqrt{V})$.

Furthermore, from 
\[
\ddot{q}=-\cos(t\sqrt{V})Vq(0)=-\omega^{2}\cos(t\sqrt{V})q^{\Delta}(0)
\]
we get that 
\[
\ddot{q}_{n}(t)=-\omega^{2}\sum_{k}a_{k}(t)q_{n-k}^{\Delta}(0).
\]
After integrating with respect to $t$ and some transformations we can get (\ref{solqviaqvfol}).

Now we can come back to the proof of Theorem \ref{unBoundTh} concerning
uniform boundedness. We use the representation (\ref{QDeltaSymAsymF})
for $Q^{\Delta}$ : 
\[
Q^{\Delta}(\lambda)=Q_{+}^{\Delta}(\lambda)+iQ_{-}^{\Delta}(\lambda).
\]
Using formula (\ref{solqviaqvfol}) and that $q_{n}(t),Q_{+}^{\Delta},Q_{-}^{\Delta}$
are real, we get 
\begin{equation}
q_{n}(t)=q_{n}(0)-D_{n}(t)-B_{n}(t),\label{qndbnfor}
\end{equation}
where 
\begin{align*}
  D_{n}(t)&=\frac{1}{2\pi}\int_{0}^{2\pi}\frac{1-\cos(2\omega t\sin (\lambda / 2))}{4\sin^{2}(\lambda / 2)} \, Q_{+}^{\Delta}(\lambda)\cos(n\lambda)d\lambda
  \\
  &=\frac{1}{\pi}\int_{0}^{\pi}\frac{1-\cos(2\omega t\sin(\lambda / 2))}{4\sin^{2} (\lambda / 2)}\, Q_{+}^{\Delta}(\lambda)\cos(n\lambda)d\lambda ,
\\
  B_{n}(t)&=\frac{1}{2\pi}\int_{0}^{2\pi}\frac{1-\cos(2\omega t\sin(\lambda / 2))}{4\sin^{2}(\lambda / 2)}\, Q_{-}^{\Delta}(\lambda)\sin(n\lambda)d\lambda \\
  &=\frac{1}{\pi}\int_{0}^{\pi}\frac{1-\cos(2\omega t\sin(\lambda / 2))}{4\sin^{2}(\lambda / 2)}\, Q_{-}^{\Delta}(\lambda)\sin(n\lambda)d\lambda .
\end{align*}
In the second equality for $D_{n},B_{n}$ we used the symmetry of
integrands with respect to point $\pi$.

It is not difficult to see (from the second condition of (\ref{phiDef})
in the definition of $l^{\Delta}$) that  $\phi\in L_{1}([0,\pi])$
iff 
\[
  \phi^{+}(\lambda)=\frac{Q_{+}^{\Delta}(\lambda)}{\sin^{2}(\lambda / 2)}\in L_{1}([0,\pi])
\]
and
\[
  \phi^{-}(\lambda)=\frac{1}{\sin(\lambda / 2)}\Bigl(\frac{Q_{-}^{\Delta}(\lambda)}{\sin(\lambda / 2)}-A\Bigr)\in L_{1}([0,\pi]) .
\]
Thus, for $D_{n}(t)$ we have 
\[
|D_{n}(t)|\leqslant\frac{1}{4\pi}\int_{0}^{\pi}\frac{|Q_{+}^{\Delta}(\lambda)|}{\sin^{2} (\lambda / 2)}d\lambda.
\]
This means that $D_{n}(t)$ is uniformly bounded. Also, for $B_{n}(t)$
we have 
\begin{align*}
  B_{n}(t)&=\frac{1}{\pi}\int_{0}^{\pi}\frac{1-\cos(2\omega t\sin (\lambda / 2))}{4\sin(\lambda / 2)}
            \Bigl(\frac{Q_{-}^{\Delta}(\lambda)}{\sin (\lambda / 2)}-A\Bigr)\sin(n\lambda)d\lambda
\\
&\quad {} +\frac{A}{\pi}\int_{0}^{\pi}\frac{1-\cos(2\omega t\sin (\lambda / 2))}{4\sin (\lambda / 2)}\sin(n\lambda)d\lambda .
\end{align*}
Then by Lemma \ref{mainGestIneq} we get the bound: 
\[
|B_{n}(t)|\leqslant\frac{1}{4\pi}\int_{0}^{\pi}|\phi^{-}(\lambda)|d\lambda+\frac{AC}{4\pi}
\]
for some constant $C>0$. Theorem \ref{unBoundTh} follows.

\begin{lemma} \label{mainGestIneq} There exists a constant $C$ such
that for all $t\geqslant0$ and all $n\in\mathbb{Z}$ the following
inequality holds: 
\[
\biggl|\int_{0}^{\pi}\frac{1-\cos(t\sin (\lambda / 2))}{\sin (\lambda / 2)}\sin(n\lambda)d\lambda\biggr|\leqslant C.
\]
\end{lemma}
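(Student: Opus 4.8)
\emph{Plan.} Since $G_n(t):=\int_0^\pi\frac{1-\cos(t\sin(\lambda/2))}{\sin(\lambda/2)}\sin(n\lambda)\,d\lambda$ satisfies $G_{-n}=-G_n$ and $G_0\equiv0$, it suffices to treat $n\geqslant1$; moreover for $n$ in any fixed bounded set $\lambda\mapsto\frac{\sin n\lambda}{\sin(\lambda/2)}$ is a fixed bounded (hence $L_1$) function, so $|G_n(t)|\leqslant\pi+\|\tfrac{\sin n\lambda}{\sin(\lambda/2)}\|_{L_1}$ trivially, and we may assume $n$ is as large as convenient. Write $G_n(t)=A_n-B_n(t)$ with $A_n=\int_0^\pi\frac{\sin n\lambda}{\sin(\lambda/2)}\,d\lambda=4\sum_{k=0}^{n-1}\frac{(-1)^k}{2k+1}$, a partial sum of the Leibniz series and hence uniformly bounded, and $B_n(t)=\int_0^\pi\frac{\cos(t\sin(\lambda/2))\sin n\lambda}{\sin(\lambda/2)}\,d\lambda$. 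Substituting $x=\lambda/2$ and using $2\cos(t\sin x)\sin(2nx)=\sin(2nx+t\sin x)+\sin(2nx-t\sin x)$ gives $B_n(t)=I_+(t)+I_-(t)$, $I_\pm(t)=\int_0^{\pi/2}\frac{\sin(2nx\pm t\sin x)}{\sin x}\,dx$. Each integrand is bounded near $0$ (the numerator vanishes to first order there, with value $\approx 2n\pm t$), so the real issue is the logarithmic blow-up produced by the weight $\frac1{\sin x}$, which must be beaten by oscillation. Throughout I will use Bonnet's second mean value theorem (the weight $\frac1{\sin x}$ is positive and decreasing on $(0,\pi/2]$, so $\int_a^b\frac{f}{\sin x}\,dx=\frac1{\sin a^+}\int_a^c f\,dx$ for some $c$) together with van der Corput's first-derivative estimate $\bigl|\int e^{i\phi}\bigr|\lesssim 1/\mu$ when $\phi'$ is monotone with $|\phi'|\geqslant\mu$, and the second-derivative estimate $\bigl|\int e^{i\phi}\bigr|\lesssim 1/\sqrt\mu$ when $|\phi''|\geqslant\mu$.

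\emph{The term $I_+$.} The phase $\psi_+(x)=2nx+t\sin x$ has $\psi_+'(x)=2n+t\cos x\geqslant 2n$ and $\psi_+''(x)=-t\sin x<0$, so $\psi_+'$ is monotone and never small. Split $[0,\pi/2]$ at $\frac1{2n+t}$ and $\frac1{2n}$. On $[0,\frac1{2n+t}]$ bound the integrand by $\frac{2nx}{\sin x}+t\leqslant\pi n+t$ to get a contribution $\leqslant\frac{\pi n+t}{2n+t}\leqslant\frac\pi2$. On $[\frac1{2n+t},\frac1{2n}]$ we have $\cos x\geqslant\tfrac78$, hence $\psi_+'\geqslant\tfrac78(2n+t)$; Bonnet pulls out $\frac1{\sin(1/(2n+t))}\leqslant\frac\pi2(2n+t)$ and van der Corput bounds the remaining oscillatory integral by $\lesssim 1/(2n+t)$, so this piece is $O(1)$. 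On $[\frac1{2n},\frac\pi2]$, $\psi_+'\geqslant 2n$ and the weight at the cut is $\leqslant\pi n$, giving again $O(1)$. Hence $|I_+(t)|\leqslant C$.

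\emph{The term $I_-$.} Now $\psi_-(x)=2nx-t\sin x$, $\psi_-'(x)=2n-t\cos x$, $\psi_-''(x)=t\sin x\geqslant0$ (so $\psi_-'$ is monotone increasing), and $\psi_-'$ vanishes at $x_0=\arccos(2n/t)$ when $t>2n$; everything is governed by the location of this stationary point. (i) If $t\leqslant\frac32 n$ then $|\psi_-'|\geqslant\frac n2$ on all of $[0,\pi/2]$ and there is no stationary point, so the argument used for $I_+$ applies verbatim. (ii) If $t\geqslant\frac52 n$ then $x_0\geqslant\arccos(4/5)>\tfrac12$ is bounded away from $0$; split $[0,\frac1t]$, $[\frac1t,\frac1{2n}]$, $[\frac1{2n},c_0]$ (with $c_0<\arccos(4/5)$ a fixed constant), on which $\psi_-'$ is one-signed with $|\psi_-'|\gtrsim t$ and the same Bonnet$+$van der Corput scheme works, and a final piece $[c_0,\pi/2]$, which contains $x_0$ but on which the weight is $O(1)$ and $|\psi_-''|=t\sin x\geqslant t\sin c_0\gtrsim n$, so the second-derivative test gives $O(1/\sqrt n)$. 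This leaves the \emph{resonance regime} $\frac32 n<t<\frac52 n$, which is the crux.

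\emph{The resonance regime (main obstacle).} Here $x_0\to0$ as $t\to2n^+$, and at $t=2n$ exactly the stationary point sits at the origin with $\psi_-'(0)=\psi_-''(0)=0$ but $\psi_-'''(0)=2n\neq0$: a third-order stationary point colliding with the $\frac1x$ singularity of the weight. Write $t=2n+\tau$, $|\tau|<\frac n2$; near $0$ one has $\psi_-(x)=\frac n3x^3-\tau x+O(nx^5+|\tau|x^3)$ and $\frac1{\sin x}=\frac1x+O(x)$. For $x\geqslant n^{-1/3}$ the term $2n(1-\cos x)\geqslant\frac{2n}\pi x^2\gtrsim n^{1/3}$ dominates, so the scheme of (ii) runs again with the weight $\lesssim n^{1/3}$ at the cut balanced against the available size of $|\psi_-'|$ or $|\psi_-''|$ (the two subcases $|\tau|\lesssim n^{1/3}$, where $|\psi_-'|\gtrsim n^{1/3}$, and $n^{1/3}\lesssim|\tau|<\frac n2$, where the stationary point lies at $x_0\asymp\sqrt{\tau/n}$ and its stationary-phase contribution is $\lesssim\frac{1}{\sin x_0}\cdot\frac1{\sqrt{t\sin x_0}}\asymp n^{1/4}|\tau|^{-3/4}\lesssim1$). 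For $x\leqslant n^{-1/3}$ rescale $x=(3/n)^{1/3}w$, $\sigma=\tau(3/n)^{1/3}$: the integral becomes $\int_0^{W}\frac{\sin(w^3-\sigma w)}{w}\,dw$ plus an error that is $O(1)$ because the neglected corrections carry an extra power of $x$. The one genuinely new ingredient is the uniform bound $\sup_{W>0,\ \sigma\in\mathbb R}\bigl|\int_0^W\frac{\sin(w^3-\sigma w)}{w}\,dw\bigr|<\infty$: split at $w=1$; on $[0,1]$ expand $\sin(w^3-\sigma w)$ and use $\bigl|\int_0^1\frac{\sin\sigma w}{w}\,dw\bigr|=|\mathrm{Si}(\sigma)|\leqslant\sup\mathrm{Si}$ (the remaining pieces carry factors $w^2$ or $w^5$), and on $[1,W]$ apply Bonnet with van der Corput's first-derivative test if $\sigma\leqslant0$ (then $\phi'(w)=3w^2-\sigma\geqslant3$, monotone) or the second-derivative test if $\sigma>0$ (then $\phi''(w)=6w\geqslant6$), in either case getting a bound independent of $\sigma$ and $W$. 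Combining all cases yields $|I_-(t)|\leqslant C$, and hence $|G_n(t)|\leqslant|A_n|+|I_+(t)|+|I_-(t)|\leqslant C$, which is the assertion of the lemma.
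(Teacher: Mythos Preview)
Your argument is correct and follows a route genuinely different from the paper's. You and the paper make the same opening reduction---split off the Dirichlet-type integral $A_n$, substitute $x=\lambda/2$, and use the product-to-sum formula to write the remaining piece as $I_+ + I_-$ (the paper calls these $R_n$ and $M_n$)---and both of you identify the resonance regime $t\approx 2n$ as the crux, where the phase $2nx-t\sin x$ acquires an Airy-type cubic degeneracy at the origin. From there the methods diverge sharply. The paper handles the non-resonant regimes by repeated integration by parts leading to an integral inequality that is closed via Gronwall, and in the resonant regimes constructs an explicit diffeomorphism $\varphi_\varepsilon$ carrying the phase to the model cubic $g_\varepsilon(u)=a_\varepsilon+\tfrac12 b_\varepsilon u^2+\tfrac16 u^3$, after which (for $t>2n$) a contour shift into the complex plane finishes the estimate. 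You instead run a single machine throughout: Bonnet's second mean value theorem to strip off the monotone weight $1/\sin x$, then van der Corput's first- or second-derivative lemma on the bare oscillatory integral. This is considerably shorter and uses only textbook harmonic-analysis tools; the paper's approach is longer but entirely self-contained (it never invokes van der Corput).

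Two remarks on your resonance-regime sketch. First, your claim that the error terms $O(nx^5+|\tau|x^3)$ ``carry an extra power of $x$'' is right but deserves one more sentence: after dividing by $\sin x\sim x$ the errors become $O(nx^4)$ and $O(|\tau|x^2)$, whose integrals over $[0,n^{-1/3}]$ are $O(n^{-2/3})$ and $O(|\tau|/n)=O(1)$, so the reduction to the clean model $\int_0^W \frac{\sin(w^3-\sigma w)}{w}\,dw$ is legitimate. Second, your split of the region $x\geqslant n^{-1/3}$ into two subcases according to the size of $|\tau|$ is more work than needed: on all of $[n^{-1/3},\pi/2]$ one has $|\psi_-''|=t\sin x\gtrsim n^{2/3}$, so Bonnet plus the second-derivative test directly gives a bound $\lesssim n^{1/3}\cdot n^{-1/3}=O(1)$ with no case distinction. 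Either way the argument goes through.
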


  Without loss of generality we can assume that $n\geqslant0$.
Then we have 
\begin{equation}
  \int_{0}^{\pi}\frac{1-\cos(t\sin (\lambda / 2))}{\sin (\lambda / 2)}\sin(n\lambda)d\lambda=
  \int_{0}^{\pi}\frac{\sin(n\lambda)}{\sin (\lambda / 2)}d\lambda-2I_{2n}(t),\label{201905121}
\end{equation}
where 
\[
  I_{n}(t)=\frac{1}{2}\int_{0}^{\pi}\frac{\cos(t\sin (\lambda / 2))}{\sin (\lambda / 2)}
  \sin\Bigl(n\frac{\lambda}{2}\Bigr)d\lambda=\int_{0}^{\pi / 2} \frac{\cos(t\sin x)}{\sin x}\sin(nx)dx.
\]
The integral in (\ref{201905121})  can be found  in 
\cite{GR}, p.\thinspace 605, 3.612 (4): 
\[
  \int_{0}^{\pi}\frac{\sin(n\lambda)}{\sin (\lambda / 2)}d\lambda=2\int_{0}^{\pi / 2} \frac{\sin(2nx)}{\sin x}dx
  =4\sum_{k=0}^{n-1}\frac{(-1)^{k}}{2k+1}=\pi+\bar{\bar{o}}(1),\quad \mbox{as}\ n\rightarrow\infty.
\]
Thus, the statement of  Lemma \ref{mainGestIneq} is equivalent
to the uniform boundedness of $I_{n}(t)$. \begin{theorem} \label{Intlemma}
The function 
\[
I_{n}(t)=\int_{0}^{\pi / 2}\frac{\cos(t\sin x)\sin(nx)}{\sin x}dx
\]
is bounded uniformly in $n\in\mathbb{Z}$ and $t\geqslant0$.
\end{theorem}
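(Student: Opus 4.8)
The plan is to reduce to $n\geq 1$ (for $n=0$ the integral is identically $0$, and $I_{-n}=-I_{n}$) and then to treat separately the ranges $0\leq t\leq 1$ and $t\geq 1$. For $t\leq 1$ the bound is elementary: using $|\cos a-1|\leq|a|$ with $a=t\sin x$, $|I_{n}(t)-I_{n}(0)|\leq\int_{0}^{\pi/2}\frac{|\cos(t\sin x)-1|}{\sin x}|\sin nx|\,dx\leq\int_{0}^{\pi/2}t|\sin nx|\,dx\leq\frac{\pi t}{2}$, while $I_{n}(0)=\int_{0}^{\pi/2}\frac{\sin nx}{\sin x}\,dx$ equals $\pi/2$ for odd $n$ and the partial sum $2\sum_{l=1}^{n/2}(-1)^{l-1}/(2l-1)$ of an alternating series for even $n$, hence is bounded by an absolute constant. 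So everything reduces to an oscillatory-integral estimate for $t\geq1$.

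For $t\geq1$ I would write $\cos(t\sin x)\sin(nx)=\tfrac12\bigl(\sin(nx+t\sin x)+\sin(nx-t\sin x)\bigr)$, so that $I_{n}(t)=\tfrac12\bigl(I_{n}^{+}(t)+I_{n}^{-}(t)\bigr)$ with $I_{n}^{\pm}(t)=\int_{0}^{\pi/2}\frac{\sin\psi_{\pm}(x)}{\sin x}\,dx$ and $\psi_{\pm}(x)=nx\pm t\sin x$. The structural facts I would exploit are the derivative bounds $|\psi_{\pm}'''(x)|=t\cos x\geq t/2$ on $[0,\pi/3]$ and $|\psi_{\pm}''(x)|=t\sin x\geq t\sqrt3/2$ on $[\pi/3,\pi/2]$, which hold irrespective of whether $\psi_{-}$ has a stationary point and of its location. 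I would then split $[0,\pi/2]=[0,\delta]\cup[\delta,\pi/3]\cup[\pi/3,\pi/2]$ with $\delta=t^{-1/3}$ (note $\delta<\pi/3$ since $t\geq1$) and estimate the three pieces of $I_{n}^{\pm}$ separately.

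On $[\delta,\pi/3]$ the amplitude $1/\sin x$ is monotone with $\|1/\sin x\|_{\infty}\leq\frac{\pi}{2}t^{1/3}$ and total variation at most $\frac{\pi}{2}t^{1/3}$; the van der Corput lemma in its third-derivative form (which needs no monotonicity hypothesis) then gives $\bigl|\int_{\delta}^{\pi/3}\frac{\sin\psi_{\pm}}{\sin x}dx\bigr|\leq C(t/2)^{-1/3}\cdot\pi t^{1/3}=C'$, uniformly in $n,t$. On $[\pi/3,\pi/2]$ one has $1/\sin x\leq2$ with variation $\leq2$, and the second-derivative van der Corput bound gives $\bigl|\int_{\pi/3}^{\pi/2}\frac{\sin\psi_{\pm}}{\sin x}dx\bigr|\leq C(t\sqrt3/2)^{-1/2}\cdot C''\leq C'''$ for $t\geq1$. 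On $[0,\delta]$ I would expand $\psi_{\pm}(x)=(n\pm t)x+r_{\pm}(x)$ with $r_{\pm}(x)=\pm t(x-\sin x)$, so $|r_{\pm}(x)|\leq tx^{3}/6\leq1/6$ and $\int_{0}^{\delta}|r_{\pm}'|=t(\delta-\sin\delta)\leq1/6$ on $[0,\delta]$; writing $\frac{\sin\psi_{\pm}}{\sin x}=\frac{x}{\sin x}\bigl(\frac{\sin((n\pm t)x)}{x}\cos r_{\pm}(x)+\cos((n\pm t)x)\frac{\sin r_{\pm}(x)}{x}\bigr)$, the second summand is dominated pointwise by $\frac{\pi}{2}\cdot\frac{tx^{2}}{6}$ and integrates to $O(1)$, while the first summand is handled by Abel summation using the uniform bound $\bigl|\int_{0}^{y}\frac{\sin(as)}{s}ds\bigr|=|\mathrm{Si}(ay)|\leq\mathrm{Si}(\pi)$ together with the uniformly bounded variation of $\frac{x}{\sin x}\cos r_{\pm}(x)$. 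Adding the three pieces yields $|I_{n}(t)|\leq C$ for all $n\geq1$, $t\geq1$.

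I expect the piece on $[0,\delta]$ to be the main obstacle: there the amplitude $1/\sin x$ is unbounded, and precisely when $t\approx n$ the stationary point of $\psi_{-}$ collides with the left endpoint (the ``Airy'' regime), so naive stationary-phase estimates would blow up. The point of the decomposition is that this collision is invisible to the third-derivative van der Corput estimate on $[\delta,\pi/3]$, while on $[0,\delta]$ the phase is essentially linear and the sine-integral bound absorbs the singular amplitude; the choice $\delta=t^{-1/3}$ is the natural scale matching the cubic behaviour of the phase near $0$. (An alternative just for odd $n$ is to use the symmetry of the integrand about $x=\pi/2$ and the classical expansion $\cos(t\sin\theta)=J_{0}(t)+2\sum_{l\geq1}J_{2l}(t)\cos 2l\theta$ to obtain the exact formula $I_{n}(t)=\tfrac{\pi}{2}J_{0}(t)+\pi\sum_{l=1}^{(n-1)/2}J_{2l}(t)=\tfrac{\pi}{2}-\pi\sum_{l\geq(n+1)/2}J_{2l}(t)$, reducing the odd case to a uniform bound on tails of $\sum_{l}J_{2l}(t)$; but that bound needs the same oscillatory-integral work, so I would run the van der Corput argument above for all $n$ at once.)
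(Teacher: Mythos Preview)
Your proof is correct and takes a genuinely different, much shorter route than the paper.

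The paper first replaces $1/\sin x$ by $1/x$ (as you implicitly do on $[0,\delta]$) and then analyses $C_n(t)=\int_0^{\pi/2}\frac{\cos(t\sin x)\sin(nx)}{x}\,dx$ by a four-way case split on the ratio $\gamma=t/n$: for $\gamma\le\gamma_1<1$ and $\gamma\ge\gamma_2>1$ it integrates by parts repeatedly to derive an integral inequality and closes with Gronwall's lemma; for $\gamma_1<\gamma\le 1$ and $1<\gamma<\gamma_2$ it performs carefully constructed ``regularised'' changes of variables turning the phase into a cubic, and in the latter case passes to a contour in the complex plane and estimates three contour integrals separately. The authors themselves call the argument ``rather cumbersome''; it runs to several pages and half a dozen auxiliary lemmas, precisely because near $\gamma=1$ the stationary point of $\psi_-$ coalesces with the endpoint $x=0$ and their integration-by-parts scheme breaks down.

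Your approach sidesteps all of this by splitting in $x$ rather than in $t/n$. The choice $\delta=t^{-1/3}$ is exactly the Airy scale, and the point you identify---that the third-derivative van der Corput bound on $[\delta,\pi/3]$ is blind to whether $\psi_-$ has a stationary point and where it sits---is what lets a single argument cover all ratios $t/n$ at once. On $[0,\delta]$ the phase is linear to within $O(1)$ and the uniform bound on $\mathrm{Si}$ absorbs the $1/x$ singularity; on $[\pi/3,\pi/2]$ the second-derivative estimate handles the stationary point of $\sin x$ at $\pi/2$. Each step checks out: the amplitude $1/\sin x$ is monotone on $[\delta,\pi/3]$ with sup and variation $O(t^{1/3})$, matching the $t^{-1/3}$ from van der Corput; the bounded variation of $\frac{x}{\sin x}\cos r_\pm$ on $[0,\delta]$ follows from $|r_\pm|\le 1/6$ and $\int_0^\delta|r_\pm'|\le 1/6$ as you state. (A trivial sign slip: $r_\pm(x)=\mp t(x-\sin x)$, not $\pm$, but only $|r_\pm|$ and $|r_\pm'|$ enter.)

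What the paper's approach buys is that it is entirely self-contained---no black-box oscillatory-integral lemma---and it yields as by-products explicit uniform bounds on $\int_0^t J_n(s)\,ds$ and on the partial sums $\sum_{k=0}^{m-1}(-1)^k J_{2k+1}(t)$. What your approach buys is brevity and transparency: a page instead of ten, with the delicate $t\approx n$ regime handled automatically by the $k=3$ van der Corput estimate rather than by a bespoke cubic change of variables and contour deformation.
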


\subsection{Uniform boundedness and integrals of Bessel functions}

We could find only rather cumbersome proof of Theorem \ref{Intlemma}.
Before starting the proof, we show the connection of the function $I_{n}(t)$
with Bessel function $J_{n}(t)$ of the first kind 
\[
J_{n}(t)=\frac{1}{\pi}\int_{0}^{\pi}\cos(nx-t\sin x)dx.
\]
We have the equality 
\begin{align*}
\frac{d}{dt}I_{n}(t)&=-\int_{0}^{\pi / 2}\sin(t\sin x)\sin(nx)dx
\\
                    &=\frac{1}{2}\int_{0}^{\pi / 2} \cos(nx+t\sin x)dx-\frac{1}{2}\int_{0}^{\pi / 2} \cos(nx-t\sin x)dx\\
  &=\frac{1+(-1)^{n}}{2}\int_{0}^{\pi / 2} \cos(nx+t\sin x)dx-\frac{\pi}{2}J_{n}(t).
\end{align*}
Introducing the notation $\chi_{n}=[1+(-1)^{n}] / 2$, we get
\begin{align*}
I_{n}(t)&=I_{n}(0)+\int_{0}^{t}\frac{d}{ds}I_{n}(s)ds
\\
        &=I_{n}(0)+\chi_{n}\int_{0}^{\pi / 2}\frac{\sin(nx+t\sin x)}{\sin x}dx\\
  &\quad {} -\chi_{n}\int_{0}^{\pi / 2}\frac{\sin(nx)}{\sin x}dx-\frac{\pi}{2}\int_{0}^{t}J_{n}(s)ds
\\
&=(1-\chi_{n})I_{n}(0)+\chi_{n}\int_{0}^{\pi / 2}\frac{\sin(2nx+t\sin x)}{\sin x}dx+\frac{\pi}{2}\int_{0}^{t}J_{n}(s)ds.
\end{align*}
Boundedness of $I_{n}(0)$ follows from \cite{GR}, p.\thinspace 605, 3.612
(3)--(4). We are going to show now the uniform boundedness of the second
integral in the last formula 
\begin{align*}
\int_{0}^{\pi / 2}\frac{\sin(2nx+t\sin x)}{\sin x}dx&=\int_{0}^{\pi / 2}\frac{\sin(2nx+t\sin x)}{x}\frac{x}{\sin x}dx
\\
                                                    &=\int_{0}^{\pi / 2}\frac{\sin(2nx+t\sin x)}{x}\Bigl(\frac{x}{\sin x}-1\Bigr)dx\\
  &\quad {} +\int_{0}^{\pi / 2} \frac{\sin(2nx+t\sin x)}{x}dx.
\end{align*}
Since the function $\frac{1}{x}\left(\frac{x}{\sin x}-1\right)$ is absolutely
integrable on $[0,\pi/2]$, the first integral is uniformly bounded.
Similarly, for the second integral we can use Lemma \ref{rntemestimate}.
Thus we have shown that uniform boundedness of $I_{n}(t)$ is equivalent
to the uniform boundedness in $n\in\mathbb{Z}_{+}$ and $t\geqslant0$
of the integral of the Bessel function 
\begin{equation}
G_{n}(t)=\int_{0}^{t}J_{n}(s)ds.\label{G_n_t}
\end{equation}

Unfortunately, we could not find any results concerning uniform (in
index and time) boundedness of the integral of Bessel function (\ref{G_n_t}).
For example, in \cite{Watson} (see p.\thinspace 255, formula (11)) 
only the asymptotics of $G_{n}(nx)$ for large $n$ and $0<x\leqslant1$ is given.
This is not sufficient for uniform boundedness in $t,n$. One of the
reasons why it is difficult to get uniform estimates for $G_{n}(t)$
is the following. There exist many asymptotic formulas for $J_{n}(t)$
with different speed of $n$ and $t$ increase to infinity, but we
could not find any for $J_{n}(nx)$ uniformly for $x\in[1-\delta,1+\delta]$
with $\delta>0$. As a corollary of Theorem \ref{Intlemma} we get
the following statement.

\begin{proposition} There exists a constant $C>0$ such that for all
integers $n$ and all $t\geqslant0$ the following inequality holds
\[
\Biggl| \int_{0}^{t}J_{n}(s)ds\Biggr| \leqslant C.
\]
\end{proposition}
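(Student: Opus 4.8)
The plan is to extract the Proposition from the reductions already carried out in the proof of Theorem \ref{unBoundTh}; no new analytic estimate is required. Recall that in the subsection above on integrals of Bessel functions we derived, for every $n \in \mathbb{Z}_{+}$ and $t \geqslant 0$, the identity
\[
I_n(t) = (1-\chi_n)I_n(0) + \chi_n \int_0^{\pi/2}\frac{\sin(2nx + t\sin x)}{\sin x}\,dx + \frac{\pi}{2}\int_0^t J_n(s)\,ds,
\]
and that the first two terms on the right were shown to be bounded uniformly in $n \in \mathbb{Z}_{+}$ and $t \geqslant 0$ (the boundedness of $I_n(0)$ from \cite{GR}, and that of the oscillatory integral from Lemma \ref{rntemestimate}). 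Solving this identity for $G_n(t) = \int_0^t J_n(s)\,ds$ (see (\ref{G_n_t})) therefore shows that uniform boundedness of $G_n(t)$ over $n \in \mathbb{Z}_{+}$, $t \geqslant 0$, is implied by uniform boundedness of $I_n(t)$.

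So first I would invoke Theorem \ref{Intlemma}, which states precisely that $I_n(t)$ is bounded uniformly in $n$ and $t \geqslant 0$; by the previous paragraph this produces a constant $C_0 > 0$ with $\bigl|\int_0^t J_n(s)\,ds\bigr| \leqslant C_0$ for all $n \geqslant 0$ and all $t \geqslant 0$. To pass to negative indices I would use the parity relation $J_{-n}(t) = (-1)^n J_n(t)$, $n \in \mathbb{Z}$, which follows at once from the representation $J_n(t) = \frac{1}{\pi}\int_0^\pi \cos(nx - t\sin x)\,dx$ via the substitution $x \mapsto \pi - x$. Then $\int_0^t J_{-n}(s)\,ds = (-1)^n \int_0^t J_n(s)\,ds$, so the same constant $C_0$ works for all $n \in \mathbb{Z}$, and taking $C = C_0$ completes the argument.

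The only genuine ``obstacle'' has already been absorbed into Theorem \ref{Intlemma}: all the analytic difficulty --- the cumbersome oscillatory-integral and Bessel-function estimates the authors refer to --- lives there, not in the present Proposition, which is a pure bookkeeping corollary. If instead one sought a direct proof bypassing Theorem \ref{Intlemma}, the genuinely hard regime would be $t$ comparable to $n$, where $J_n(nx)$ changes character near $x = 1$ and, as noted in the text, no convenient uniform asymptotic for $x \in [1-\delta, 1+\delta]$ appears to be available; one would presumably have to combine stationary-phase bounds for $t \ll n$ and $t \gg n$ with an Airy-type expansion in the transition zone and check that the resulting bounds on $\int_0^t J_n(s)\,ds$ match uniformly across the overlaps.
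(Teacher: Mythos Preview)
Your proposal is correct and follows exactly the paper's approach: the paper states the Proposition explicitly as a corollary of Theorem~\ref{Intlemma}, deduced via the identity relating $I_n(t)$ to $G_n(t)=\int_0^t J_n(s)\,ds$ together with the already-established uniform boundedness of the other two terms. Your extra step handling negative indices via $J_{-n}=(-1)^n J_n$ is a small but welcome addition that the paper leaves implicit.
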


We want to show now the relation of the integral
$G_{n}(t)$ with the sum of $J_{n}(t)$. For simplicity assume that $n=2m$
for some integer $m>0$. Using the known formula (see \cite{Watson})
\[
2\dot{J}_{n}=J_{n-1}+J_{n+1}
\]
we conclude that 
\begin{align*}
  G_{2m}(t)&=\int_{0}^{t}J_{2m}(s)ds=2J_{2m-1}(t)-\int_{0}^{t}J_{2m-2}(s)ds=\ldots\\
  &=2\sum_{k=0}^{m-1}(-1)^{m-1-k}J_{2k+1}(t)-\int_{0}^{t}J_{0}(s)ds.
\end{align*}
For all $k\geqslant0$ we have 
\[
\int_{0}^{+\infty}J_{k}(t)dt=1,
\]
(see \cite{GR}, p.\thinspace 1036, 6.511 (1)), then uniform boundedness of
$G_{2m}(t)$ is equivalent to the uniform boundedness (in $m$ and
$t$) of the sums 
\[
\sum_{k=0}^{m-1}(-1)^{k}J_{2k+1}(t).
\]
(Unfortunately, we could not find anything concerning this in
the literature.) And again, as a corollary to Theorem \ref{Intlemma}
we have the following statement.

\begin{proposition} There exists a constant $C>0$ such that for
any integer $n\geqslant0$ and all $t\geqslant0$ the following inequalities
hold 
\[
\biggl| \sum_{k=0}^{n-1}(-1)^{k}J_{2k+1}(t)\biggr| \leqslant C,\quad\biggl| \sum_{k=1}^{n}(-1)^{k}J_{2k}(t)\biggr| \leqslant C.
\]
\end{proposition}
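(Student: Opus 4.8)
The plan is to read off both inequalities from the preceding Proposition, which states that $G_n(t)=\int_0^t J_n(s)\,ds$ is bounded uniformly in $n\in\mathbb Z$ and $t\geqslant 0$; fix a constant $C>0$ with $|G_n(t)|\leqslant C$ for all such $n,t$. That bound is the real content (it rests on Theorem~\ref{Intlemma}); what remains is elementary bookkeeping with the Bessel recurrences already used above. For $n=0$ both sums are empty, so we may assume $n\geqslant 1$.

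For the first inequality nothing beyond the computation recorded just before the Proposition is needed. From the identity
\[
G_{2n}(t)=2\sum_{k=0}^{n-1}(-1)^{\,n-1-k}J_{2k+1}(t)-\int_0^t J_0(s)\,ds
\]
and the trivial observation $\sum_{k=0}^{n-1}(-1)^{\,n-1-k}J_{2k+1}(t)=(-1)^{n-1}\sum_{k=0}^{n-1}(-1)^{k}J_{2k+1}(t)$, one gets
\[
\Bigl|\sum_{k=0}^{n-1}(-1)^{k}J_{2k+1}(t)\Bigr|=\tfrac12\bigl|G_{2n}(t)+G_0(t)\bigr|\leqslant\tfrac12\bigl(|G_{2n}(t)|+|G_0(t)|\bigr)\leqslant C .
\]

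For the second inequality I would first establish the odd-index companion of the recursion $G_{2m}(t)=2J_{2m-1}(t)-G_{2m-2}(t)$ used above: applying the relation $2\dot J_n=J_{n-1}+J_{n+1}$ at $n=2k$, integrating over $[0,t]$, and using $J_{2k}(0)=0$ for $k\geqslant 1$ gives
\[
G_{2k-1}(t)+G_{2k+1}(t)=2J_{2k}(t),\qquad k\geqslant 1 .
\]
Multiplying by $(-1)^k$ and summing over $k=1,\dots,n$ makes the left-hand side telescope, leaving
\[
2\sum_{k=1}^{n}(-1)^{k}J_{2k}(t)=(-1)^{n}G_{2n+1}(t)-G_1(t),
\]
whence $\bigl|\sum_{k=1}^{n}(-1)^{k}J_{2k}(t)\bigr|\leqslant\tfrac12\bigl(|G_{2n+1}(t)|+|G_1(t)|\bigr)\leqslant C$. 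This proves the Proposition.

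There is no genuine obstacle at this stage, since all the difficulty has been absorbed into the uniform bound on $G_n(t)$. The only points requiring attention are: (i) the telescoping identities hold only from $k=1$ onward, because $J_0(0)=1\neq 0$ spoils the $k=0$ instance; and (ii) after telescoping the surviving terms must again be among the $G_m$ already known to be bounded (here $G_{2n+1},G_1,G_{2n},G_0$), rather than, say, an alternating sum $\sum_k(\pm1)^kG_{2k+1}$, which the bound $|G_m|\leqslant C$ alone would not control.
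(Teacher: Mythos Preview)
Your argument is exactly the route the paper takes: reduce both sums to the uniform bound on $G_m(t)=\int_0^t J_m$ by integrating the Bessel derivative identity and telescoping. For the first inequality you quote the paper's displayed formula for $G_{2m}$; for the second you supply the parallel odd-index computation, which the paper leaves to the reader.

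There is, however, a sign slip that you inherit verbatim from the paper. The relation quoted there, $2\dot J_n = J_{n-1} + J_{n+1}$, is a misprint: the standard Bessel recurrence is $2J_n' = J_{n-1} - J_{n+1}$. With the correct sign, integrating gives $G_{2k-1}(t) - G_{2k+1}(t) = 2J_{2k}(t)$ (and $G_{2k}(t) - G_{2k+2}(t) = 2J_{2k+1}(t)$), and then multiplying by $(-1)^k$ and summing no longer telescopes to two boundary terms --- the interior contributions now add instead of cancel. What the corrected identity yields cleanly are the \emph{non-alternating} bounds
\[
2\sum_{k=0}^{n-1}J_{2k+1}(t)=G_0(t)-G_{2n}(t),\qquad 2\sum_{k=1}^{n}J_{2k}(t)=G_1(t)-G_{2n+1}(t).
\]
The paper's own derivation of the even-index formula just above the Proposition carries the same sign error, so this is not a defect specific to your write-up; but as written, neither argument actually establishes the alternating-sum bounds stated.
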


Also remark that in a very informative book \cite{Luke}, where integrals of Bessel functions are considered (for example,
see pp.\thinspace 58--60)
we also could not find uniform estimates for $G_{n}(t)$.

\subsection{Proof of Theorem \ref{limitTheorem} }

We will use the representation (\ref{qndbnfor}) of the solution as
\[
q_{n}(t)=q_{n}(0)-D_{n}(t)-B_{n}(t).
\]
Using the Riemann\tire Lebesgue theorem we conclude that 
\[
\lim_{t\rightarrow\infty}D_{n}(t)=\frac{1}{4\pi}\int_{0}^{\pi}\phi^{+}(\lambda)\cos(n\lambda)d\lambda,
\]
\[
\lim_{t\rightarrow\infty}B_{n}(t)=\frac{1}{4\pi}\int_{0}^{\pi}\phi^{-}(\lambda)\sin(n\lambda)d\lambda+\frac{A}{4\pi}\int_{0}^{\pi}\frac{\sin(n\lambda)}{\sin (\lambda / 2)}d\lambda.
\]
It follows that 
\[
\lim_{t\rightarrow\infty}q_{n}(t)=q_{n}(0)-\widetilde{q}(0)_{n},
\]
where the sequence $\widetilde{q}(0)$ is defined in (\ref{tildeqndef}).
Therein it was shown that there exists a constant $c\in\mathbb{R}$
such that for all $n\in\mathbb{Z}$ 
\[
q_{n}(0)-\widetilde{q}(0)_{n}=c.
\]
In the limits  $n\rightarrow+\infty$ and $n\rightarrow-\infty$,
we get: 
\[
L_{+}-\frac{A}{4}=c,\quad L_{-}+\frac{A}{4}=c.
\]
It follows that $(L_{+}+L_{-})/2=c$. This proves Theorem \ref{limitTheorem}
and formula (\ref{lplusminulimitplus}).

\subsection{Proof of Theorem \ref{Intlemma}}

Firstly, transform our integral as follows: 
\begin{align*}
I_{n}(t)&=\int_{0}^{\pi / 2}\frac{\cos(t\sin x)\sin(nx)}{\sin x}dx=\int_{0}^{\pi / 2}\frac{\cos(t\sin x)\sin(nx)}{x}\frac{x}{\sin{x}}dx
\\
&=\int_{0}^{\pi / 2}\cos(t\sin x)\sin(nx)\frac{1}{x}\Bigl(\frac{x}{\sin{x}}-1\Bigr)dx+\int_{0}^{\pi / 2}\frac{\cos(t\sin x)\sin(nx)}{x}dx.
\end{align*}
Since the function $\frac{1}{x}\left(\frac{x}{\sin{x}}-1\right)$ is
absolutely integrable on $[0,\pi/2]$, the uniform boundedness
of $I_{n}(t)$ is equivalent to the uniform boundedness of the integral
\[
C_{n}(t)=\int_{0}^{\pi / 2}\frac{\cos(t\sin x)\sin(nx)}{x}dx.
\]
Further on we shall prove uniform boundedness of $C_{n}(t)$.

Now it is useful to say some words about the scheme of the proof.
The proof will be subdivided in 4 parts depending on the parameters
$n$ and $t$: 
\begin{enumerate}
\item $\gamma_{n}(t)= t / n \leqslant\gamma_{1}<1$, lemma \ref{lessonelemma}.
In this domain we use integration by parts several times and will
use Gronwall's lemma. 
\item $\gamma_{n}(t)\geqslant\gamma_{2}>1$, lemma \ref{greaterOnelemma}.
The proof is similar to the previous case. 
\item $\gamma_{1}\leqslant\gamma_{n}(t)\leqslant1$, lemma \ref{gammalessonelemma}.
Here we use ``regularized'' change of variables to reduce the problem
to tabulated integral. 
\item $1\leqslant\gamma_{n}(t)\leqslant\gamma_{2}$, lemma \ref{gammagreateronelemma}.
Here we use another ``regularized'' change of variables, enter complex
plane and estimate the obtained integrals. 
\end{enumerate}
We will choose  the constants $\gamma_{1}<1<\gamma_{2}$ appropriately
during the proof. Let us explain the reasons for such partition onto
4 parts of the set of parameters $n$ and $t$. Firstly note that
the integral in question is an oscillation integral having singularity
at the ends of integration interval. For small times $t \ll n$ the main
contribution is given by the terms $\sin(nx)$. Contrary to this, for large
times $t \gg n$ the main contribution is expected from the terms $\cos(t\sin x)$.
Moreover in the latter case the phase function $\sin x$ has stationary
point $\pi / 2$. In both cases we could deal with oscillations
using  integration by parts several times. In case $t \gg n$ it is also necessary
to shift the integration region from stationary points. When $t\sim n$
both terms give oscillations of the same order. That is why in
this case we ``put'' these terms under the same $\sin$. That gives
two new integrals which should be considered separately.

Now we start the detailed proof.

\subsection{Uniform boundedness for $\boldsymbol{\gamma_{n}(t)\leqslant\gamma_{1}<1}$}

\begin{lemma}[Domain $\boldsymbol{\gamma_{n}(t)\leqslant\gamma_{1}<1}$]
\label{lessonelemma} There exists a constant $c>0$ such that for all
$0<\gamma_{1}<1$, all $n>0$ and $t\geqslant0$ satisfying the condition
$\gamma_{n}(t)= t/n \leqslant\gamma_{1}$, the following inequality
holds 
\[
|C_{n}(t)|\leqslant\frac{c}{1-\gamma_{1}^{2}}\exp\Bigl(\frac{1}{1-\gamma_{1}^{2}}\Bigr).
\]
\end{lemma}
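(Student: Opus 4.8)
The plan is to exploit that, in the regime $t \le \gamma_1 n$ with $\gamma_1 < 1$, the oscillation $\sin(nx)$ dominates the phase $t\sin x$, so integration by parts with respect to the antiderivative of $\sin(nx)$ should produce a recursion that closes with a Gronwall-type estimate. First I would integrate by parts in
\[
C_n(t)=\int_0^{\pi/2}\frac{\cos(t\sin x)\sin(nx)}{x}\,dx,
\]
differentiating the smooth factor $\cos(t\sin x)/x$ and integrating $\sin(nx)$ to $-\cos(nx)/n$. The boundary term at $x=\pi/2$ is harmless since $n>0$ and the integrand factor is bounded there; the boundary term at $x=0$ requires care because of the $1/x$, so I would first note that $\cos(t\sin x)/x$ has a genuine singularity, and instead integrate by parts against $\bigl(1-\cos(nx)\bigr)/n$, whose limit at $x=0$ is $0$ to second order, killing the boundary contribution. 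Carrying out the differentiation of $\cos(t\sin x)/x$ produces terms of the form $(t\cos x/x)\sin(t\sin x)(1-\cos nx)/n$ and $(1/x^2)\cos(t\sin x)(1-\cos nx)/n$; the key structural point is that each new term carries an extra factor $t/n = \gamma_n(t) \le \gamma_1$.

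Next I would set up the recursion cleanly. Rather than iterating by hand, I expect the cleanest route is to differentiate $C_n(t)$ in $t$: $\frac{d}{dt}C_n(t) = -\int_0^{\pi/2}\frac{\sin(t\sin x)\sin(nx)\sin x}{x}\,dx$, and then rewrite $\sin(t\sin x)\sin x$ and $\sin(nx)$ using product-to-sum identities so that the $\sin(nx)$ oscillation can again be integrated by parts, each time gaining a factor of roughly $t/n$. This should yield a differential inequality of the shape
\[
\Bigl|\frac{d}{dt}C_n(t)\Bigr| \le \frac{c}{1-\gamma_1^2}\Bigl(1 + \frac{t}{n^2}\,|C_n(t)| + \dots\Bigr)
\]
or, after collecting all terms, a bound $|C_n'(t)| \le a + b\,|C_n(t)|$ with $a,b = O\bigl(1/(1-\gamma_1^2)\bigr)$ uniformly for $t \le \gamma_1 n$; the factor $1-\gamma_1^2$ appears because the geometric-type series in powers of $\gamma_1$ sums to $1/(1-\gamma_1^2)$ (the square arises from pairing two integration-by-parts steps, or equivalently from $\sin^2$). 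Applying Gronwall's lemma on the interval $[0,\gamma_1 n]$ and using $|C_n(0)| \le c$ (a tabulated integral, bounded independently of $n$) then gives
\[
|C_n(t)| \le \frac{c}{1-\gamma_1^2}\exp\Bigl(\frac{b\,t}{?}\Bigr),
\]
and since $t \le \gamma_1 n \le n$ the exponent is controlled by $1/(1-\gamma_1^2)$, producing exactly the stated bound.

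The main obstacle I anticipate is bookkeeping the boundary terms and the non-oscillatory ``error'' terms generated at each integration by parts so that they are genuinely $O\bigl(1/(1-\gamma_1^2)\bigr)$ and do not secretly depend on $n$ or $t$. In particular, one must verify that factors like $t\cos x/x$ and $1/x^2$, when multiplied by $(1-\cos nx)/n$, integrate against $1/x$ (or its powers) to quantities bounded uniformly in $n$ — this uses $|1-\cos nx| \le \min(2, n^2x^2/2)$ to tame the $x\to 0$ end and $t/n \le \gamma_1$ to keep the $t$-dependent coefficients bounded. A secondary technical point is justifying differentiation under the integral sign in $t$, which is routine since the integrand and its $t$-derivative are dominated on $[0,\pi/2]$ by integrable functions. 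Once these estimates are in place, the Gronwall step is immediate and the constant emerges in the claimed form.
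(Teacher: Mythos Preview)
Your high-level strategy --- exploit the dominant oscillation $\sin(nx)$ via integration by parts and close with Gronwall --- matches the paper. But the mechanism you sketch does not close, and differs from the paper in a way that matters.

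The paper does not differentiate $C_n$ in $t$ to reach a differential inequality. Instead it first regularizes the $x=0$ singularity by writing $C_n(t)=\int_0^{\pi/2} f_t(x)\sin(nx)\,dx + O(1)$ with $f_t(x)=(\cos(t\sin x)-1)/x$, so that $f_t(0)=0$ and ordinary integration by parts against $-\cos(nx)/n$ is legitimate. One pass of integration by parts produces a companion integral $S_n(t)=\int_0^{\pi/2}\frac{\sin(t\sin x)\cos(nx)}{x}\,dx$ with coefficient $t/n$, and a residual $\frac{1}{n}\int_0^{\pi/2}\frac{\cos(t\sin x)-1}{x^2}\cos(nx)\,dx$ that is shown to be $O(t)$, hence $O(t/n)=O(1)$. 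A second integration by parts on $S_n$ returns $C_n$ with coefficient $t/n$ \emph{plus} a new term $\frac{1}{n}F_n(t)$, where $F_n(t)=\int_0^{\pi/2}\frac{\sin(t\sin x)}{x^2}\sin(nx)\,dx$ satisfies $\frac{d}{dt}F_n(t)=O(1)+C_n(t)$. This is where the $t$-integral of $C_n$ enters: it is not from differentiating $C_n$, but from integrating $F_n'$. Substituting back gives the \emph{algebraic} identity
\[
\Bigl(1-\bigl(\tfrac{t}{n}\bigr)^2\Bigr)C_n(t)=O(1)-\frac{t}{n^2}\int_0^t C_n(s)\,ds,
\]
and Gronwall applied to this integral inequality yields the bound.

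The gap in your proposal is the Gronwall coefficient. A differential inequality $|C_n'(t)|\le a+b|C_n(t)|$ with $b$ of order $1/(1-\gamma_1^2)$, integrated over $[0,\gamma_1 n]$, gives an exponent of order $n/(1-\gamma_1^2)$, which blows up in $n$; your ``$?$'' in the exponent shows you felt this. What makes the paper's argument work is that the Gronwall kernel carries an explicit factor $1/n$ --- coming from the second integration by parts --- so the exponent is $\frac{t}{n(1-\gamma_1^2)}\le \frac{1}{1-\gamma_1^2}$. Your sketch never isolates this $1/n$; neither the $(1-\cos nx)/n$ antiderivative trick nor differentiating in $t$ will produce it without the two-step recursion that brings $C_n$ back onto itself.
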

\begin{proof} We have 
\begin{align*}
  C_{n}(t)&=\int_{0}^{\pi / 2}\frac{(\cos(t\sin x)-1)\sin(nx)}{x}dx+\int_{0}^{\pi / 2}\frac{\sin(nx)}{x}dx\\
  &=\int_{0}^{\pi / 2} f_{t}(x)\sin(nx)dx+O(1)
\end{align*}
where 
\[
f_{t}(x)=\frac{\cos(t\sin x)-1}{x}.
\]
Here and further we shall write $f(t,n,\gamma_1) = O(g(t,n,\gamma_1))$ if for all $t,n,\gamma_1$ such that $t/n\leqslant \gamma_1 \leqslant 1$ the following inequality holds
$$
|f(t,n,\gamma_1)| \leqslant w g(t,n,\gamma_1)
$$
for some constant $w$ not depending on $t,n,\gamma_1$. In other words, it is a well known notation, 
but with additional condition that the corresponding constant should not depend on our parameters.
Integrating by parts we get 
\begin{align*}
  C_{n}(t)&=-\frac{f_{t}(x)\cos(nx)}{n}\Big|_{0}^{\pi / 2}+\frac{1}{n}\int_{0}^{\pi / 2}f'_{t}(x)\cos(nx)dx\\
  &=\frac{1}{n} f_{t}\Bigl(\frac{\pi}{2}\Bigr)\cos\Bigl(\frac{n\pi}{2}\Bigr) +\frac{1}{n}\int_{0}^{\pi / 2}f'_{t}(x)\cos(nx)dx.
\end{align*}
In the latter equality we used that $f_{t}(0)=0$. Further on, we
have 
\[
f_{t}'(x)=-\frac{t\cos(x)\sin(t\sin x)}{x}-\frac{\cos(t\sin x)-1}{x^{2}}.
\]
Hence
\begin{align}
  C_{n}(t)&=O(1)-\frac{t}{n}\int_{0}^{\pi / 2}\frac{\cos(x)\sin(t\sin x)}{x}\cos(nx)dx\nonumber \\
  &\quad {} -\frac{1}{n}\int_{0}^{\pi / 2}\frac{\cos(t\sin x)-1}{x^{2}}\cos(nx)dx.\label{CntPartOne}
\end{align}
Let us show first that the last integral in (\ref{CntPartOne}) is
of the order $O(t)$. We have the inequalities 
\begin{align*}
  &\Biggl|\int_{0}^{\pi / 2}\frac{\cos(t\sin x)-1}{x^{2}}\cos(nx)dx\Biggr|\\
  &\quad 
               \leqslant\int_{0}^{\pi / 2}\frac{1-\cos(t\sin x)}{x^{2}}dx=2\int_{0}^{\pi / 2}\frac{\sin^{2}((t\sin x) /2)}{x^{2}}dx
\\
                                             &\quad =2\int_{0}^{\pi / 4}\frac{\sin^{2}((t\sin x)/2)}{x^{2}}dx+2\int_{\pi / 4}^{\pi / 2}\frac{\sin^{2}((t\sin x)/2)}{x^{2}}dx\\
  &\quad =2\int_{0}^{\pi / 4}\frac{\sin^{2}((t\sin x)/2)}{x^{2}}dx+O(1)
\\
                                                                         &\quad \leqslant2\int_{0}^{\pi / 4}\frac{\sin^{2}((t\sin x) / 2)}{\sin^{2}x}dx+O(1).
\end{align*}
                    Changing the variables           $u=\sin x$, we have
                    \begin{align*}
                                                                        2\int_{0}^{1 / \sqrt{2}}\frac{\sin^{2}(tu / 2)}{u^{2}\sqrt{1-u^{2}}}du+O(1)&\leqslant2\sqrt{2}
                                                                          \int_{0}^{1 / \sqrt{2}}\frac{\sin^{2}(tu/2)}{u^{2}}du+O(1)\\
  &=\sqrt{2}t\int_{0}^{t / (2\sqrt{2})}\frac{\sin^{2}(y)}{y^{2}}dy+O(1)
\\
                                                                              &=\sqrt{2}t(O(1)+\int_{1}^{t / (2\sqrt{2})}\frac{\sin^{2}(y)}{y^{2}}dy)+O(1)\\
  &\leqslant\sqrt{2}t(O(1)+\frac{1}{t}))+O(1)=O(t)+O(1).
\end{align*}
Now transform the first integral in (\ref{CntPartOne}) as follows:
\begin{align*}
\int_{0}^{\pi / 2}\frac{\cos(x)\sin(t\sin x)}{x}\cos(nx)dx&=\int_{0}^{\pi / 2}\frac{\cos(x)-1}{x}\sin(t\sin x)\cos(nx)dx\\
                                                          &\quad {} +\int_{0}^{\pi / 2}\frac{\sin(t\sin x)\cos(nx)}{x}dx\\
  &=O(1)+S_{n}(t),
\end{align*}
where 
\[
S_{n}(t)=\int_{0}^{\pi / 2}\frac{\sin(t\sin x)\cos(nx)}{x}dx.
\]
We used the fact that the function $[\cos(x)-1] / x$ is absolutely
integrable on $[0,\pi/2]$. From (\ref{CntPartOne}) we
have 
\begin{equation}
C_{n}(t)=O(1)+O\Bigl(\frac{t}{n}\Bigr)-\frac{t}{n}S_{n}(t)=O(1)-\frac{t}{n}S_{n}(t).\label{CntFirstPartSum}
\end{equation}
Now integrate $S_{n}(t)$ by parts: 
\begin{align*}
S_{n}(t)&=\frac{\sin(t\sin x)\sin(nx)}{nx}\Big|_{0}^{\pi / 2}-\frac{1}{n}\int_{0}^{\pi / 2}\Bigl(\frac{\sin(t\sin x)}{x}\Bigr)'\sin(nx)dx\\
&=O(1)-\frac{1}{n}\int_{0}^{\pi / 2}\Bigl(\frac{t\cos(x)\cos(t\sin x)}{x}-\frac{\sin(t\sin x)}{x^{2}}\Bigr)\sin(nx)dx
\\
        &=O(1)-\frac{t}{n}\int_{0}^{\pi / 2}\frac{\cos(x)\cos(t\sin x)}{x}\sin(nx)dx\\
  &\quad {} +\frac{1}{n}\int_{0}^{\pi / 2}\frac{\sin(t\sin x)}{x^{2}}\sin(nx)dx
\\
&=-\frac{t}{n}H_{n}(t)+\frac{1}{n}F_{n}(t).
\end{align*}
For the first integral we have: 
\begin{align*}
  H_{n}(t)&=\int_{0}^{\pi / 2}\frac{\cos(x)\cos(t\sin x)}{x}\sin(nx)dx\\
  &=\int_{0}^{\pi / 2}\frac{(\cos(x)-1)}{x}\cos(t\sin x)\sin(nx)dx
\\
          &\quad {} +\int_{0}^{\pi / 2}\frac{\cos(t\sin x)\sin(nx)}{x}dx\\
  &=O(1)+C_{n}(t).
\end{align*}
Note that for the second integral 
\[
F_{n}(t)=\int_{0}^{\pi / 2}\frac{\sin(t\sin x)}{x^{2}}\sin(nx)dx
\]
the following equalities hold: 
\[
\frac{d}{dt}F_{n}(t)=\int_{0}^{\pi / 2}\frac{\sin(x)}{x}\frac{\cos(t\sin x)\sin(nx)}{x}dx=O(1)+C_{n}(t).
\]
It follows that 
\[
F_{n}(t)=O(t)+\int_{0}^{t}C_{n}(s)ds.
\]
Thus, we have got the equality 
\[
S_{n}(t)=O\Bigl(\frac{t}{n}\Bigr)-\frac{t}{n}C_{n}(t)+\frac{1}{n}\int_{0}^{t}C_{n}(s)ds=O(1)-\frac{t}{n}C_{n}(t)+\frac{1}{n}\int_{0}^{t}C_{n}(s)ds .
\]
Substitute it to (\ref{CntFirstPartSum}) and get 
\[
C_{n}(t)=O(1)+\Bigl(\frac{t}{n}\Bigr)^{2}C_{n}(t)-\frac{t}{n^{2}}\int_{0}^{t}C_{n}(t)ds.
\]
Thus for $\gamma_{n}(t)=t / n \ne1$ we get 
\[
C_{n}(t)=\frac{1}{1-(t / n)^{2}}\Biggl(O(1)-\frac{t}{n^{2}}\int_{0}^{t}C_{n}(t)ds\Biggr).
\]
We have that if $\gamma_{n}(t)<\gamma_{1}<1$, then for some constant
$c>0$,  
\[
|C_{n}(t)|\leqslant\frac{1}{1-\gamma_{1}^{2}}\Biggl(c+\frac{1}{n}\int_{0}^{t}|C_{n}(t)|ds\Biggr).
\]
From Gronwall's lemma we get that 
\[
  |C_{n}(t)|\leqslant\frac{c}{1-\gamma_{1}^{2}}\exp\Bigl(\frac{t}{n(1-\gamma_{1}^{2})}\Bigr)\leqslant\frac{c}{1-\gamma_{1}^{2}}
  \exp\Bigl(\frac{1}{1-\gamma_{1}^{2}}\Bigr).
\]
Lemma \ref{lessonelemma} is thus proved.
\end{proof}

\subsection{Uniform boundedness for $\boldsymbol{\gamma_{n}(t)\geqslant\gamma_{2}>1}$}

\begin{lemma}[Domain $\boldsymbol{\gamma_{n}(t)\geqslant\gamma_{2}>1}$]
\label{greaterOnelemma} There exists a constant $c>0$ such that for
all $\gamma_{2}>1$, all $n>0$ and $t>0$, satisfying condition $\gamma_{n}(t)=t/n\geqslant\gamma_{2}$,
the following inequality holds: 
\[
|C_{n}(t)|\leqslant c\frac{\gamma_{2}^{2}}{\gamma_{2}^{2}-1}\exp\Bigl(\frac{\gamma_{2}}{\gamma_{2}^{2}-1}\Bigr) .
\]
\end{lemma}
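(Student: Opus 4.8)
The plan is to run the argument of Lemma \ref{lessonelemma} almost verbatim, with $\gamma=\gamma_n(t)=t/n$ now bounded below rather than above. As there, I would reduce the claim to the uniform boundedness of
\[
C_n(t)=\int_0^{\pi/2}\frac{\cos(t\sin x)\sin(nx)}{x}\,dx
\]
and carry out the same sequence of integrations by parts, introducing the same auxiliary integrals $S_n,H_n,F_n$, using $H_n(t)=O(1)+C_n(t)$, $\tfrac{d}{dt}F_n(t)=O(1)+C_n(t)$, and the auxiliary estimate $\int_0^{\pi/2}\frac{\cos(t\sin x)-1}{x^2}\cos(nx)\,dx=O(t)+O(1)$ (all uniform in $n$, exactly as in Lemma \ref{lessonelemma}). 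The only change of bookkeeping is that factors $t/n$ can no longer be swallowed into $O(1)$; keeping track of them one still arrives at the self-referential identity
\[
C_n(t)\bigl(1-\gamma^2\bigr)=O(\gamma^2)-\frac{t}{n^2}\int_0^t C_n(s)\,ds ,
\]
the constant term now being $O(\gamma^2)$ rather than $O(1)$. Since $\gamma\ge\gamma_2>1$ we may divide by $1-\gamma^2$, and the fact that $\frac{\gamma^2}{\gamma^2-1}$ and $\frac{t}{t^2-n^2}=\frac{\gamma}{n(\gamma^2-1)}$ are decreasing in $\gamma$ on $(1,\infty)$ — hence bounded on the whole domain by their values $\frac{\gamma_2^2}{\gamma_2^2-1}$ and $\frac{\gamma_2}{n(\gamma_2^2-1)}$ at $\gamma=\gamma_2$ — is what produces the shape of the announced constant.

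The step that genuinely differs is the treatment of the Volterra term $\int_0^tC_n(s)\,ds$. In Lemma \ref{lessonelemma} it could be handled by Gronwall's inequality because $t\le\gamma_1 n$ controlled both the coefficient and the length of the time interval; here $t$ is unbounded, and a naive Gronwall closure would only give $|C_n(t)|=O(\ln\gamma)$. Instead I would estimate the term directly: by Fubini,
\[
\int_0^t C_n(s)\,ds=\int_0^{\pi/2}\frac{\sin(nx)}{x\sin x}\,\sin(t\sin x)\,dx ,
\]
and splitting the interval at $x\sim 1/t$ and $x\sim 1/n$ and using $|\sin(nx)|\le nx$ on the first pieces, $|\sin(t\sin x)|\le\min(1,t\sin x)$, and $\tfrac1{\sin x}\le\tfrac{\pi}{2x}$ on $(0,\pi/2]$, one gets $\bigl|\int_0^t C_n(s)\,ds\bigr|\le c\,n\ln\gamma$ uniformly in $n$. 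Then $\frac{t}{n^2}\bigl|\int_0^tC_n(s)\,ds\bigr|\le c\,\gamma\ln\gamma\le c\,\gamma^2$, so the right-hand side of the self-referential identity is $O(\gamma^2)$, and dividing by $\gamma^2-1$ yields $|C_n(t)|\le c\,\frac{\gamma_2^2}{\gamma_2^2-1}$; the exponential factor in the statement is additional slack one could equally pick up by carrying the Gronwall step along as in the previous lemma.

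The main obstacle is precisely this: mirroring Lemma \ref{lessonelemma} one lands on a Volterra-type relation, but because $t$ now ranges over $[\gamma_2 n,\infty)$ a Gronwall closure in the style of that lemma no longer suffices, and the workable substitute is the oscillation-respecting bound on $\int_0^{\pi/2}\frac{\sin(nx)\sin(t\sin x)}{x\sin x}\,dx$ above. The delicate point inside that bound is the weight $\tfrac1x$ near $x=0$ inherited from $\tfrac{\sin(nx)}{x}$: it is tied to the frequency $n$, not to $t$, which is why one must cut at $x\sim 1/n$, and why the crude estimate $O(nt)$ for the integral is useless while $O(n\ln\gamma)$ is exactly enough.
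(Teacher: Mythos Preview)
Your approach is correct and genuinely different from the paper's. You reuse the integration-by-parts scheme of Lemma~\ref{lessonelemma} (pulling out factors $1/n$ from the $\sin(nx)$ oscillation) to reach the self-referential identity $C_n(t)(1-\gamma^2)=O(\gamma^2)-\tfrac{t}{n^2}\int_0^t C_n(s)\,ds$, and then, recognizing that Gronwall fails because $t$ is unbounded, close by the direct $O(n(1+\ln\gamma))$ bound on $\int_0^t C_n(s)\,ds=\int_0^{\pi/2}\tfrac{\sin(nx)\sin(t\sin x)}{x\sin x}\,dx$ via the three-piece split at $x\sim 1/t$ and $x\sim 1/n$. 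Since $\gamma(1+\ln\gamma)\le\gamma^2$ for $\gamma\ge 1$, this yields $|C_n(t)|\le c\,\gamma^2/(\gamma^2-1)\le c\,\gamma_2^2/(\gamma_2^2-1)$, which is in fact slightly sharper than the stated bound (the exponential factor is slack).

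The paper instead exploits the duality between $n$ and $t$. Since now $t\gg n$, the dominant oscillation is $\cos(t\sin x)$, so it first restricts to $[0,\pi/4]$ (to keep $\cos x$ bounded away from zero) and integrates by parts via $\cos(t\sin x)\,dx=\tfrac{1}{t\cos x}\,d(\sin(t\sin x))$, pulling out factors $1/t$ rather than $1/n$. The resulting auxiliary integrals $K_n,U_n,Z_n$ lead to a self-referential identity in which the Volterra term is $\int_0^n\tilde C_m(t)\,dm$ --- an integral over the (now continuous) index $m$, not over time. Since $m$ ranges only over $[0,n]$ with $n\le t/\gamma_2$, Gronwall in $m$ closes directly and produces the exponential factor $\exp\bigl(\gamma_2/(\gamma_2^2-1)\bigr)$. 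Thus the paper's route is the mirror image of Lemma~\ref{lessonelemma} with the roles of $n$ and $t$ swapped, while yours is a continuation of Lemma~\ref{lessonelemma} supplemented by one oscillatory-integral estimate; both are valid, and yours recycles more of the existing machinery at the cost of that extra estimate.
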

\begin{proof}
  Change first the integration domain 
\begin{align*}
C_{n}(t)&=\int_{0}^{\pi / 4}\frac{\cos(t\sin x)\sin(nx)}{x}dx+\int_{\pi / 4}^{\pi / 2}\frac{\cos(t\sin x)\sin(nx)}{x}dx
\\
&=\tilde{C}_{n}(t)+O(1),
\end{align*}
where 
\[
\tilde{C}_{n}(t)=\int_{0}^{\pi / 4}\frac{\cos(t\sin x)\sin(nx)}{x}dx.
\]
Integrating by parts we get 
\begin{align}
\tilde{C}_{n}(t)&=\frac{\sin(t\sin x)}{t\cos x}\frac{\sin(nx)}{x}\Big|_{0}^{\pi / 4}-\frac{1}{t}\int_{0}^{\pi / 4}\sin(t\sin x)\left(\frac{\sin(nx)}{x\cos x}\right)'dx
\nonumber \\
&=O(1)-\frac{1}{t}\int_{0}^{\pi / 4}\sin(t\sin x)\left(\frac{n\cos(nx)}{x\cos x}-\frac{\sin(nx)}{x^{2}\cos x}+\frac{\sin(nx)\sin x}{x\cos^{2}x}\right)dx \nonumber
\\
&=O(1)-\frac{n}{t}\int_{0}^{\pi / 4}\sin(t\sin x)\frac{\cos(nx)}{x\cos x}dx+\frac{1}{t}\int_{0}^{\pi / 4}\sin(t\sin x)\frac{\sin(nx)}{x^{2}\cos x}dx
\nonumber
\\
&=O(1)-\frac{n}{t}K_{n}(t)+\frac{1}{t}U_{n}(t) \label{C_tilde_n_t}
\end{align}
where 
\[
K_{n}(t)=\int_{0}^{\pi / 4}\sin(t\sin x)\frac{\cos(nx)}{x\cos x}dx,\quad U_{n}(t)=\int_{0}^{\pi / 4}\sin(t\sin x)\frac{\sin(nx)}{x^{2}\cos x}dx.
\]
In the equality (\ref{C_tilde_n_t}) we used the fact that 
\[
\Bigl|\sin(t\sin x)\frac{\sin(nx)\sin x}{x\cos^{2}x}\Bigr|=\Bigl|\sin(t\sin x)\frac{\sin(nx)}{\cos^{2}x}\frac{\sin x}{x}\Bigr|\leqslant 1
\]
for $x\in[0,\pi/4]$ and 
\[
\frac{1}{t}\leqslant\frac{1}{\gamma_{2}n}\leqslant1.
\]
Thus we got the expansion 
\begin{equation}
\tilde{C}_{n}(t)=O(1)-\frac{n}{t}K_{n}(t)+\frac{1}{t}U_{n}(t).\label{tildecnt}
\end{equation}
We will now estimate $K_{n}(t)$ and $U_{n}(t)$. We use equalities
\begin{align*}
U_{n}(t)&=\int_{0}^{\pi / 4}\sin(t\sin x)\frac{\sin(nx)}{x^{2}}\Bigl(\frac{1}{\cos x}-1\Bigr)dx+\int_{0}^{\pi / 4}\sin(t\sin x)\frac{\sin(nx)}{x^{2}}dx
\\
        &=O(1)+\int_{0}^{\pi / 4}\sin(t\sin x)\frac{\sin(nx)}{\sin^{2}x}\frac{\sin^{2}x}{x^{2}}dx\\
  &=O(1)+\int_{0}^{\pi / 4}\sin(t\sin x)\frac{\sin(nx)}{\sin^{2}x}dx
\\
&=O(1)+\int_{0}^{\pi / 4}\frac{\sin(t\sin x)}{\sin x}\frac{\sin(nx)}{\sin x}dx
\\
        &=O(1)+\int_{0}^{\pi / 4}\Bigl(\frac{\sin(t\sin x)}{\sin x}-\frac{t}{1+(t\sin x)^{2}}\Bigr)\frac{\sin(nx)}{\sin x}dx\\
  &\quad {} +\int_{0}^{\frac{\pi}{4}}\frac{t}{1+(t\sin x)^{2}}\frac{\sin(nx)}{\sin x}dx
\\
&=O(1)+U_{n}^{1}(t)+U_{n}^{2}(t),
\end{align*}
where $U^{1}(t)$ and $U_{n}^{2}(t)$ are correspondingly the first
and second integrals in the latter formula.

For $U_{n}^{1}(t)$ we have 
\begin{align*}
|U_{n}^{1}(t)|&\leqslant\int_{0}^{\pi / 4}\Bigl|\frac{\sin(t\sin x)}{\sin x}-\frac{t}{1+(t\sin x)^{2}}\Bigr|\frac{1}{\sin x}dx
\\
&=\int_{0}^{1 / \sqrt{2}}\Bigl|\frac{\sin(ty)}{y}-\frac{t}{1+(ty)^{2}}\Bigr| \frac{1}{y\sqrt{1-y^{2}}}dy\\
  &\leqslant\sqrt{2}\int_{0}^{1 / \sqrt{2}} \Bigl|\frac{\sin(ty)}{y}-\frac{t}{1+(ty)^{2}}\Bigr| \frac{1}{y}dy\\
&=\sqrt{2}t\int_{0}^{t / \sqrt{2}} \Bigl|\frac{\sin u}{u}-\frac{1}{1+u^{2}}\Bigr| \frac{1}{u}du\\
  &\leqslant\sqrt{2}t\int_{0}^{+\infty}\Bigl|\frac{\sin u}{u}-\frac{1}{1+u^{2}}\Bigr|\frac{1}{u}du.
\end{align*}
Convergence of the last integral follows because the integrand has the 
order $u^{-2}$ at infinity, and the order $u$ at zero. Thus we
get that 
\[
U_{n}^{1}(t)=O(t).
\]
Now let us estimate $U_{n}^{2}(t)$. We will need the known inequality
$\sin x\geqslant 2x / \pi $ that holds for any $x\in[0, \pi / 2]$.
We have 
\begin{align*}
  |U_{n}^{2}(t)|&\leqslant nt\int_{0}^{\pi / 4}\frac{t}{1+(t\sin x)^{2}}\frac{x}{\sin x}dx
                  \leqslant\pi nt\int_{0}^{\pi / 4}\frac{1}{1+4(tx)^{2} / \pi^{2}}dx\\
  &=\pi n\int_{0}^{t\pi / 4}\frac{1}{1+ 4u^{2} / \pi^{2} }du
\leqslant\pi n\int_{0}^{+\infty}\frac{1}{1+4 u^{2} / \pi^{2}}du.
\end{align*}
Consequently, 
\[
U_{n}^{2}(t)=O(n),
\]
\[
U_{n}(t)=O(1)+O(t)+O(n).
\]
Now consider in detail $K_{n}(t)$: 
\begin{align*}
  K_{n}(t)&=\int_{0}^{\pi / 4}\sin(t\sin x)\frac{\cos(nx)}{x}dx+O(1)\\
  &=\int_{0}^{\pi / 4}\sin(t\sin x)\frac{\cos(nx)-1}{x}dx
    +\int_{0}^{\pi / 4}\frac{\sin(t\sin x)}{x}dx+O(1)\\
  &=K_{n}^{1}(t)+K_{n}^{2}(t)+O(1),
\end{align*}
where $K_{n}^{1}(t)$ and $K_{n}^{2}(t)$ are correspondingly the
first and second integrals in the last formula. Let us show that $K_{n}^{2}(t)=O(1)$:
\begin{align*}
  K_{n}^{2}(t)&=\int_{0}^{\pi / 4}\frac{\sin(t\sin x)}{x}dx=\int_{0}^{\pi / 4}\frac{\sin(t\sin x)}{\sin x}dx+O(1)\\
  &=\int_{0}^{1 / \sqrt{2}}\frac{\sin(ty)}{y\sqrt{1-y^{2}}}dy+O(1)
    \\
&=\int_{0}^{1 / \sqrt{2}} \frac{\sin(ty)}{y}\Bigl(\frac{1}{\sqrt{1-y^{2}}}-1\Bigr)dy+\int_{0}^{1 / \sqrt{2}} \frac{\sin(ty)}{y}dy+O(1)=O(1).
\end{align*}
Now we integrate by parts $K_{n}^{1}(t)$: 
\begin{align*}
  K_{n}^{1}(t)&=\int\limits_{0}^{\pi / 4}\sin(t\sin x)\frac{\cos(nx)-1}{x}dx\\
  &=-\frac{\cos(t\sin x)}{t\cos x}\frac{\cos(nx)-1}{x}\Big|_{0}^{\pi / 4}
    +\frac{1}{t}\int\limits_{0}^{\pi / 4}\cos(t\sin x)\frac{d}{dx}\frac{\cos(nx)-1}{x\cos x}dx\\
  &=O(1)-\frac{n}{t}\int\limits_{0}^{\pi / 4}\cos(t\sin x)\frac{\sin(nx)}{x\cos x}dx
    -\frac{1}{t}\int\limits_{0}^{\pi / 4}\cos(t\sin x)\frac{\cos(nx)-1}{x^{2}\cos x}dx\\
  &\quad {} +\frac{1}{t}\int\limits_{0}^{\pi / 4}\cos(t\sin x)\frac{\cos(nx)-1}{x\cos^{2}x}\sin xdx.
\end{align*}
The first integral in the last expression equals $\tilde{C}_{n}(t)+O(1)$,
the third integral has order $O(1)$. Thus 
\[
K_{n}^{1}(t)=O(1)-\frac{n}{t}\tilde{C}_{n}(t)-\frac{1}{t}Z_{n}(t),
\]
where 
\[
Z_{n}(t)=\int_{0}^{\pi / 4}\cos(t\sin x)\frac{\cos(nx)-1}{x^{2}}dx.
\]
Note that the number $n>0$ could be considered as an arbitrary real number because
we never used the fact that $n$ is an integer. Then 
\[
\frac{d}{dn}Z_{n}(t)=-\int_{0}^{\pi / 4}\cos(t\sin x)\frac{\sin(nx)}{x}dx=-\tilde{C}_{n}(t)
\]
and 
\[
Z_{n}(t)=O(1)-\int_{0}^{n}\tilde{C}_{m}(t)dm.
\]
Finally we get 
\[
K_{n}(t)=O(1)-\frac{n}{t}\tilde{C}_{n}(t)+\frac{1}{t}\int_{0}^{n}\tilde{C}_{m}(t)dm.
\]
Substitute formulas for $K_{n}(t)$ and $U_{n}(t)$ to the formula
(\ref{tildecnt}) for $\tilde{C}_{n}(t)$: 
\[
\tilde{C}_{n}(t)=O(1)+\left(\frac{n}{t}\right)^{2}\tilde{C}_{n}(t)-\frac{n}{t^{2}}\int_{0}^{n}\tilde{C}_{m}(t)dm.
\]
It follows that for $t / n \geqslant\gamma_{2}>1$, 
\begin{align*}
  |\tilde{C}_{n}(t)|&=\Biggl|\frac{1}{1-(n / t)^{2}}\Biggl(O(1)-\frac{n}{t^{2}}\int_{0}^{n}\tilde{C}_{m}(t)dm\Biggr)\Biggr| \\
&  \leqslant\frac{1}{1- \gamma_{2}^{-2}}\Biggl(O(1)+\frac{1}{t}\int_{0}^{n}|\tilde{C}_{m}(t)|dm\Biggr).
\end{align*}
Gronwall's lemma gives 
\[
  |\tilde{C}_{n}(t)|\leqslant\frac{c}{1- \gamma_{2}^{-2}}
  \exp\Bigl(\frac{1}{1- \gamma_{2}^{-2}}\frac{n}{t}\Bigr)\leqslant\frac{c\gamma_{2}^{2}}{\gamma_{2}^{2}-1}\exp\Bigl(\frac{\gamma_{2}}{\gamma_{2}^{2}-1}\Bigr)
\]
for some constant $c>0$ not depending on $n,t,\gamma_{2}$. Lemma
\ref{greaterOnelemma} is proved.
\end{proof}

\subsection{Uniform boundedness for $\boldsymbol{\gamma_{1}<\gamma_{n}(t)<\gamma_{2}}$. Preliminary
lemmas}

Using the formula for the product $\sin \, \cos$ we
can rewrite the formula for $\tilde{C}_{n}(t)$ as 
\begin{align*}
  \tilde{C}_{n}(t)&=\frac{1}{2}\Biggl(\int_{0}^{\pi / 4}\frac{\sin(nx+t\sin x)}{x}dx+\int_{0}^{\pi / 4}\frac{\sin(nx-t\sin x)}{x}dx\Biggr)\\
&  =\frac{1}{2}\bigl(R_{n}(t)+M_{n}(t)\bigr),
\end{align*}
where 
\[
R_{n}(t)=\int_{0}^{\pi / 4}\frac{\sin(nx+t\sin x)}{x}dx,\quad M_{n}(t)=\int_{0}^{\pi / 4}\frac{\sin(nx-t\sin x)}{x}dx.
\]
\begin{lemma} \label{rntemestimate} There exists a constant $\alpha>0$
such that for all $t\geqslant0$ and real $n>0$ the following inequality
holds 
\[
|R_{n}(t)|\leqslant\alpha.
\]
\end{lemma}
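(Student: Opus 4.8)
The plan is to exploit that the phase $\psi(x)=nx+t\sin x$ has \emph{no} stationary point on $[0,\pi/4]$: since $\cos x\geqslant 1/\sqrt2$ there, $\psi'(x)=n+t\cos x\geqslant n+t/\sqrt2\geqslant (n+t)/\sqrt2>0$, while at the same time $\psi(x)\leqslant(n+t)x$ because $\sin x\leqslant x$. So $R_n(t)$ is a non‑oscillatory integral against the weight $1/x$, and the only genuine difficulty is that near $x=0$ the integrand $\sin\psi(x)/x$ is of size $\sim n+t$, so a crude pointwise bound is useless precisely where one integration by parts would leave a boundary term $\cos\psi(x)/(x\psi'(x))$ that blows up as $x\to0$. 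The remedy is a parameter‑matched cutoff.

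First I would dispose of the regime $n+t\leqslant 4/\pi$: here $|\sin\psi(x)|\leqslant\psi(x)\leqslant(n+t)x$ gives at once $|R_n(t)|\leqslant(n+t)\cdot\pi/4\leqslant 1$.

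In the main regime $n+t>4/\pi$ I would set $\delta=1/(n+t)\in(0,\pi/4)$ and split $R_n(t)=\int_0^\delta+\int_\delta^{\pi/4}$. On $[0,\delta]$ the same crude bound gives $\bigl|\int_0^\delta\bigr|\leqslant(n+t)\delta=1$. On $[\delta,\pi/4]$ I would integrate by parts using $\sin\psi=-(\cos\psi)'/\psi'$, so that
\[
\int_\delta^{\pi/4}\frac{\sin\psi(x)}{x}\,dx=-\Bigl[\frac{\cos\psi(x)}{x\psi'(x)}\Bigr]_\delta^{\pi/4}+\int_\delta^{\pi/4}\cos\psi(x)\Bigl(\frac{1}{x\psi'(x)}\Bigr)'dx.
\]
The boundary terms are controlled by $(\pi/4)\psi'(\pi/4)\geqslant(\pi/4)(n+t)/\sqrt2>1/\sqrt2$ and $\delta\,\psi'(\delta)\geqslant\delta(n+t)/\sqrt2=1/\sqrt2$, hence each is $\leqslant\sqrt2$. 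For the last integral I would use $\bigl(\tfrac{1}{x\psi'}\bigr)'=-\tfrac{\psi'+x\psi''}{x^2\psi'^2}$ and $|\cos\psi|\leqslant1$ to get
\[
\Bigl|\int_\delta^{\pi/4}\cos\psi\Bigl(\frac{1}{x\psi'}\Bigr)'dx\Bigr|\leqslant\int_\delta^{\pi/4}\frac{dx}{x^2\psi'(x)}+\int_\delta^{\pi/4}\frac{|\psi''(x)|}{x\psi'(x)^2}\,dx .
\]
In the first term $\psi'(x)\geqslant(n+t)/\sqrt2$ and $\int_\delta^{\pi/4}x^{-2}dx<1/\delta=n+t$ give the bound $\sqrt2$; in the second, $|\psi''(x)|=t\sin x\leqslant tx$ reduces the integrand to $t/\psi'(x)^2\leqslant 2t/(n+t)^2\leqslant 2/(n+t)<\pi/2$, so that integral is $\leqslant\pi^2/8$. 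Summing the pieces yields $|R_n(t)|\leqslant 1+3\sqrt2+\pi^2/8$, which may be taken as $\alpha$.

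The only obstacle is thus the interaction of the endpoint singularity with the size of $n+t$, and the whole proof hinges on choosing the cutoff at the scale $\delta=1/(n+t)$ where $\psi$ becomes $O(1)$: below it the trivial estimate works, above it a single integration by parts works, and no further oscillatory cancellation is needed. (This is exactly why $R_n$ is listed among the \emph{preliminary} lemmas and is far easier than the companion estimate for $M_n(t)=\int_0^{\pi/4}x^{-1}\sin(nx-t\sin x)\,dx$, whose phase has derivative $n-t\cos x$ that does vanish when $t/n\in[1,\sqrt2]$.)
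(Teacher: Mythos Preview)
Your proof is correct and complete; all the estimates check. The approach, however, differs genuinely from the paper's.

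The paper does \emph{not} split the integral at a parameter-matched scale. Instead it straightens the phase by the change of variables $u=\psi(x)=nx+t\sin x$, writing
\[
R_n(t)=\int_0^a\frac{\sin u}{u}\,h(u)\,du,\qquad h(u)=\frac{u}{\varphi(u)}\varphi'(u),\ \ \varphi=\psi^{-1},\ \ a=\psi(\pi/4),
\]
so that $R_n(t)=\mathrm{Si}(a)+\int_0^a\frac{h(u)-1}{u}\sin u\,du$. The first piece is the sine integral; the second is controlled via a technical lemma (Lemma~\ref{techLemmah}) bounding $|(h(u)-1)/u|$ in terms of $\sup|\varphi''|/(\inf\varphi')^2$, and explicit bounds $\varphi'\asymp(n+t)^{-1}$, $|\varphi''|\lesssim t(n+t)^{-3}$ give $|(h(u)-1)/u|\leqslant 12t/(n+t)^2$, hence a bound $12$ after multiplying by $a\leqslant n+t$.

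Your argument is the classical nonstationary-phase estimate with an endpoint cutoff and a single integration by parts; it is shorter and needs no auxiliary lemma. The paper's change-of-variables approach, on the other hand, is the template that gets reused (with considerably more work) in Lemmas~\ref{gammalessonelemma} and~\ref{gammagreateronelemma} for the harder integral $M_n(t)$, where the phase \emph{does} degenerate; so the authors are setting up machinery they will need anyway. Either method is perfectly adequate here.
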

\begin{proof} Denote 
$
f(x)=nx+t\sin x
$
and note that 
\[
f'(x)=\frac{d}{dx}f(x)=n+t\cos x>0
\]
for $x\in[0,\pi/4]$. Denote 
\[
a=f\Bigl(\frac{\pi}{4}\Bigr)=\frac{\pi n}{4}+\frac{t}{\sqrt{2}}
\]
the maximal value of $f(x)$ on $[0,\pi / 4]$. Since $f(x)$ is monotone
increasing on $[0, \pi / 4]$, there exists a smooth function $\varphi:[0,a]\to[0,\pi/4]$
such that 
\[
f(\varphi(u))=u
\]
for all $u\in[0,a]$. Thus we can change variables $x=\varphi(u)$
in the integral $R_{n}(t)$: 
\[
R_{n}(t)=\int_{0}^{a}\frac{\sin u}{\varphi(u)}\varphi'(u)du=\int_{0}^{a}\frac{\sin u}{u}h(u)du,
\]
where we denoted 
\[
h(u)=\frac{u}{\varphi(u)}\varphi'(u).
\]
It is obvious that $h(0)=1$. Then 
\begin{align}
  R_{n}(t)&=\int_{0}^{a}\frac{\sin u}{u}(h(u)-h(0))du+h(0)\int_{0}^{a}\frac{\sin u}{u}du \nonumber \\
  &=\int_{0}^{a}\frac{h(u)-h(0)}{u}\sin u\ du+\mathrm{Si}(a),\label{Rnteq}
\end{align}
where 
\[
\mathrm{Si}(x)=\int_{0}^{x}\frac{\sin u}{u}du
\]
is the integral sine. It is well known that $\lim_{x\rightarrow+\infty}\mathrm{Si}(x)=\pi / 2$.
Also, since $\mathrm{Si}(x)$ is continuous, it is bounded on $[0,+\infty)$.

Let us estimate the fractional term in the first integral containing
$h(u)$. We want to use Lemma \ref{techLemmah}. For this we need
the following estimates of $\varphi(u)$: 
\[
\varphi'(u)=\frac{1}{f'(\varphi(u))}=\frac{1}{n+t\cos\varphi(u)}\leqslant\frac{1}{n+\frac{t}{\sqrt{2}}}\leqslant\frac{\sqrt{2}}{n+t}.
\]
The lower bound is evident: 
\[
\varphi'(u)\geqslant\frac{1}{n+t}.
\]
Let us estimate the second derivative 
\[
\varphi''(u)=-\frac{f''(\varphi(u))\varphi'(u)}{(f'(\varphi(u)))^{2}}=\frac{t\sin\varphi(u)}{(f'(\varphi(u)))^{3}}\leqslant\frac{t}{(n+\frac{t}{\sqrt{2}})^{3}}\leqslant(\sqrt{2})^{3}\frac{t}{(n+t)^{3}}.
\]
Then from Lemma \ref{techLemmah} we get the following inequality: 
\[
\Bigl|\frac{h(u)-h(0)}{u}\Bigr|\leqslant4\frac{t}{(n+t)^{3}}\frac{3 / (n+t)}{(n+t)^{-2}}=\frac{12t}{(n+t)^{2}}.
\]
Hence the integral in  (\ref{Rnteq}) can be estimated
as follows: 
\[
\Biggl|\int_{0}^{a}\frac{h(u)-h(0)}{u}\sin u\ du\Biggr| \leqslant\frac{12at}{(n+t)^{2}}=12t\frac{\pi n / 4 + t / \sqrt{2}}{(n+t)^{2}}\leqslant\frac{12t}{n+t}\leqslant12.
\]
Thereby Lemma \ref{rntemestimate} is proved.
\end{proof}

\begin{lemma} \label{techLemmah} Let a real function $\phi(u)\in C^{2}([0,a])$
be given, $ a>0$. Assume also that for some positive constants $c_{1},c_{2},c_{3}$
and all $u\in[0,a]$ we have: 
\[
0<c_{1}\leqslant\phi'(u)\leqslant c_{2},
\]
\[
|\phi''(u)|\leqslant c_{3},
\]
and also $\phi(0)=0$. Then for the function 
\[
h(u)=\frac{u}{\phi(u)}\phi'(u)
\]
we have $h(0)=1$, and for all $u\in[0,a]$ the following inequality
holds 
\[
\Bigl|\frac{h(u)-h(0)}{u}\Bigr|\leqslant c_{3}\frac{c_{1}+c_{2}}{c_{1}^{2}}.
\]
\end{lemma}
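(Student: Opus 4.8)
The plan is to reduce everything to the single auxiliary function $g(u) = u\phi'(u) - \phi(u)$. First I would record the elementary structural facts: since $\phi' \geq c_1 > 0$ and $\phi(0) = 0$, the function $\phi$ is strictly increasing, so $\phi(u) > 0$ for $u \in (0,a]$ and $h(u)$ is well defined there; moreover $\phi(u) = \int_0^u \phi'(s)\,ds \geq c_1 u$, which will serve as the lower bound for the denominator. The value $h(0) = 1$ follows because $\phi(u)/u \to \phi'(0)$ as $u \to 0$ (with $\phi'(0) \geq c_1 > 0$), hence $h(u) = u\phi'(u)/\phi(u) \to \phi'(0)/\phi'(0) = 1$, so $h$ extends continuously to $[0,a]$ with $h(0) = 1$, and the quantity $\frac{h(u)-h(0)}{u}$ makes sense.

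Next I would exploit the algebraic identity
\[
h(u) - h(0) = \frac{u\phi'(u)}{\phi(u)} - 1 = \frac{u\phi'(u) - \phi(u)}{\phi(u)} = \frac{g(u)}{\phi(u)},
\qquad
\frac{h(u) - h(0)}{u} = \frac{g(u)}{u\,\phi(u)}.
\]
The reason for introducing $g$ is that it vanishes at the origin with a clean derivative: $g(0) = 0$ and $g'(u) = \phi'(u) + u\phi''(u) - \phi'(u) = u\phi''(u)$. Therefore $g(u) = \int_0^u s\phi''(s)\,ds$, whence $|g(u)| \leq c_3 \int_0^u s\,ds = \tfrac12 c_3 u^2$ (or, via the mean value theorem, $|g(u)| \leq |g'(\eta)|\,u \leq c_3 u^2$ for some $\eta \in (0,u)$). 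Combining this with $u\,\phi(u) \geq c_1 u^2$ gives
\[
\Bigl|\frac{h(u) - h(0)}{u}\Bigr| \leq \frac{\tfrac12 c_3 u^2}{c_1 u^2} = \frac{c_3}{2c_1} \leq c_3\,\frac{c_1 + c_2}{c_1^2},
\]
the last inequality being immediate from $c_1^2 \leq 2c_1(c_1 + c_2)$. If one prefers to land exactly on the stated constant $c_3(c_1+c_2)/c_1^2$ by an argument that genuinely uses $c_2$, one can instead apply the mean value theorem to $h$ itself: $\frac{h(u)-h(0)}{u} = h'(\xi)$ for some $\xi \in (0,u)$, and then $h'(\xi) = -\phi'(\xi)g(\xi)/\phi(\xi)^2 + \xi\phi''(\xi)/\phi(\xi)$, whose two terms are bounded by $c_2 c_3/c_1^2$ and $c_3/c_1$ respectively (using the cruder estimate $|g| \leq c_3 u^2$).

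There is no genuine obstacle here; the lemma is elementary. The two points to be careful about are (i) establishing $h(0) = 1$ and the continuous extension \emph{before} manipulating the difference quotient $\frac{h(u)-h(0)}{u}$, and (ii) extracting the uniform lower bound $\phi(u) \geq c_1 u$ from $\phi(0) = 0$ — each is one line. The only substantive idea, modest as it is, is to pass to $g = u\phi' - \phi$ and notice $g' = u\phi''$, which converts the bound into a routine estimate of $\int_0^u s\phi''(s)\,ds$.
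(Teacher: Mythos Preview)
Your proof is correct. Your primary route---writing $\frac{h(u)-1}{u}=\frac{g(u)}{u\,\phi(u)}$ with $g(u)=u\phi'(u)-\phi(u)$, then using $g'(u)=u\phi''(u)$ to get $|g(u)|\leq\tfrac12 c_3 u^2$ and $\phi(u)\geq c_1 u$---is somewhat different from the paper's and in fact sharper: you obtain $c_3/(2c_1)$, which you then bound above by the stated constant. The paper instead bounds the difference quotient by $\max_{[0,a]}|h'|$, introduces $\psi(u)=\phi(u)/u$ so that $h=\phi'/\psi$, shows $\psi\geq c_1$ and $|\psi'|\leq c_3$ (the latter via the mean value theorem applied to $\psi'(u)=(\phi'(u)-\phi'(\theta))/u$), and reads off $|h'|\leq c_3/c_1+c_2c_3/c_1^2$. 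Your ``alternative'' paragraph is essentially this computation rewritten in terms of $g$, so you have also covered the paper's argument. The advantage of your integral approach is a cleaner constant with no real use of $c_2$; the paper's approach is perhaps more mechanical and explains why the constant in the statement has the form it does.
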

\begin{proof}
  Equality $h(0)=1$ follows immediately. By Lagrange theorem
we have 
\[
\Bigl|\frac{h(u)-h(0)}{u}\Bigr|\leqslant\max_{v\in[0,a]}|h'(v)|.
\]
Now let us estimate the derivative $h'(u)$. Denote 
\[
\psi(u)=\frac{\phi(u)}{u}.
\]
Then $h(u)=\phi'(u) / \psi(u)$ and 
\begin{equation}
h'(u)=\frac{\phi''(u)}{\psi(u)}-\frac{\phi'(u)\psi'(u)}{\psi^{2}(u)}.\label{hderfortemp}
\end{equation}
Let us estimate now $\psi(u)$: 
\[
\psi(u)=\frac{\phi(u)}{u}\geqslant\min_{v\in[0,a]}\phi'(v)\geqslant c_{1}>0.
\]
Again by Lagrange theorem we have 
\[
\psi'(u)=\frac{u\phi'(u)-\phi(u)}{u^{2}}=\frac{\phi'(u)- \phi(u) / u}{u}=\frac{\phi'(u)-\phi'(\theta(u))}{u}
\]
for some $\theta(u)\in[0,u]$. Continue the previous equality:
\[
|\psi'(u)|\leqslant\max_{v\in[0,a]}|\phi''(v)|\frac{u-\theta(u)}{u}\leqslant c_{3}.
\]
Substitute the obtained estimates to (\ref{hderfortemp}): 
\[
|h'(u)|\leqslant\frac{c_{3}}{c_{1}}+\frac{c_{2}c_{3}}{c_{1}^{2}}=c_{3}\frac{c_{1}+c_{2}}{c_{1}^{2}}.
\]
Lemma \ref{techLemmah} is proved.
\end{proof}

Now we shall estimate the integral $M_{n}(t)$ for the remaining domain of
parameters $\gamma_{1}<\gamma_{n}(t)<\gamma_{2}$. Put 
\[
t=(1+\varepsilon)n
\]
for some $|\varepsilon|<\varepsilon'<1$. Then 
\begin{align*}
  M_{n}(t)&=M_{n}((1+\varepsilon)n)=L_{n}(\varepsilon)=\int_{0}^{\pi / 4}\frac{\sin(n(x-(1+\varepsilon)\sin x)}{x}dx\\
  &=
  \int_{0}^{\pi / 4}\frac{\sin(nf_{\varepsilon}(x))}{x}dx,
\end{align*}
where 
\[
f_{\varepsilon}(x)=x-(1+\varepsilon)\sin x.
\]

\subsection{Uniform boundedness for $\boldsymbol{\gamma_{1}<\gamma_{n}(t)\leqslant1}$}

\begin{lemma} \label{gammalessonelemma} For any $0<\varepsilon'<1$
there exists a constant $c=c(\varepsilon')>0$ such that for all $\varepsilon\in[-\varepsilon';0]$
and $n\geqslant1$ the following inequality holds: 
\[
|L_{n}(\varepsilon)|\leqslant c.
\]
\end{lemma}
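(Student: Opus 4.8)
The plan is to linearise the phase of $L_n(\varepsilon)$ by the substitution $u=f_\varepsilon(x)$, which converts the oscillatory factor into $\sin(nu)$ over a smooth amplitude, then to reduce the resulting integral to the tabulated (and bounded) integral sine $\mathrm{Si}$ by one integration by parts. The whole argument hinges on a bounded-variation estimate for the amplitude that is uniform in $\varepsilon$.

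\emph{Step 1 (the substitution).} On $(0,\pi/4]$ we have $f_\varepsilon'(x)=1-(1+\varepsilon)\cos x>0$ (since $0<1+\varepsilon\le1$ and $\cos x<1$ there), while $f_\varepsilon(0)=0$; hence $f_\varepsilon$ is a strictly increasing homeomorphism of $[0,\pi/4]$ onto $[0,b_\varepsilon]$, $C^1$ on $(0,\pi/4]$, with $b_\varepsilon=\tfrac\pi4-\tfrac{1+\varepsilon}{\sqrt2}$, so $0<\tfrac\pi4-\tfrac1{\sqrt2}\le b_\varepsilon<\tfrac\pi4$. Writing $\varphi_\varepsilon=f_\varepsilon^{-1}$ and noting that the integrand of $L_n(\varepsilon)$ is bounded near $x=0$ (because $|\sin(nf_\varepsilon(x))|\le nf_\varepsilon(x)=O(x)$), the change of variable $u=f_\varepsilon(x)$ gives
\[
L_n(\varepsilon)=\int_0^{b_\varepsilon}\frac{\sin(nu)}{u}\,H_\varepsilon(u)\,du,\qquad H_\varepsilon(u)=\frac{f_\varepsilon(x)}{x\,f_\varepsilon'(x)}\bigg|_{x=\varphi_\varepsilon(u)} .
\]

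\emph{Step 2 (reduction to a variation bound).} I claim $0<H_\varepsilon(u)\le1$ and that $H_\varepsilon$ is $C^1$ on $(0,b_\varepsilon]$ with $\int_0^{b_\varepsilon}|H_\varepsilon'(u)|\,du=\int_0^{\pi/4}\bigl|\tfrac{d}{dx}H_\varepsilon(x)\bigr|\,dx\le c(\varepsilon')$. The bound $H_\varepsilon\le1$ amounts to $x-(1+\varepsilon)\sin x\le x-(1+\varepsilon)x\cos x$, i.e. $x\cos x\le\sin x$, which holds on $[0,\pi/4]$; positivity is clear since $x-(1+\varepsilon)\sin x\ge x-\sin x>0$ and $x f_\varepsilon'(x)>0$ on $(0,\pi/4]$. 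Granting the variation bound, integrate by parts on $[v_0,b_\varepsilon]$ with antiderivative $\mathrm{Si}(nu)$ of $\tfrac{\sin(nu)}{u}$ and let $v_0\to0$ (using $\mathrm{Si}(0)=0$, $|H_\varepsilon|\le1$, $H_\varepsilon'\in L^1$): with $M:=\sup_{y\ge0}|\mathrm{Si}(y)|<\infty$ (continuity of $\mathrm{Si}$ and $\mathrm{Si}(+\infty)=\pi/2$; cf.\ the proof of Lemma \ref{rntemestimate}),
\[
L_n(\varepsilon)=\mathrm{Si}(nb_\varepsilon)\,H_\varepsilon(b_\varepsilon)-\int_0^{b_\varepsilon}\mathrm{Si}(nu)\,H_\varepsilon'(u)\,du,\qquad |L_n(\varepsilon)|\le M\bigl(1+c(\varepsilon')\bigr),
\]
which is the assertion.

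\emph{Step 3 (the variation bound --- the crux).} Differentiating $H_\varepsilon(x)=\dfrac{x-(1+\varepsilon)\sin x}{x-(1+\varepsilon)x\cos x}$ one obtains, after simplification,
\[
\frac{d}{dx}H_\varepsilon(x)=\frac{(1+\varepsilon)\bigl[(P(x)+Q(x))+\varepsilon\,Q(x)\bigr]}{\bigl(x f_\varepsilon'(x)\bigr)^2},\qquad P(x)=\sin x-x\cos x-x^2\sin x,\quad Q(x)=x-\tfrac12\sin 2x .
\]
The decisive point is a cancellation: the $x^3$- and $x^5$-coefficients of the Taylor expansions of $P$ and of $Q$ are opposite, so $|P(x)+Q(x)|\le C_1x^7$ on $[0,\pi/4]$, while $|Q(x)|\le C_2x^3$; and, writing $f_\varepsilon'(x)=(1-\cos x)+|\varepsilon|\cos x$ and using $1-\cos x\ge\tfrac2{\pi^2}x^2$, $\cos x\ge\tfrac1{\sqrt2}$ on $[0,\pi/4]$, one has $f_\varepsilon'(x)\ge c_2(x^2+|\varepsilon|)$ with $c_2=2/\pi^2$. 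Hence, on any fixed $[0,\delta]$,
\[
\Bigl|\tfrac{d}{dx}H_\varepsilon(x)\Bigr|\le\frac{C_1x^7+C_2|\varepsilon|x^3}{c_2^2\,x^2(x^2+|\varepsilon|)^2}\le\frac1{c_2^2}\Bigl(C_1x+\frac{C_2|\varepsilon|x}{(x^2+|\varepsilon|)^2}\Bigr),
\]
and since $\int_0^\delta\frac{|\varepsilon|x\,dx}{(x^2+|\varepsilon|)^2}=\frac{|\varepsilon|}2\Bigl(\frac1{|\varepsilon|}-\frac1{\delta^2+|\varepsilon|}\Bigr)\le\frac12$, the contribution of $[0,\delta]$ is bounded uniformly in $\varepsilon$. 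On $[\delta,\pi/4]$ the numerator of $\tfrac{d}{dx}H_\varepsilon$ is continuous in $(x,\varepsilon)$ on a compact set and the denominator is $\ge(\delta(1-\cos\delta))^2>0$, so $|\tfrac{d}{dx}H_\varepsilon|$ is uniformly bounded there too; adding the two pieces gives the required bound on the total variation. The only real work --- and the one place where a crude estimate fails --- is exactly here: near $x=0$ the bound $|\tfrac{d}{dx}H_\varepsilon|\le C/x$ is not integrable, and one genuinely needs both the seventh-order vanishing of $P+Q$ (so the $\varepsilon$-free part of the numerator carries $x^7$, leaving an integrable $O(x)$ against $\bigl(xf_\varepsilon'(x)\bigr)^2\ge c_2^2x^6$) and the factor $(x^2+|\varepsilon|)^2$ in the denominator, which makes the $\varepsilon$-term contribute $O(1)$ instead of a divergent $\log|\varepsilon|$.
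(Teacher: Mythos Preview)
Your argument is correct and is genuinely different from the paper's. The paper does \emph{not} fully linearise the phase: it substitutes $x=\varphi_\varepsilon(u)$ so that $f_\varepsilon(\varphi_\varepsilon(u))=g_\varepsilon(u)=-\varepsilon u+u^3/6$, i.e.\ a cubic normal form, and must then devote a separate technical lemma (Lemma~\ref{epslesszerosubstit}) to proving uniform $C^2$-bounds on $\varphi_\varepsilon$, followed by a further analysis of $\int_0^{\delta_\varepsilon}\frac{\sin(ng_\varepsilon(u))}{u}\,du$ via the rescaling $y=nu$ and a split at $y\sim n^{2/3}$. You instead take $u=f_\varepsilon(x)$ directly, which kills the oscillation completely and leaves a single integration by parts against $\mathrm{Si}$; the entire burden shifts to the bounded-variation estimate for $H_\varepsilon$, which you settle by the striking observation that $P+Q$ vanishes to seventh order at $0$ (I verified your Taylor expansion: the $x^3$ and $x^5$ coefficients indeed cancel, leaving $P(x)+Q(x)=x^7/180+O(x^9)$). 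Your route is shorter and more elementary for this regime; the price is that it does not extend to $\varepsilon>0$ (Lemma~\ref{gammagreateronelemma}), where $f_\varepsilon$ acquires a critical point and is no longer monotone --- there the paper's cubic normal form is the natural object, which is presumably why the authors used it uniformly in both cases.
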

\begin{proof} We have 
\[
f'_{\varepsilon}(x)=\frac{d}{dx}f_{\varepsilon}(x)=1-(1+\varepsilon)\cos x.
\]
It follows that $f'_{\varepsilon}(x)\geqslant0$ for $x\in[0,\pi/4],$
 and moreover $f'_{\varepsilon}(x)=0$ only for $\varepsilon=0,x=0$.
Then the function $f_{\varepsilon}(x)$ is monotone increasing for
$x\in[0,\pi/4]$ and there exists a monotone continuous function 
\[
\varphi_{\varepsilon}(u):[0,\delta_{\varepsilon}]\to[0,\pi/4],\ 
\]
such that 
\[
f_{\varepsilon}(\varphi_{\varepsilon}(u))=-\varepsilon u+\frac{u^{3}}{6}=:g_{\varepsilon}(u)
\]
and $\delta_{\varepsilon}=g_{\varepsilon}^{-1}(f_{\varepsilon}(\pi / 4 ))$,
where $g_{\varepsilon}^{-1}$ is the inverse function, existing due
to monotonicity of $g_{\varepsilon}$. Indeed, $\varphi_{\varepsilon}(u)$
can be written as $\varphi_{\varepsilon}(u)=f_{\varepsilon}^{-1}(g_{\varepsilon}(u))$.

Let us show that for all $\varepsilon\in[-\varepsilon',0]$ the function
$\varphi_{\varepsilon}(u)$ is continuously differentiable in $u$
on the interval $[0,\delta_{\varepsilon}]$, that is $\varphi_{\varepsilon}(u)\in C^{1}([0,\delta_{\varepsilon}])$.
If $\varepsilon<0$, then $f'_{\varepsilon}(x)>0$ for all $x\in[0,\pi/4]$.
Then by classical inverse function theorems it follows that $\varphi_{\varepsilon}(u)\in C^{1}([0,\delta_{\varepsilon}])$.
For $\varepsilon=0$ and $u\in(0,\delta_{\varepsilon}]$ by the same
reasons we have continuous differentiability of $\varphi_{\varepsilon}(u)$
for $0<u\leqslant\delta_{\varepsilon}$. Let us show that there exists
derivative of $\varphi_{0}(u)$ at $u=0$. From Taylor formula we
have: 
\[
f_{0}(x)=\frac{x^{3}}{6}+\bar{\bar{o}}(x)\quad \mbox{as}\ x\rightarrow0.
\]
It is clear that $\varphi_{0}(0)=0$ and $\varphi_{0}(u)\rightarrow0$
as $u\rightarrow0+$. Then 
\[
f_{0}(\varphi_{0}(u))=\frac{\varphi_{0}^{3}(u)}{6}+\bar{\bar{o}}(\varphi_{0}(u))=\frac{u^{3}}{6}.
\]
Finally we get 
\[
\lim_{u\rightarrow0+}\frac{\varphi_{0}(u)}{u}=1.
\]
Thereby we proved that $\varphi_{\varepsilon}(u)\in C^{1}([0,\delta_{\varepsilon}])$
for all $\varepsilon\in[-\varepsilon',0]$. We shall see further on
that in fact the function $\varphi_{\varepsilon}(u)$ is ``close''
to $u$. 

Now use the change of variables $x=\varphi_{\varepsilon}(u)$ in the
integral for $L_{n}(\varepsilon)$: 
\[
L_{n}(\varepsilon)=\int_{0}^{\delta_{\varepsilon}}\frac{\sin(ng_{\varepsilon}(u))}{\varphi_{\varepsilon}(u)}\varphi'_{\varepsilon}(u)du=\int_{0}^{\delta_{\varepsilon}}\frac{\sin(ng_{\varepsilon}(u))}{u}h_{\varepsilon}(u)du,
\]
where 
\[
h_{\varepsilon}(u)=\frac{u}{\varphi_{\varepsilon}(u)}\varphi'_{\varepsilon}(u).
\]
Obviously $h_{\varepsilon}(0)=1$. Hence
\begin{equation}
L_{n}(\varepsilon)=\int_{0}^{\delta_{\varepsilon}}\frac{\sin(ng_{\varepsilon}(u))}{u}du+\int_{0}^{\delta_{\varepsilon}}\sin(ng_{\varepsilon}(u))\frac{h_{\varepsilon}(u)-h_{\varepsilon}(0)}{u}du.\label{Lnepslesszerofor}
\end{equation}
From Lemmas \ref{epslesszerosubstit} and \ref{techLemmah} we have
the estimate for the second interval: 
\[
\Biggl|\int_{0}^{\delta_{\varepsilon}}\sin(ng_{\varepsilon}(u))\frac{h_{\varepsilon}(u)-h_{\varepsilon}(0)}{u}du\Biggr|\leqslant c_{5}c_{3}\frac{c_{1}+c_{2}}{c_{1}^{2}},
\]
where the constants $c_{k},\ k=1,\ldots,5$ were defined in Lemma
\ref{epslesszerosubstit}.

Now let us estimate the first integral in (\ref{Lnepslesszerofor}):
\begin{align*}
  \tilde{L}_{n}(\varepsilon)&=\int_{0}^{\delta_{\varepsilon}}\frac{\sin(ng_{\varepsilon}(u))}{u}du=
                              \int_{0}^{\delta_{\varepsilon}}\frac{\sin(n(-\varepsilon u+ u^{3} / 6))}{u}du\\
  &=\int_{0}^{\delta_{\varepsilon}n}\frac{\sin(-\varepsilon y+ y^{3} / (6n^{2}))}{y}dy.
\end{align*}
In the last integral the cubic term under the sine, for $y<n^{2/3}$,
has at most the same order as the linear term. Taking into account
this observation, subdivide this integral in two: 
\begin{align*}
  \tilde{L}_{n}(\varepsilon)&=\int_{0}^{\delta_{\varepsilon}n^{2/3}}\frac{\sin(-\varepsilon y+ y^{3} / (6n^{2}))}{y}dy+
                              \int_{\delta_{\varepsilon}n^{2/3}}^{\delta_{\varepsilon}n}\frac{\sin(-\varepsilon y+ y^{3} /(6n^{2}))}{y}dy\\
  &=\tilde{L}_{n}^{1}(\varepsilon)+\tilde{L}_{n}^{2}(\varepsilon),
\end{align*}
where we denoted by $\tilde{L}_{n}^{1}(\varepsilon),\tilde{L}_{n}^{2}(\varepsilon)$
the first and second integrals in the latter formula. Let us estimate
firstly $\tilde{L}_{n}^{1}(\varepsilon)$. By Taylor formula for the
sine for all $y,z\geqslant0$ we have the equality 
\[
\sin(-\varepsilon y+z)=\sin(-\varepsilon y)+z\cos(-\varepsilon y+\theta(z))
\]
for some $\theta(z)\in[0,z]$. Then 
\begin{align*}
  \tilde{L}_{n}^{1}(\varepsilon)&=\int_{0}^{\delta_{\varepsilon}n^{2/3}}\frac{\sin(-\varepsilon y+ y^{3} / (6n^{2}))}{y}du \\
&  =\int_{0}^{\delta_{\varepsilon}n^{2/3}}\frac{\sin(-\varepsilon y)}{y}du+\int_{0}^{\delta_{\varepsilon}n^{2/3}}
  \frac{y^{3}}{6n^{2}}\frac{\cos\bigl(-\varepsilon y+\theta( y^{3} / (6n^{2}))\bigr)}{y}dy.
\end{align*}
The first integral is evidently equal to $\mathrm{Si}(-\varepsilon\delta_{\varepsilon}n^{2/3})$
and thus is bounded uniformly in $n\geqslant0$ and $\epsilon\leqslant0$.
For the second integral we have the bounds 
\begin{align*}
  \Biggl|\int_{0}^{\delta_{\varepsilon}n^{2/3}}\frac{y^{3}}{6n^{2}}\frac{\cos\bigl(-\varepsilon y+\theta(y^{3} / (6n^{2}))\bigr)}{y}dy\Biggr|
  &\leqslant\frac{1}{6n^{2}}\int_{0}^{\delta_{\varepsilon}n^{2/3}}y^{2}\ dy \\
  &=\frac{1}{18n^{2}}(\delta_{\varepsilon}n^{2/3})^{3}=\frac{\delta_{\varepsilon}^{3}}{18}.
\end{align*}
Then by Lemma \ref{epslesszerosubstit} this integral is also uniformly
bounded. Thereby we have shown that 
\[
\tilde{L}_{n}^{1}(\varepsilon)=O(1)
\]
uniformly in $n\geqslant0$ and $\epsilon\leqslant0$. Now estimate
$\tilde{L}_{n}^{2}(\varepsilon)$. Integration by parts gives: 
\begin{align*}
  \tilde{L}_{n}^{2}(\varepsilon)&=\int_{\delta_{\varepsilon}n^{2/3}}^{\delta_{\varepsilon}n}\frac{\sin(-\varepsilon y+ y^{3} / (6n^{2}))}{y}dy\\
                                  &=-\frac{\cos(-\varepsilon y+ y^{3} /(6n^{2}))}{(-\varepsilon+y^{2} / (2n^{2}))y}\Big|_{\delta_{\varepsilon}n^{2/3}}^{\delta_{\varepsilon}n}
\\
                                &\quad {} +\int_{\delta_{\varepsilon}n^{2/3}}^{\delta_{\varepsilon}n}\cos\Bigl(-\varepsilon y+\frac{y^{3}}{6n^{2}}\Bigr)
                                  \frac{d}{dy}\frac{1}{(-\varepsilon+ y^{2} / (2n^{2}))y}dy.
\end{align*}
The first term after substitution, by Lemma \ref{epslesszerosubstit} and the 
lower estimate for $\delta_{\varepsilon}$, equals $O(1)$. Also 
\[
  \Bigl|\frac{d}{dy}\frac{1}{(-\varepsilon+y^{2} / (2n^{2}))y}\Bigr|=
  \frac{1}{(-\varepsilon+y^{2} / (2n^{2}))y^{2}}+\frac{1}{n^{2}}\frac{1}{(-\varepsilon+ y^{2} / (2n^{2}))^{2}}\leqslant\frac{6n^{2}}{y^{4}}
\]
and thus 
\[
  |\tilde{L}_{n}^{2}(\varepsilon)|\leqslant O(1)+6n^{2}\int_{\delta_{\varepsilon}n^{2/3}}^{\delta_{\varepsilon}n}\frac{1}{y^{4}}dy
  =O(1)+2n^{2}\Bigl(\frac{1}{(\delta_{\varepsilon}n^{2/3})^{3}}-\frac{1}{(\delta_{\varepsilon}n)^{3}}\Bigr)=O(1).
\]
Thus, Lemma \ref{gammalessonelemma} is proved.
\end{proof}

\begin{lemma} \label{epslesszerosubstit} For any $\varepsilon'<1$
there exist positive constants $c_{1},c_{2},c_{3},c_{4},c_{5}$ such
that for all $\varepsilon\in[-\varepsilon';0]$ and any $u\in[0,\delta_{\varepsilon}]$
the following inequalities hold: 
\[
0<c_{1}\leqslant\varphi_{\varepsilon}'(u)\leqslant c_{2},
\]
\[
|\varphi_{\varepsilon}''(u)|\leqslant c_{3},
\qquad 
c_{4}\leqslant\delta_{\varepsilon}\leqslant c_{5}.
\]
\end{lemma}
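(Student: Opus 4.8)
The plan is to differentiate the defining relation $f_\varepsilon(\varphi_\varepsilon(u))=g_\varepsilon(u)$, where $g_\varepsilon(u)=-\varepsilon u+u^{3}/6$, and to reduce every bound to elementary two-sided estimates that hold uniformly on the compact interval $[0,\pi/4]$. From $\tfrac{x^{2}}{2}-\tfrac{x^{4}}{24}\le 1-\cos x\le\tfrac{x^{2}}{2}$, $\tfrac{x^{3}}{6}(1-\tfrac{x^{2}}{20})\le x-\sin x\le\tfrac{x^{3}}{6}$ and $\tfrac{2x}{\pi}\le\sin x\le x$ on $[0,\pi/4]$, and the identities $f_\varepsilon'(x)=-\varepsilon+(1+\varepsilon)(1-\cos x)$ and $f_\varepsilon(x)=(x-\sin x)+(-\varepsilon)\sin x$, I first record, uniformly in $\varepsilon\in[-\varepsilon',0]$,
\[
f_\varepsilon'(x)\asymp -\varepsilon+x^{2},\qquad c_{0}\,g_\varepsilon(x)\le f_\varepsilon(x)\le g_\varepsilon(x)\qquad (x\in[0,\pi/4]),
\]
with an absolute constant $c_{0}\in(0,1)$ (indeed $g_\varepsilon-f_\varepsilon=(1+\varepsilon)(\sin x-x)+x^{3}/6\ge0$, and $f_\varepsilon(x)\ge(x-\sin x)+(-\varepsilon)\tfrac{2x}{\pi}\ge c_{0}\,g_\varepsilon(x)$). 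Since $g_\varepsilon$ is strictly increasing on $[0,\infty)$, applying this sandwich to $f_\varepsilon(\varphi_\varepsilon(u))=g_\varepsilon(u)$ and using $g_\varepsilon(c_{0}^{-1}u)\ge c_{0}^{-1}g_\varepsilon(u)$ (valid because $c_{0}\le1$) gives
\[
u\le\varphi_\varepsilon(u)\le c_{0}^{-1}u\qquad (0\le u\le\delta_\varepsilon).
\]
In particular $\varphi_\varepsilon(u)\asymp u$, and taking $u=\delta_\varepsilon$ yields $\delta_\varepsilon\le\varphi_\varepsilon(\delta_\varepsilon)=\pi/4$, so that $u,\varphi_\varepsilon(u)\in[0,\pi/4]$ throughout.

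The constants $c_{1},c_{2},c_{4},c_{5}$ then come out immediately. For $\delta_\varepsilon$ I take $c_{5}=\pi/4$; for the lower bound, $g_{-\varepsilon'}(\delta_\varepsilon)\ge g_\varepsilon(\delta_\varepsilon)=f_\varepsilon(\pi/4)\ge f_{0}(\pi/4)=\pi/4-1/\sqrt2>0$ (both $f_\varepsilon(\pi/4)$ and $g_\varepsilon(u)$ are decreasing in $\varepsilon$), and monotonicity of $g_{-\varepsilon'}$ forces $\delta_\varepsilon\ge c_{4}:=g_{-\varepsilon'}^{-1}(\pi/4-1/\sqrt2)>0$. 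Differentiating the relation once gives $\varphi_\varepsilon'(u)=g_\varepsilon'(u)/f_\varepsilon'(\varphi_\varepsilon(u))$; since $g_\varepsilon'(u)=-\varepsilon+u^{2}/2\asymp-\varepsilon+u^{2}$ and $f_\varepsilon'(\varphi_\varepsilon(u))\asymp-\varepsilon+\varphi_\varepsilon(u)^{2}\asymp-\varepsilon+u^{2}$, this yields $0<c_{1}\le\varphi_\varepsilon'(u)\le c_{2}$.

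The real work — and the step I expect to be the main obstacle — is the bound $|\varphi_\varepsilon''(u)|\le c_{3}$. Differentiating twice gives, with $\varphi=\varphi_\varepsilon(u)$ and $s:=-\varepsilon+u^{2}$,
\[
\varphi_\varepsilon''(u)=\frac{u\,f_\varepsilon'(\varphi)^{2}-f_\varepsilon''(\varphi)\,g_\varepsilon'(u)^{2}}{f_\varepsilon'(\varphi)^{3}},\qquad f_\varepsilon'(\varphi)^{3}\asymp s^{3},
\]
so it suffices to show the numerator $N$ is $O(s^{3})$. The naive bound only gives $|N|=O(u\,s^{2})$, which fails near $u=0,\ \varepsilon=0$: the top-order terms of $N$ cancel, and this cancellation has to be made explicit.

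I would extract it in two steps. First, prove the \emph{a priori} estimate $|\varphi_\varepsilon(u)-u|\le C u^{3}$ uniformly: rewrite the relation as $(\varphi-\sin\varphi)+\varepsilon(u-\sin\varphi)=u^{3}/6$, and then, using $y-\sin y=y^{3}/6-R(y)$ with $0\le R(y)\le y^{5}/120$ and the mean value theorem for $\sin\varphi-\sin u$, as
\[
(\varphi-u)\Bigl[\tfrac{\varphi^{2}+\varphi u+u^{2}}{6}-\varepsilon\cos\xi\Bigr]=R(\varphi)-\varepsilon\tfrac{u^{3}}{6}+\varepsilon R(u),\qquad \xi\in[0,\pi/4];
\]
the bracket is $\asymp s$ (here $\cos\xi\ge 1/\sqrt2>0$), the right-hand side is $O\bigl((-\varepsilon)u^{3}+u^{5}\bigr)$, and dividing — using $\tfrac{(-\varepsilon)u^{3}+u^{5}}{s}=u^{3}$ since $-\varepsilon+u^{2}=s$ — gives $|\varphi-u|\le C u^{3}$. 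Second, use the \emph{exact} identity $f_\varepsilon''(\varphi)=(1+\varepsilon)\sin\varphi=\varphi-f_\varepsilon(\varphi)=\varphi+\varepsilon u-u^{3}/6$ together with $|f_\varepsilon'(\varphi)-g_\varepsilon'(u)|\le C u^{2}s$ (split as $|f_\varepsilon'(\varphi)-g_\varepsilon'(\varphi)|\le(-\varepsilon)\tfrac{\varphi^{2}}{2}+\tfrac{\varphi^{4}}{24}$ plus $|g_\varepsilon'(\varphi)-g_\varepsilon'(u)|=\tfrac{|\varphi^{2}-u^{2}|}{2}\le C u^{4}$, both $\le C u^{2}s$). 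Then $u\,f_\varepsilon'(\varphi)^{2}=u\,g_\varepsilon'(u)^{2}+O(s^{3})$, so the offending leading term cancels and
\[
N=\bigl[(u-\varphi)-\varepsilon u+\tfrac{u^{3}}{6}\bigr]g_\varepsilon'(u)^{2}+O(s^{3}),
\]
where every remaining piece is $O(s^{3})$ on using $g_\varepsilon'(u)^{2}\le s^{2}$, $-\varepsilon\le s$, $u^{2}\le s$, $u\le\pi/4$ and $|u-\varphi|\le C u^{3}\le Cc_{5}u^{2}\le Cc_{5}s$. Dividing by $f_\varepsilon'(\varphi)^{3}\ge(cs)^{3}$ gives $|\varphi_\varepsilon''(u)|\le c_{3}$. (That $\varphi_\varepsilon''$ exists for each fixed $\varepsilon$ is clear: $\varphi_\varepsilon=f_\varepsilon^{-1}\circ g_\varepsilon$ is smooth when $\varepsilon<0$ since then $f_\varepsilon'>0$; for $\varepsilon=0$, writing $f_{0}(x)=x^{3}\widetilde k(x)$ with $\widetilde k$ smooth and positive on $[0,\pi/4]$ exhibits $\varphi_{0}$ as the smooth inverse of $x\mapsto 6^{1/3}x\,\widetilde k(x)^{1/3}$.) All constants depend only on $\varepsilon'$ and on absolute constants arising from $[0,\pi/4]$, which is exactly the uniformity asserted.
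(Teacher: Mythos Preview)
Your argument is correct and follows essentially the same route as the paper: first a two-sided sandwich $c_{0}\,g_\varepsilon\le f_\varepsilon\le g_\varepsilon$ giving $\varphi_\varepsilon(u)\asymp u$ and the bounds on $\delta_\varepsilon$, then the key refinement $|\varphi_\varepsilon(u)-u|\le Cu^{3}$, and finally the derivative bounds by exhibiting the cancellation in the numerator of $\varphi_\varepsilon''$. The paper carries out the last step slightly differently---it first proves the intermediate estimate $|\varphi_\varepsilon'(u)-1|\le cu^{2}$ and then rewrites $\varphi_\varepsilon''$ in terms of $\Delta=\varphi_\varepsilon-u$ and $\Delta'$---whereas you use the exact identity $f_\varepsilon''(\varphi_\varepsilon(u))=\varphi_\varepsilon(u)-g_\varepsilon(u)$ to expose the cancellation in a single stroke; this is a neat shortcut but not a genuinely different strategy.
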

\begin{proof}
  The proof consists of several parts.

  1. Firstly, we shall prove the inequality: 
\begin{equation}
u\leqslant\varphi_{\varepsilon}(u)\leqslant\frac{\pi}{2}u \label{varepsmajor}
\end{equation}
that holds for any $u\in[0,\delta_{\varepsilon}]$. Note first that
for all $x\in[0,\pi/4]$ the following inequalities hold 
\[
\frac{1}{3\pi}x^{3}\leqslant x-\sin x\leqslant\frac{x^{3}}{6}.
\]
Hence, for the function $f_{\varepsilon}(x)=x-(1+\varepsilon)\sin x=x-\sin x-\varepsilon\sin x$
we have: 
\[
\frac{1}{3\pi}x^{3}-\varepsilon\frac{2}{\pi}x\leqslant f_{\varepsilon}(x)\leqslant\frac{x^{3}}{6}-\varepsilon x=g_{\varepsilon}(x).
\]
The left part can be estimated as follows 
\[
  \frac{1}{3\pi}x^{3}-\varepsilon\frac{2}{\pi}x=\frac{2}{\pi}\frac{x^{3}}{6}-\varepsilon\frac{2}{\pi}x\geqslant\Bigl(\frac{2}{\pi}\Bigr)^{3}
  \frac{x^{3}}{6}-\varepsilon\frac{2}{\pi}x=g_{\varepsilon}\Bigl(\frac{2}{\pi}x\Bigr)
\]
and we get the bounds 
\[
g_{\varepsilon}\Bigl(\frac{2}{\pi}x\Bigr)\leqslant f_{\varepsilon}(x)\leqslant g_{\varepsilon}(x).
\]
After substitution $x=\varphi_{\varepsilon}(u)$ we have 
\[
g_{\varepsilon}\Bigl(\frac{2}{\pi}\varphi_{\varepsilon}(u)\Bigr)\leqslant f_{\varepsilon}(\varphi_{\varepsilon}(u))=g_{\varepsilon}(u)\leqslant g_{\varepsilon}(\varphi_{\varepsilon}(u)).
\]
From this, taking into account that $g_{\varepsilon}(u)$ is monotone
increasing, we obtain (\ref{varepsmajor}).

Taylor expansion gives for any $x\geqslant0$: 
\begin{equation}
f_{\varepsilon}(x)=-\varepsilon x+\frac{x^{3}}{6}+R_{\varepsilon}(x)=g_{\varepsilon}(x)+R_{\varepsilon}(x),\label{fepsTailorExp}
\end{equation}
where 
\[
R_{\varepsilon}(x)=\frac{1}{4!}\int_{0}^{x}(x-s)^{4}f_{\varepsilon}^{(5)}(s)\ ds=\frac{1}{5!}x^{5}f_{\varepsilon}^{(5)}(\theta_{\varepsilon}(x))=-(1+\epsilon)\frac{1}{5!}x^{5}\cos(\theta_{\varepsilon}(x))
\]
for some $\theta_{\varepsilon}(x)\in[0,x]$. Putting $x=\varphi_{\varepsilon}(u)$
in the expansion (\ref{fepsTailorExp}), we get 
\[
g_{\varepsilon}(u)=g_{\varepsilon}(\varphi_{\varepsilon}(u))+R_{\varepsilon}(\varphi_{\varepsilon}(u)).
\]
The tangent line equation is 
\[
y(v)=g_{\varepsilon}(u)+g_{\varepsilon}'(u)(v-u).
\]
Using convexity property of function $g_{\varepsilon}(u)$ we have
\[
g_{\varepsilon}(\varphi_{\varepsilon}(u))\geqslant y(\varphi_{\varepsilon}(u))
\]
and 
\begin{align*}
  \varphi_{\varepsilon}(u)-u&\leqslant\frac{g_{\varepsilon}(\varphi_{\varepsilon}(u))-g_{\varepsilon}(u)}{g'_{\varepsilon}(u)}
  =-\frac{R_{\varepsilon}(\varphi_{\varepsilon}(u))}{-\varepsilon+ u^{2} / 2}\\
                            &\leqslant\frac{1}{5!}\frac{\varphi_{\varepsilon}^{5}(u)}{(-\varepsilon+ u^{2} / 2)}
                              \leqslant\frac{2^{5}}{5!}\frac{u^{5}}{(-\varepsilon+u^{2} / 2)}\leqslant\frac{2^{5}}{5!}2u^{3}.
\end{align*}
Thus, we have proved that 
\begin{equation}
0\leqslant\varphi_{\varepsilon}(u)-u\leqslant cu^{3}\label{phiumaj}
\end{equation}
for some absolute constant $c$ not depending on $\varepsilon$ and
$u$. Also, the bound for $\delta_{\varepsilon}$ follows from this.
Indeed, by definition we have $\varphi_{\varepsilon}(\delta_{\varepsilon})= \pi / 4$.
Then, 
\begin{equation}
\delta_{\varepsilon}\leqslant\frac{\pi}{4}.\label{deltaepsineq}
\end{equation}
From  (\ref{phiumaj}) we get that 
\[
c\delta_{\varepsilon}^{3}+\delta_{\varepsilon}\geqslant\frac{\pi}{4}
\]
and that $\delta_{\varepsilon}>c'$ for a constant $c'$ not depending
on $\varepsilon$.

\medskip

2. Now we shall estimate the derivative of the function $\varphi_{\varepsilon}(u)$.
Note that the following equality holds true:
\begin{equation}
\varphi'_{\varepsilon}(u)=\frac{g'_{\varepsilon}(u)}{f'_{\varepsilon}(\varphi_{\varepsilon}(u))}.\label{varphidefrav}
\end{equation}
Since 
\[
f'_{\varepsilon}(x)=1-(1+\varepsilon)\cos x=1-\cos x-\varepsilon\cos x\leqslant\frac{x^{2}}{2}-\epsilon,
\]
 using (\ref{varepsmajor}), we get 
\[
  \varphi'_{\varepsilon}(u)\geqslant\frac{g'_{\varepsilon}(u)}{\varphi_{\varepsilon}^{2}(u)  / 2-\epsilon}
  =\frac{-\varepsilon+u^{2} / 2}{\varphi_{\varepsilon}^{2}(u) / 2-\epsilon}
  \geqslant\frac{-\varepsilon+ u^{2} / 2}{ (\pi / 2)^{2} u^{2} / 2-\epsilon}
  \geqslant\frac{1}{(\pi / 2)^{2}}=\frac{4}{\pi^{2}}.
\]
To estimate the second derivative of $\varphi_{\varepsilon}(u)$ we
will need more exact estimates of the difference $\varphi'_{\varepsilon}(u)-1$.
So, we need inequalities for this difference. By (\ref{varphidefrav})
we have 
\begin{equation}
\varphi'_{\varepsilon}(u)-1=\frac{g'_{\varepsilon}(u)-f'_{\varepsilon}(\varphi_{\varepsilon}(u))}{f'_{\varepsilon}(\varphi_{\varepsilon}(u))}.\label{phiespdereq}
\end{equation}
From 
\[
f'_{\varepsilon}(x)=1-(1+\varepsilon)\cos x=1-\cos x-\varepsilon\cos x\geqslant1-\cos x\geqslant\frac{1}{\pi}x^{2}
\]
one can get 
\begin{equation}
f'_{\varepsilon}(\varphi_{\varepsilon}(u))\geqslant\frac{1}{\pi}\varphi_{\varepsilon}^{2}(u)\geqslant\frac{1}{\pi}u^{2}.\label{fepsderminor}
\end{equation}
And from (\ref{fepsTailorExp}) we have 
\[
f'_{\varepsilon}(x)=g'_{\varepsilon}(x)+R'_{\varepsilon}(x).
\]
Moreover 
\[
R'_{\varepsilon}(x)=\frac{1}{4!}x^{4}f_{\varepsilon}^{(5)}(\eta_{\varepsilon}(x))=-(1+\epsilon)\frac{1}{4!}x^{4}\cos(\eta_{\varepsilon}(x))
\]
for some $\eta_{\varepsilon}(x)\in[0,x]$. Finally, 
\[
f'_{\varepsilon}(\varphi_{\varepsilon}(u))=g'_{\varepsilon}(\varphi_{\varepsilon}(u))+R'_{\varepsilon}(\varphi_{\varepsilon}(u)).
\]
Again using Taylor expansion for $g'(\varphi_{\varepsilon}(u))$ at
 $u$, we get: 
\[
g'_{\varepsilon}(\varphi_{\varepsilon}(u))=g'_{\varepsilon}(u)+g''_{\varepsilon}(u)(\varphi_{\varepsilon}(u)-u)+\frac{g_{\varepsilon}^{(3)}(u)}{2}(\varphi_{\varepsilon}(u)-u)^{2}.
\]
And from (\ref{phiespdereq}), using (\ref{varepsmajor}) and (\ref{phiumaj}),
we get 
\begin{align*}
  |\varphi'_{\varepsilon}(u)-1|&=\Bigl|\frac{g'_{\varepsilon}(u)-g'(\varphi_{\varepsilon}(u))-R'_{\varepsilon}(\varphi_{\varepsilon}(u))}{f'_{\varepsilon}(\varphi_{\varepsilon}(u))}\Bigr|
  \\
  &\leqslant\frac{|g'_{\varepsilon}(u)-g'(\varphi_{\varepsilon}(u))|}{u^{2} / \pi} +\frac{\varphi_{\varepsilon}^{4}(u)}{4! \, u^{2} / \pi}
\\
&\leqslant\frac{u(\varphi_{\varepsilon}(u)-u)+(\varphi_{\varepsilon}(u)-u)^{2} / 2}{u^{2} / \pi} +\frac{\pi}{4!}\Bigl(\frac{\pi}{2}\Bigr)^{4}u^{2}\leqslant cu^{2}
\end{align*}
for some constant $c>0$ not depending on $\varepsilon$ and $u$.
Thus, we have proved 
\begin{equation}
|\varphi'_{\varepsilon}(u)-1|\leqslant cu^{2}.\label{derphionecmp}
\end{equation}

\medskip

3. To estimate the second derivative of $\varphi_{\varepsilon}(u)$ we
use the equality 
\begin{align*}
  \varphi''_{\varepsilon}(u)&=\frac{g''_{\varepsilon}(u)}{f'_{\varepsilon}(\varphi_{\varepsilon}(u))}-\frac{g'_{\varepsilon}(u)f''_{\varepsilon}(\varphi_{\varepsilon}(u))\varphi'_{\varepsilon}(u)}{(f'_{\varepsilon}(\varphi_{\varepsilon}(u)))^{2}}\\
  &=\frac{g''_{\varepsilon}(u)}{f'_{\varepsilon}(\varphi_{\varepsilon}(u))}-\frac{f''_{\varepsilon}(\varphi_{\varepsilon}(u))(\varphi'_{\varepsilon}(u))^{2}}{f'_{\varepsilon}(\varphi_{\varepsilon}(u)))}
\\
&=\frac{g''_{\varepsilon}(u)-f''_{\varepsilon}(\varphi_{\varepsilon}(u))(\varphi'_{\varepsilon}(u))^{2}}{f'_{\varepsilon}(\varphi_{\varepsilon}(u))}.
\end{align*}
From (\ref{fepsTailorExp}) we have 
\[
f''_{\varepsilon}(x)=g''_{\varepsilon}(x)+R''_{\varepsilon}(x)
\]
and moreover, 
\[
R''_{\varepsilon}(x)=\frac{1}{3!}x^{3}f_{\varepsilon}^{(5)}(\xi_{\varepsilon}(x))=-(1+\epsilon)\frac{1}{3!}x^{3}\cos(\xi_{\varepsilon}(x))
\]
for some $\xi_{\varepsilon}(x)\in[0,x]$. Hence,
\[
f''_{\varepsilon}(\varphi_{\varepsilon}(u))=g''_{\varepsilon}(\varphi_{\varepsilon}(u))+R''_{\varepsilon}(\varphi_{\varepsilon}(u)).
\]
Inequalities (\ref{derphionecmp}) and (\ref{phiumaj}) can be rewritten
in terms of $\Delta=\varphi_{\varepsilon}(u)-u$ as follows: 
\begin{equation}
0\leqslant\Delta\leqslant cu^{3},\ \quad|\Delta'|\leqslant cu^{2}. \label{MainIneqDeltaSense}
\end{equation}
Then 
\[
g''_{\varepsilon}(\varphi_{\varepsilon}(u))=g''_{\varepsilon}(u)+(\varphi_{\varepsilon}(u)-u)=g''_{\varepsilon}(u)+\Delta.
\]
Thus, from (\ref{fepsderminor}), (\ref{MainIneqDeltaSense}), (\ref{deltaepsineq})
we get: 
\begin{align*}
  |\varphi''_{\varepsilon}(u)|&=\Bigl|\frac{g''_{\varepsilon}(u)-g''_{\varepsilon}(\varphi_{\varepsilon}(u))(\varphi'_{\varepsilon}(u))^{2}-R''_{\varepsilon}(\varphi_{\varepsilon}(u))(\varphi'_{\varepsilon}(u))^{2}}{f'_{\varepsilon}(\varphi_{\varepsilon}(u)))}\Bigr| \\
  &\leqslant
    \frac{|g''_{\varepsilon}(u)-g''_{\varepsilon}(\varphi_{\varepsilon}(u))(\Delta'+1)^{2}|}{u^{2} / \pi} +\frac{\varphi_{\varepsilon}^{3}(u)}{3! \, u^{2} / \pi}(1+cu^{2})^{2}\\
  &\leqslant g''(u)\frac{|1-(\Delta'+1)^{2}|}{u^{2} / \pi} +\Delta\frac{(\Delta'+1)^{2}}{u^{2} / \pi }+
    +\Bigl(\frac{\pi}{2}\Bigr)^{3}\frac{u^{3}}{3! \, u^{2} / \pi}(1+cu^{2})^{2}\\
                              &\leqslant u\frac{|\Delta'||\Delta'+2|}{u^{2} / \pi}+\frac{cu^{3}(1+cu^{2})^{2}}{u^{2} / \pi}+\Bigl(\frac{\pi}{2}\Bigr)^{3}
                                \frac{u}{3! / \pi}(1+cu^{2})^{2}\\
  &\leqslant c'u
\end{align*}
for some constant $c'$ not depending on $c$.
Lemma \ref{epslesszerosubstit} is completely proved.
\end{proof}

\subsection{Uniform boundedness for $\boldsymbol{1<\gamma_{n}(t)<\gamma_{2}}$}

This case is equivalent to the assumption that $0<\varepsilon<\eps'$
for some $\eps'>0$.

\begin{lemma}\label{gammagreateronelemma} There exist $\varepsilon'>0$
and  $c>0$ such that for all $\varepsilon\in(0;\varepsilon']$
and all $n\geqslant1$,
$
|L_{n}(\varepsilon)|\leqslant c.
$
\end{lemma}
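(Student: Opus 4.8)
The plan is to follow the pattern already used for $\varepsilon\le0$: regularize $L_n(\varepsilon)$ by a change of variables bringing the phase to the cubic model $g_\varepsilon(u)=-\varepsilon u+u^3/6$, peel off a harmless remainder with Lemma~\ref{techLemmah}, and then bound a one-parameter Airy-type integral sine uniformly. The genuinely new feature, absent when $\varepsilon\le0$, is that for $\varepsilon>0$ (with $\varepsilon'$ chosen small) both $f_\varepsilon(x)=x-(1+\varepsilon)\sin x$ and $g_\varepsilon$ now have an interior minimum --- at $x_\varepsilon=\arccos\frac{1}{1+\varepsilon}\in(0,\pi/4)$ and at $u_\ast=\sqrt{2\varepsilon}$ --- and this stationary point is what forces us into the complex plane. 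For the change of variables itself: for $0<\varepsilon<\varepsilon'$ the function $f_\varepsilon$ strictly decreases on $[0,x_\varepsilon]$ from $f_\varepsilon(0)=0$ and strictly increases on $[x_\varepsilon,\pi/4]$, and $g_\varepsilon$ has the same shape on $[0,\delta_\varepsilon]$. Fixing $\delta_\varepsilon$ by $g_\varepsilon(\delta_\varepsilon)=f_\varepsilon(\pi/4)$ and, if necessary, replacing the parameter in the model by a comparable $\widehat\varepsilon=\varepsilon(1+O(\varepsilon))$ so that $g_{\widehat\varepsilon}(\sqrt{2\widehat\varepsilon})=f_\varepsilon(x_\varepsilon)$, one gets a continuous increasing $\varphi_\varepsilon:[0,\delta_\varepsilon]\to[0,\pi/4]$ with $f_\varepsilon(\varphi_\varepsilon(u))=g_\varepsilon(u)$, taking $[0,u_\ast]$ onto $[0,x_\varepsilon]$ and $[u_\ast,\delta_\varepsilon]$ onto $[x_\varepsilon,\pi/4]$. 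I would then prove the exact analogue of Lemma~\ref{epslesszerosubstit}: there are positive constants $c_1,\dots,c_5,c$ independent of $\varepsilon\in(0,\varepsilon']$ with
\[
0<c_1\le\varphi_\varepsilon'(u)\le c_2,\qquad|\varphi_\varepsilon''(u)|\le c_3,\qquad c_4\le\delta_\varepsilon\le c_5,\qquad 0\le\varphi_\varepsilon(u)-u\le cu^3
\]
for all $u\in[0,\delta_\varepsilon]$. Away from $u_\ast$ these follow from the same Taylor and convexity estimates as in Lemma~\ref{epslesszerosubstit}; at the common minimum one checks that $\varphi_\varepsilon'(u_\ast)=\bigl(u_\ast/f_\varepsilon''(x_\varepsilon)\bigr)^{1/2}$ is finite and bounded away from $0$ and $\infty$, because $u_\ast=\sqrt{2\varepsilon}$ and $f_\varepsilon''(x_\varepsilon)=(1+\varepsilon)\sin x_\varepsilon$ are comparable.

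With $x=\varphi_\varepsilon(u)$ and $h_\varepsilon(u)=\frac{u}{\varphi_\varepsilon(u)}\varphi_\varepsilon'(u)$, so $h_\varepsilon(0)=1$, exactly as in the proof of Lemma~\ref{gammalessonelemma},
\[
L_n(\varepsilon)=\int_0^{\delta_\varepsilon}\frac{\sin(ng_\varepsilon(u))}{u}\,du+\int_0^{\delta_\varepsilon}\sin(ng_\varepsilon(u))\,\frac{h_\varepsilon(u)-h_\varepsilon(0)}{u}\,du,
\]
and the second integral is $O(1)$ by Lemma~\ref{techLemmah} together with the bounds above. Rescaling $z=n^{1/3}u$ in the first integral, it remains to bound
\[
\widetilde L_n(\varepsilon)=\int_0^M\frac{\sin(z^3/6-\beta z)}{z}\,dz,\qquad\beta=\varepsilon n^{2/3},\quad M=\delta_\varepsilon n^{1/3},
\]
so it is enough to prove $\sup_{\beta\ge0,\,M\ge0}\bigl|\int_0^M\frac{\sin(z^3/6-\beta z)}{z}\,dz\bigr|<\infty$.

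For this last bound, write $\phi(z)=z^3/6-\beta z$, with stationary point $z_\ast=\sqrt{2\beta}$ and $\phi''(z_\ast)=z_\ast$. If $\beta$ lies below a fixed threshold, then $z_\ast$ is bounded, $\phi$ is bounded on a fixed interval $[0,C]$, the integrand is bounded there since $\phi(z)/z\to-\beta$, and on $[C,M]$ one has $\phi'(z)=z^2/2-\beta\ge1/2$, so integration by parts gives $O(1)$. If $\beta$ is large, split $[0,M]$ into three parts: the monotone piece $[0,z_\ast/2]$, where the substitution $w=-\phi(z)$ turns the integral into $\mathrm{Si}(-\phi(z_\ast/2))$ plus a correction which is absolutely integrable uniformly in $\beta$ --- the numerator of $\frac{1}{z(\beta-z^2/2)}-\frac{1}{w}$ being exactly $z^3/3$, so the $w$-integrand of the correction is $O(z^2/\beta^2)$ and integrates to $O(1)$; the far piece $[2z_\ast,M]$, where $\phi'(z)\ge z^2/4$ and integration by parts again gives $O(1)$; and the window $[z_\ast/2,2z_\ast]$ around $z_\ast$, on which $1/z\le2/z_\ast$. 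On this window I would pass to the complex plane: outside the $|\phi''(z_\ast)|^{-1/2}$-neighborhood of $z_\ast$ the derivative $\phi'$ is still large enough for integration by parts, while on that neighborhood, writing $\sin\phi=\mathrm{Im}\,e^{i\phi}$ and deforming the contour to a short arc in the upper half-plane along which $\mathrm{Im}\,\phi\ge0$, the bound $|e^{i\phi}|\le e^{-\mathrm{Im}\,\phi}$ produces a Gaussian gain and the estimate $\lesssim z_\ast^{-1}|\phi''(z_\ast)|^{-1/2}=O(\beta^{-3/4})=O(1)$, the endpoints of the arc contributing boundary terms that are also $O(1)$.

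The main obstacle is this last step, and within it the uniform control of the stationary-point window together with its interface with the small-$\beta$ (Airy) regime: the stationary-phase contribution must be estimated by a constant independent of both $\beta$ and the moving upper limit $M$, which is exactly why the explicit exponential gain from the contour shift is preferable to quoting stationary-phase asymptotics, and why the window has to be glued carefully to the two integration-by-parts regions. Step~1, by contrast, is routine but tedious --- essentially a one-sided rerun of Lemma~\ref{epslesszerosubstit} with an extra check at the interior minimum --- and the reduction in the second step copies the corresponding step in the proof of Lemma~\ref{gammalessonelemma} verbatim.
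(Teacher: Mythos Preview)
Your plan is sound and would prove the lemma, but it takes a genuinely different route from the paper. You keep the same cubic model $g_\varepsilon(u)=-\varepsilon u+u^3/6$ as in the $\varepsilon\le0$ case, at the price of adjusting the parameter to some $\widehat\varepsilon=\varepsilon(1+O(\varepsilon))$ so that the minimum values of $f_\varepsilon$ and $g_{\widehat\varepsilon}$ match exactly; the paper instead expands $f_\varepsilon$ at the critical point $x(\varepsilon)$ and takes the normal form $g_\varepsilon(u)=a_\varepsilon+\tfrac12 b_\varepsilon u^2+\tfrac16 u^3$, so the minimum sits at $u=0$ automatically, the domain becomes $[l_\varepsilon,r_\varepsilon]$ with $l_\varepsilon<0<r_\varepsilon$, and $\varphi_\varepsilon(l_\varepsilon)=0$. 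After the Lemma~\ref{techLemmah} reduction your leftover integral has its singularity at $u=0$ and its stationary point at $u_\ast=\sqrt{2\widehat\varepsilon}>0$, whereas the paper's leftover $\int_{l_\varepsilon}^{r_\varepsilon}\frac{\sin(ng_\varepsilon(u))}{u-l_\varepsilon}\,du$ has singularity at $l_\varepsilon$ and stationary point at $0$. For the final bound you rescale by $n^{1/3}$ to a two-parameter Airy-type integral and treat small and large $\beta=\widehat\varepsilon n^{2/3}$ separately, with a short contour shift only near the stationary point; the paper instead writes $\sin=\mathrm{Im}\exp$, regularizes the pole by a $\delta$-shift, and deforms the \emph{entire} contour into three explicit segments at angles $\pm\pi/4$ (through $l_\varepsilon$, $\tfrac{1}{\sqrt2}l_\varepsilon e^{-i\pi/4}$, $\tfrac{1}{\sqrt2}r_\varepsilon e^{i\pi/4}$, $r_\varepsilon$), checking by hand that $\mathrm{Im}\,g_\varepsilon\ge0$ on each and extracting explicit numerical bounds. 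Your approach is more modular and recycles the $\varepsilon\le0$ machinery almost verbatim; the paper's avoids the parameter adjustment and the small/large-$\beta$ dichotomy, trading them for a longer but entirely explicit contour computation.
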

\begin{proof}
  The idea and scheme of the proof are the same
as for Lemma \ref{gammalessonelemma} -- to do appropriate change
of variables and then estimate the simpler integral. The only difference
will be that the phase function $f_{\varepsilon}(x)=x-(1+\varepsilon)\sin x$
has critical point on the integration interval. Now we pass to detailed
proof.

The derivative of the phase function 
\[
f'_{\varepsilon}(x)=1-(1+\varepsilon)\cos x
\]
has exactly one zero on $[0,2\pi]$: 
\begin{equation}
x(\varepsilon)=\arccos\frac{1}{1+\varepsilon},\quad f'_{\varepsilon}(x(\varepsilon))=0.\label{x_epsilon}
\end{equation}
Choose $\varepsilon'>0$ so that $x(\varepsilon)\in[0, \pi / 6]$
for all $\varepsilon\in(0;\varepsilon']$. For this it is sufficient
that 
\[
\frac{1}{1+\varepsilon'}\geqslant\frac{\sqrt{3}}{2},\ \quad\varepsilon'\leqslant\frac{2}{\sqrt{3}}-1.
\]
Thus, we proved that $f_{\varepsilon}(x)$ has exactly one critical
point on $[0,\pi / 4]$, for any $\varepsilon\in(0,\varepsilon']$.
The graph of function $f_{\varepsilon}(x)$ for $\eps=0.1$ is as
follows: 
\[
 \hspace*{-20mm}\includegraphics[scale=0.25]{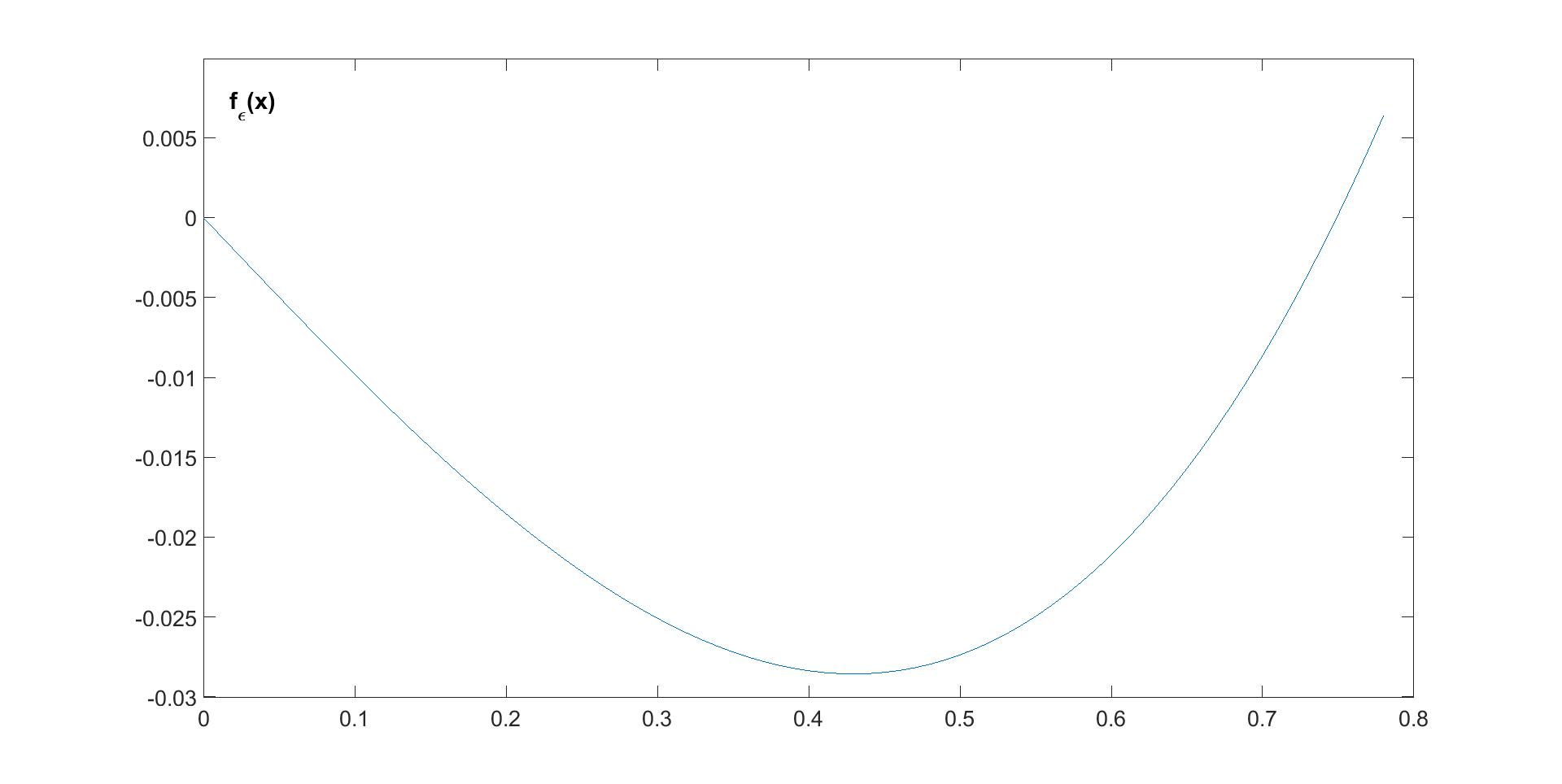}
\]

Now we write Taylor expansion of $f_{\varepsilon}(x)$ in a neighborhood
of $x(\varepsilon)$: 
\begin{equation}
f_{\varepsilon}(x)=f_{\varepsilon}(x(\varepsilon))+\frac{1}{2}f''_{\varepsilon}(x(\varepsilon))(x-x(\varepsilon))^{2}+\frac{1}{6}f'''_{\varepsilon}(x(\varepsilon))(x-x(\varepsilon))^{3}+R_{\varepsilon}(x),\label{epsbolzerfTailor}
\end{equation}
where 
\[
R_{\varepsilon}(x)=\frac{1}{3!}\int_{x(\varepsilon)}^{x}(x-s)^{3}f_{\varepsilon}^{(4)}(s)ds.
\]
Introduce the following notation 
\begin{align}
a_{\varepsilon}= & \ f_{\varepsilon}(x(\varepsilon))=x(\varepsilon)-(1+\varepsilon)\sin x(\varepsilon),\\
b_{\varepsilon}= & \ f''_{\varepsilon}(x(\varepsilon))=(1+\varepsilon)\sin x(\varepsilon).
\end{align}
As $f'''_{\varepsilon}(x)=(1+\varepsilon)\cos x$, by (\ref{x_epsilon})
we have $f'''_{\varepsilon}(x(\varepsilon))=1$. Note that $b_{\varepsilon}>0$.
Then (\ref{epsbolzerfTailor}) can be rewritten as follows:
\begin{equation}
f_{\varepsilon}(x)=a_{\varepsilon}+\frac{1}{2}b_{\varepsilon}(x-x(\varepsilon))^{2}+\frac{1}{6}(x-x(\varepsilon))^{3}+R_{\varepsilon}(x)=g_{\varepsilon}(x-x(\varepsilon))+R_{\varepsilon}(x),\label{epsbolzerfTailorViag}
\end{equation}
where 
\[
g_{\varepsilon}(x)=a_{\varepsilon}+\frac{1}{2}b_{\varepsilon}x^{2}+\frac{1}{6}x^{3}.
\]
The graph of the function $g_{\varepsilon}(x)$: 
\[
 \hspace*{-20mm}\includegraphics[scale=0.28]{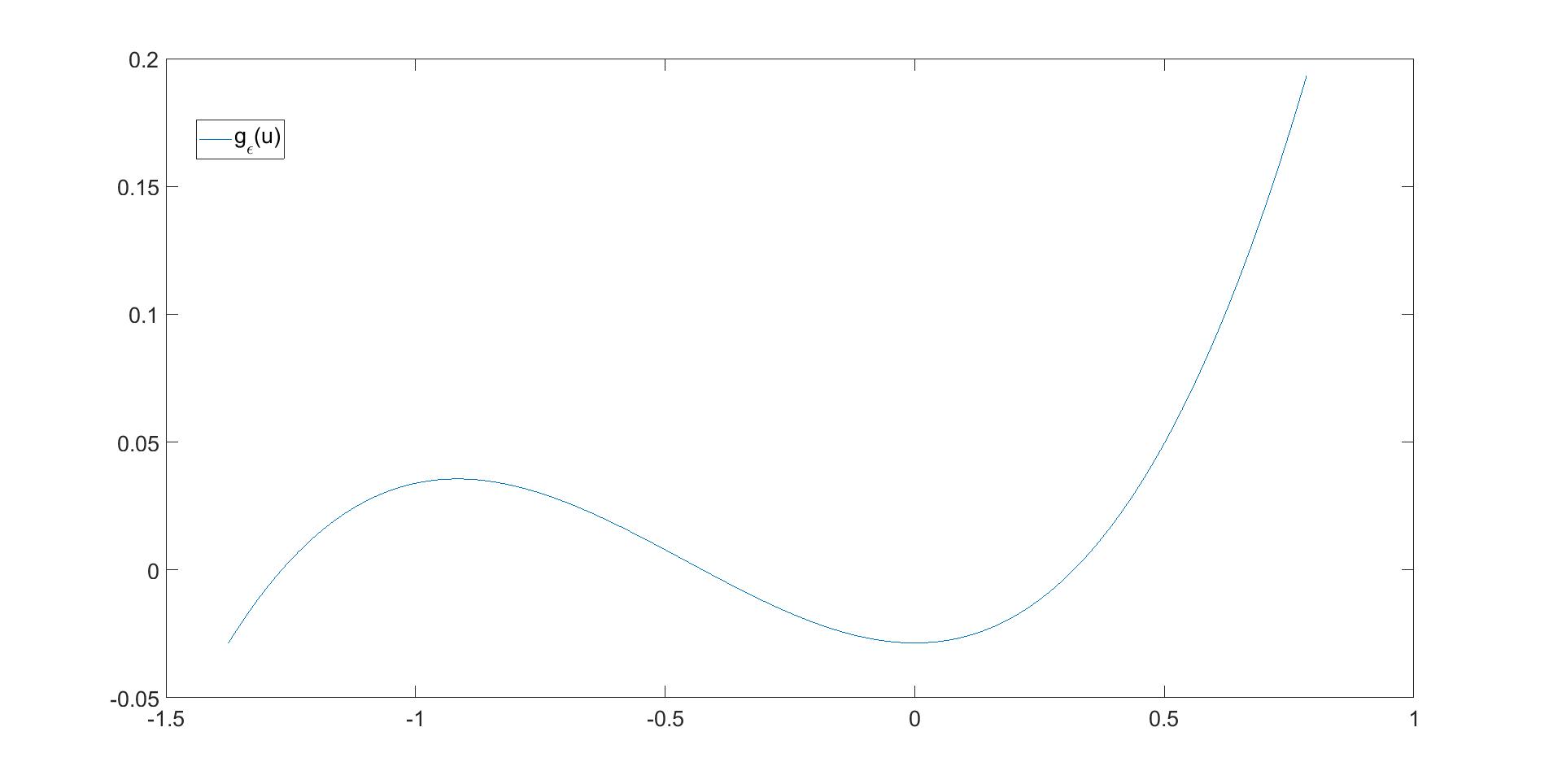}
\]

Note that for $\epsilon=0$ the function $g_{0}(x)$, introduced here,
coincides exactly wit the function $g_{0}(x)$, introduced during
the proof of Lemma \ref{gammalessonelemma}. This explains why we
used the same notation.

Now use Lemma \ref{phisubstitutebolzero} and change the variables
$x=\varphi_{\varepsilon}(u)$ in the integral  $L_{n}(\varepsilon)$.
We get 
\begin{align*}
  L_{n}(\eps)&=\int_{l_{\eps}}^{r_{\eps}}\frac{\sin(ng_{\eps}(u))}{\varphi_{\eps}(u)}\varphi'_{\eps}(u)\ du=\int_{l_{\eps}}^{r_{\eps}}\frac{\sin(ng_{\eps}(u))}{u-l_{\eps}}\frac{u-l_{\eps}}{\varphi_{\eps}(u)}\varphi'_{\eps}(u)\ du\\
  &=\int_{l_{\eps}}^{r_{\eps}}\frac{\sin(ng_{\eps}(u))}{u-l_{\eps}}h_{\eps}(u)\ du\\
&=h(l_{\eps})\int_{l_{\eps}}^{r_{\eps}}\frac{\sin(ng_{\eps}(u))}{u-l_{\eps}}\ du+\int_{l_{\eps}}^{r_{\eps}}\sin(ng_{\eps}(u))\frac{h_{\eps}(u)-h_{\eps}(l_{\eps})}{u-l_{\eps}}\ du,
\end{align*}
where 
\[
h_{\eps}(u)=\frac{u-l_{\eps}}{\varphi_{\eps}(u)}\varphi'_{\eps}(u)=\tilde{h}_{\eps}(u-l_{\eps}).
\]
The properties of the function $\varphi_{\eps}(u)$, described in Lemma
\ref{phisubstitutebolzero} and Lemma \ref{techLemmah}, show that
the second integral in the last formula is bounded uniformly in $n$
and $0<\eps\leqslant\eps'$. Thus, 
\[
L_{n}(\eps)=\tilde{L}_{n}(\eps)+O(1),
\]
where 
\[
\tilde{L}_{n}(\eps)=\int_{l_{\eps}}^{r_{\eps}}\frac{\sin(ng_{\eps}(u))}{u-l_{\eps}}\ du.
\]
Now we want to estimate the integral $\tilde{L}_{n}(\eps)$ by transforming
the integration interval to special curve in the complex plane. To
do this we need some transformations. Using theorem on major convergence
we have: 
\begin{equation}
\tilde{L}_{n}(\eps)=\lim_{\delta\rightarrow0+}\int_{l_{\eps}}^{r_{\eps}}\frac{\sin(ng_{\eps}(u))}{u-(l_{\eps}-\delta)}\ du.\label{201906070050}
\end{equation}
In fact, as $g(l_{\eps})=0$, the integrand is bounded uniformly in
$u\in[l_{\eps},r_{\eps}]$ for $\delta\geqslant0$: 
\[
\Bigl|\frac{\sin(ng_{\eps}(u))}{u-(l_{\eps}-\delta)}\Bigr|\leqslant\frac{n|g_{\eps}(u)|}{u-(l_{\eps}-\delta)}\leqslant n\max_{u\in[l_{\eps},r_{\eps}]}|g'_{\eps}(u)|\frac{u-l_{\eps}}{u-(l_{\eps}-\delta)}\leqslant n\max_{u\in[l_{\eps},r_{\eps}]}|g'_{\eps}(u)|.
\]
Thus it is possible to pass to the limit under the sign of the integral in (\ref{201906070050}).
 Now we can continue with (\ref{201906070050}): 
\[
\tilde{L}_{n}(\eps)=\lim_{\delta\rightarrow0+}\mathrm{Im}\biggl(\int_{l_{\eps}}^{r_{\eps}}\frac{\exp(ing_{\eps}(u))}{u-(l_{\eps}-\delta)}\ du\biggr)
=\lim_{\delta\rightarrow0+}\mathrm{Im}\biggl(\int_{s(\eps)}\frac{\exp(ing_{\eps}(z))}{z-(l_{\eps}-\delta)}\ dz\biggr),
\]
where the contour $s(\eps)$, shown in the picture, is the union of
three segments: $s_{1}(\eps)$ --- connecting the points $l_{\eps}$
and $A=(1 / \sqrt{2})l_{\eps}e^{-i\pi/4}$; $s_{2}(\eps)$
--- connecting the points $A$ and $B=(1/\sqrt{2})r_{\eps}e^{i\pi / 4}$; 
$s_{3}(\eps)$ --- connecting the points $B$ and $r_{\eps}$.

\begin{figure}[!hbtp]
   \hspace*{-20mm}\includegraphics[scale=0.28]{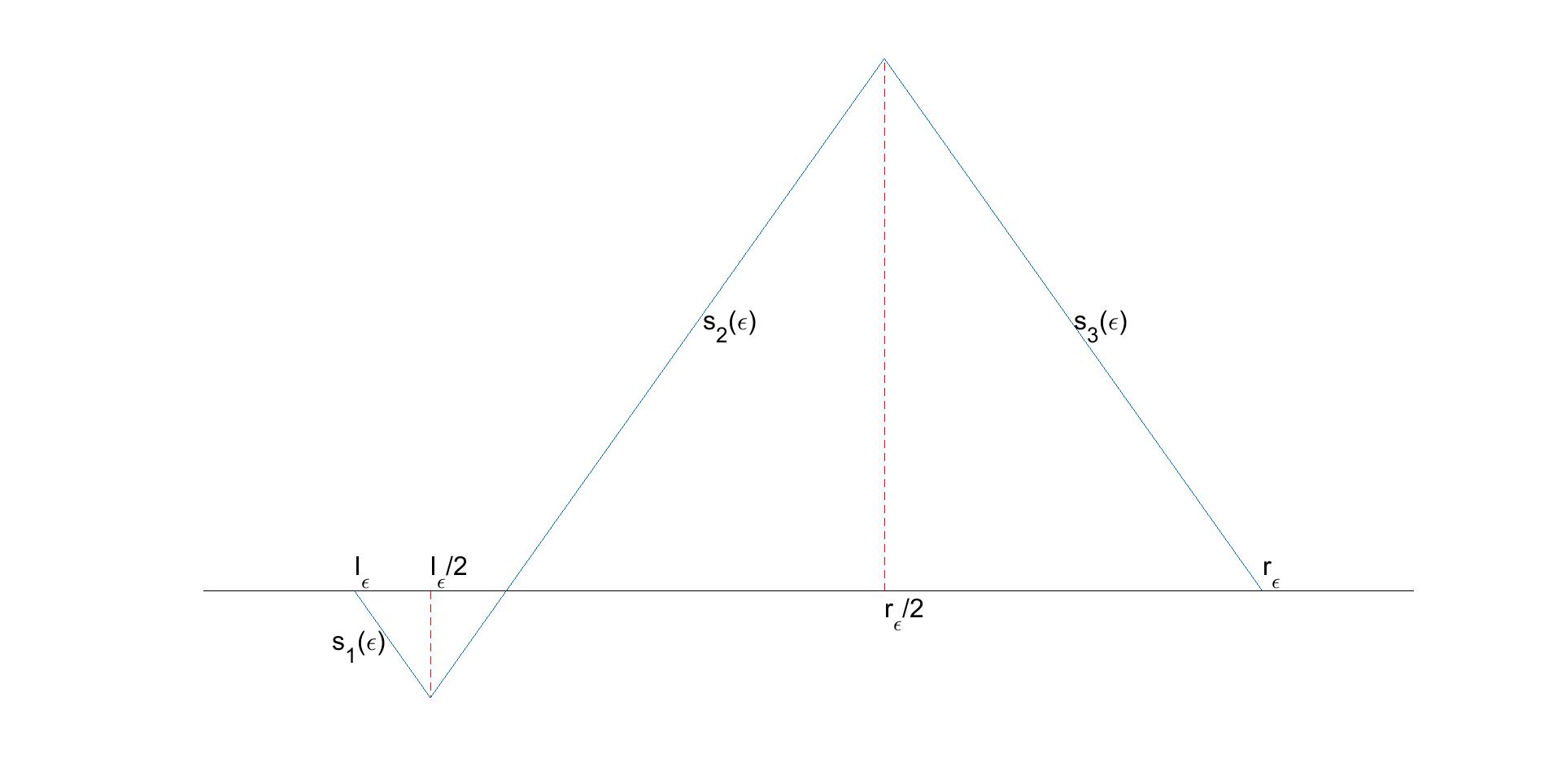}
\end{figure}
  
The change of the contour is possible due to analyticity of the integrand. Thus we can write 
\begin{equation}
\tilde{L}_{n}(\eps)=\lim_{\delta\rightarrow0+}\mathrm{Im}(I_{n}^{(1)}(\eps,\delta)+I_{n}^{(2)}(\eps,\delta)+I_{n}^{(3)}(\eps,\delta)) \label{201906082117}
\end{equation}
where 
\[
I_{n}^{(k)}(\eps,\delta)=\int_{s_{k}(\eps)}\frac{\exp(ing_{\eps}(z))}{z-(l_{\eps}-\delta)}\ dz,\ k=1,2,3.
\]
Now we should estimate all integrals $I_{n}^{(k)}(\eps,\delta)$ and
prove that the corresponding limits are bounded from above by absolute
constants, not depending on $n,\eps$. Note first that $b_{\eps},x(\eps),l_{\eps}$
are of the order $\sqrt{\eps}$ for small $\eps$, which makes more
complicated derivation of uniform estimates for first two integrals
$I_{n}^{(k)}(\eps,\delta),\ k=1,2$.

\subsubsection{Integral over $\boldsymbol{s_{1}(\eps)}$}

Consider first $I_{n}^{(1)}(\eps,\delta)$. The segment $s_{1}(\eps)$
can be presented as follows 
\[
z=l_{\eps}+\xi\rho,\quad \xi=e^{-i\pi / 4},\quad \rho\in[0,\tilde{\rho}],\quad \tilde{\rho}=\frac{|l_{\eps}|}{\sqrt{2}}.
\]
Then
\[
I_{n}^{(1)}(\eps,\delta)=\int_{s_{1}(\eps)}\frac{\exp(ing_{\eps}(z))}{z-(l_{\eps}-\delta)}\ dz=\xi\int_{0}^{\tilde{\rho}}\frac{\exp(ing_{\eps}(l_{\eps}+\xi\rho))}{\xi\rho+\delta}\ d\rho.
\]
  Using the Taylor expansion for $g_{\eps}(l_{\eps}+\xi\rho)$ 
 at  $l_{\eps}$, and taking into account that $g_{\eps}(l_{\eps})=0$, we get: 
\[
g_{\eps}(l_{\eps}+\xi\rho)=g'_{\eps}(l_{\eps})\xi\rho+\frac{g''_{\eps}(l_{\eps})}{2}\xi^{2}\rho^{2}+\frac{g_{\eps}^{(3)}(l_{\eps})}{6}\xi^{3}\rho^{3}=u_{1}(\rho)+iv_{1}(\rho),
\]
where 
\[
u_{1}(\rho)=\mathrm{Re}(g_{\eps}(l_{\eps}+\xi\rho))=\kappa g'_{\eps}(l_{\eps})\rho-\frac{\kappa}{6}\rho^{3},
\]
\[
v_{1}(\rho)=\mathrm{Im}(g_{\eps}(l_{\eps}+\xi\rho))=-\kappa g'_{\eps}(l_{\eps})\rho-\frac{g''_{\eps}(l_{\eps})}{2}\rho^{2}-\frac{\kappa}{6}\rho^{3},
\]
where $\kappa=1 / \sqrt{2}$. Then 
\begin{align*}
  &\mathrm{Im}(I_{n}^{(1)}(\eps,\delta))\\
  &\quad =\mathrm{Im}\Biggl(\xi\int_{0}^{\tilde{\rho}}\frac{\exp(ing_{\eps}(l_{\eps}+\xi\rho))}{\xi\rho+\delta}\ d\rho\Biggr)\\
  &\quad =\mathrm{Im}\Biggl(\xi\int_{0}^{\tilde{\rho}}\frac{\exp(inu_{1}(\rho))}{\xi\rho+\delta}e^{-nv_{1}(\rho)}\ d\rho\Biggr)
\\
                            &\quad =\int_{0}^{\tilde{\rho}}\mathrm{Im}\left(\xi e^{inu_{1}(\rho)}(\bar{\xi}\rho+\delta)\right)\frac{e^{-nv_{1}(\rho)}}{|\xi\rho+\delta|^{2}}\ d\rho\\
  &\quad =\int_{0}^{\tilde{\rho}}\mathrm{Im}\left(e^{inu_{1}(\rho)}(\rho+\xi\delta)\right)\frac{e^{-nv_{1}(\rho)}}{|\rho+\bar{\xi}\delta|^{2}}\ d\rho
\\
&\quad =\int_{0}^{\tilde{\rho}}(\cos(nu_{1}(\rho))(-\kappa\delta)+\sin(nu_{1}(\rho))(\rho+\kappa\delta))\frac{e^{-nv_{1}(\rho)}}{(\rho+\kappa\delta)^{2}+(\kappa\delta)^{2}}\ d\rho
\\
&\quad =Z_{n}^{(1)}(\eps,\delta)+Z_{n}^{(2)}(\eps,\delta),
\end{align*}
where 
\[
Z_{n}^{(1)}(\eps,\delta)=\int_{0}^{\tilde{\rho}}(\cos(nu_{1}(\rho))(-\kappa\delta)+\sin(nu_{1}(\rho))\kappa\delta)\frac{e^{-nv_{1}(\rho)}}{(\rho+\kappa\delta)^{2}+(\kappa\delta)^{2}}\ d\rho,
\]
\[
Z_{n}^{(2)}(\eps,\delta)=\int_{0}^{\tilde{\rho}}\rho\sin(nu_{1}(\rho))\frac{e^{-nv_{1}(\rho)}}{(\rho+\kappa\delta)^{2}+(\kappa\delta)^{2}}\ d\rho .
\]
One cannot take the limit $\delta\to+0$ inside the integral, as there
is no uniform convergence and the integrand equals $c/\delta$ at
the point $\rho=0$, i.e.\ it is not uniformly bounded. Now we will
prove that for all $\delta\geqslant0$, the absolute value of the integral
$\mathrm{Im}(I_{n}^{(1)}(\eps,\delta))$ can be estimated from above
by a value not depending on $n,\eps,\delta$. We should study
first the function $v_{1}(\rho)$. Let us show first that $v_{1}(\rho)$
is convex for $\rho\geqslant0$. Indeed, we have: 
\[
v_{1}''(\rho)=-g''_{\eps}(l_{\eps})-\kappa\rho.
\]
Note, using the inequality (\ref{bepslepsxepsabs}), that 
\[
g''_{\eps}(l_{\eps})=b_{\eps}+l_{\eps}>b_{\eps}-x({\eps})=-a_{\eps}>0.
\]
Hence
\[
v_{1}''(\rho)<0\quad \mbox{for}\ \rho\geqslant0.
\]
Upward convexity of $v_{1}(\rho)$ follows. Since $v_{1}(0)=0$, 
by upward convexity we have the following inequality for all $\rho\in[0,\tilde{\rho}]$:
\begin{equation}
v_{1}(\rho)\geqslant\rho\frac{v_{1}(\tilde{\rho})}{\tilde{\rho}}.\label{v1rhoineq}
\end{equation}
Let us find now the value of $v_{1}(\tilde{\rho})$. To do this we
shall write down the derivatives of $g_{\eps}$ at point $l_{\eps}$
through $a_{\eps},b_{\eps}$: 
\[
g'_{\eps}(l_{\eps})=b_{\eps}l_{\eps}+\frac{l_{\eps}^{2}}{2},\quad g''_{\eps}(l_{\eps})=b_{\eps}+l_{\eps}.
\]
Using $l_{\eps}<0$ and (\ref{bepslepsxepsabs}) we have: 
\begin{align*}
  v_{1}(\tilde{\rho})&=-\kappa^{2}\Bigl(b_{\eps}l_{\eps}+\frac{l_{\eps}^{2}}{2}\Bigr) |l_{\eps}|-\frac{\kappa^{2}}{2}(b_{\eps}+l_{\eps})l_{\eps}^{2}-\frac{\kappa^{4}}{6}|l_{\eps}|^{3}\\
  &=\kappa^{2}l_{\eps}^{2}\Bigl(b_{\eps}+\frac{1}{2}l_{\eps}-\frac{1}{2}(b_{\eps}+l_{\eps})+\frac{1}{12}l_{\eps}\Bigr)=
\\
&=\frac{\kappa^{2}l_{\eps}^{2}}{2}\Bigl(b_{\eps}+\frac{1}{6}l_{\eps}\Bigr)>\frac{\kappa^{2}l_{\eps}^{2}}{2}\left(b_{\eps}+l_{\eps}\right)>0.
\end{align*}
Thus, from (\ref{v1rhoineq}) it follows, in particular,
that $v(\rho)\geqslant0$ if $\rho\in[0,\tilde{\rho}]$.

Now we want to estimate $Z_{n}^{(1)}(\eps,\delta)$: 
\begin{align*}
  |Z_{n}^{(1)}(\eps,\delta)| &\leqslant2\delta\int_{0}^{\tilde{\rho}}\frac{1}{(\rho+\kappa\delta)^{2}+(\kappa\delta)^{2}}\ d\rho
                               \qquad \mbox{[substitute \ensuremath{\rho=\delta u}]} \\
  &=2\int_{0}^{\tilde{\rho}/\delta}\frac{1}{(u+\kappa)^{2}+\kappa^{2}}\ du
\leqslant2\int_{0}^{+\infty}\frac{1}{u^{2}+\kappa^{2}}\ du\leqslant4\frac{\pi}{2}=2\pi.
\end{align*}
Note first that, as $u_{1}(0)=0$, by Lagrange theorem $$|u_{1}(\rho)|\leqslant\rho\max_{\rho\in[0,\tilde{\rho}]}|u'_{1}(\rho)|=\rho M$$
for all $\rho\in[0,\tilde{\rho}]$, where $M=\max_{\rho\in[0,\tilde{\rho}]}|u'_{1}(\rho)|$.
Using (\ref{v1rhoineq}) we get the bounds: 
\begin{align*}
  |Z_{n}^{(2)}(\eps,\delta)| &\leqslant nM\int_{0}^{\tilde{\rho}}\rho^{2}\frac{e^{-nv_{1}(\rho)}}{(\rho+\kappa\delta)^{2}+(\kappa\delta)^{2}}\ d\rho
                               \leqslant nM\int_{0}^{\tilde{\rho}}e^{-nv_{1}(\rho)}d\rho \\
&  \leqslant nM\int_{0}^{\tilde{\rho}}e^{-n\rho v_{1}(\tilde{\rho}) / \tilde{\rho}}d\rho
=nM\frac{1}{n v_{1}(\tilde{\rho}) / \tilde{\rho}}\left(1-e^{-nv_{1}(\tilde{\rho})}\right)\leqslant M\frac{\tilde{\rho}}{v_{1}(\tilde{\rho})}.
\end{align*}
Estimate now $M=\max_{\rho\in[0,\tilde{\rho}]}|u'_{1}(\rho)|$. For
$\rho\in[0,\tilde{\rho}]$ we have: 
\begin{align*}
  |u'_{1}(\rho)|&=\Bigl|\kappa g'_{\eps}(l_{\eps})-\frac{\kappa}{2}\rho^{2}\Bigr|
  \leqslant\kappa|g'_{\eps}(l_{\eps})|+\frac{\kappa^{2}}{2}l_{\eps}^{2}\\
&  \leqslant\kappa\Bigl(b_{\eps}|l_{\eps}|+\frac{l_{\eps}^{2}}{2}\Bigr)+\frac{\kappa^{2}}{2}l_{\eps}^{2}
  \leqslant|l_{\eps}|\left(b_{\eps}+2|l_{\eps}|\right).
\end{align*}
Continuing estimation of $|Z_{n}^{(2)}(\eps,\delta)|$, using (\ref{bepslepsxepsrel}) we get
\begin{align*}
  |Z_{n}^{(2)}(\eps,\delta)| &\leqslant|l_{\eps}|\left(b_{\eps}+2|l_{\eps}|\right)\frac{\kappa|l_{\eps}|}{(\kappa^{2}l_{\eps}^{2} / 2) (b_{\eps}+ l_{\eps} / 6 )}
                               \leqslant4\frac{b_{\eps}+2|l_{\eps}|}{b_{\eps}+ l_{\eps} / 6}\\
  &\leqslant4\frac{b_{\eps} / |l_{\eps}| +2}{b_{\eps} / |l_{\eps}|- (1/6)}\leqslant4\frac{6}{1- 1/6}=\frac{144}{5}.
\end{align*}
Thus we have shown that for all $\delta>0$ and all $\eps\in(0,\eps'],n\geqslant1$:
\begin{equation}
|\mathrm{Im}(I_{n}^{(1)}(\eps,\delta))|\leqslant2\pi+\frac{144}{5}\leqslant36.\label{201906082115}
\end{equation}

\subsubsection{Integral over $\boldsymbol{s_{2}(\eps)}$}
The segment $s_{2}(\eps)$ can be expressed as 
\[
z=\eta\rho,\quad \eta=e^{i \pi / 4},\quad \rho\in[-\tilde{\rho},\hat{\rho}],\quad \tilde{\rho}=\frac{|l_{\eps}|}{\sqrt{2}},\quad \hat{\rho}=\frac{r_{\eps}}{\sqrt{2}},
\]
hence we have 
\[
I_{n}^{(2)}(\eps,\delta)=\int_{s_{2}(\eps)}\frac{\exp(ing_{\eps}(z))}{z-(l_{\eps}-\delta)}\ dz=\eta\int_{-\tilde{\rho}}^{\hat{\rho}}\frac{\exp(ing_{\eps}(\eta\rho))}{\eta\rho-(l_{\eps}-\delta)}\ d\rho.
\]
Note that for all $\rho\in[-\tilde{\rho},\hat{\rho}]$ and $\delta\geqslant0$:
\[
\Bigl|\frac{\exp(ing_{\eps}(\eta\rho))}{\eta\rho-(l_{\eps}-\delta)}\Bigr|\leqslant\frac{c}{|l_{\eps}|}
\]
for some constant $c>0$ not depending on $\rho$ and $\delta$.
Then by major convergence theorem we have 
\[
\lim_{\delta\rightarrow0+}\mathrm{Im}(I_{n}^{(2)}(\eps,\delta))=\mathrm{Im}\left(\eta\int_{-\tilde{\rho}}^{\hat{\rho}}\frac{\exp(ing_{\eps}(\eta\rho))}{\eta\rho-l_{\eps}}\ d\rho\right)=:W_{n}(\eps).
\]
Now let us find $g_{\eps}(\eta\rho)$: 
\[
g_{\eps}(\eta\rho)=a_{\eps}+\frac{b_{\eps}}{2}\eta^{2}\rho^{2}+\frac{1}{6}\eta^{3}\rho^{3}=u_{2}(\rho)+iv_{2}(\rho),
\]
where 
\[
u_{2}(\rho)=\mathrm{Re}(g_{\eps}(\eta\rho))=a_{\eps}-\frac{\kappa}{6}\rho^{3},
\]
\[
v_{2}(\rho)=\mathrm{Im}(g_{\eps}(\eta\rho))=\frac{b_{\eps}}{2}\rho^{2}+\frac{\kappa}{6}\rho^{3},
\]
and, as above, we used notation $\kappa=1 / \sqrt{2}$.

Let us come back to the equality for $W_{n}(\eps)$: 
\begin{align*}
  W_{n}(\eps)&=\int_{-\tilde{\rho}}^{\hat{\rho}}\mathrm{Im}\Bigl(\eta\frac{e^{inu_{2}(\rho)}}{\eta\rho-l_{\eps}}\Bigr)e^{-nv_{2}(\rho)}d\rho\\
  &=\int_{-\tilde{\rho}}^{\hat{\rho}}\mathrm{Im}\left(\eta e^{inu_{2}(\rho)}(\bar{\eta}\rho-l_{\eps})\right)\frac{e^{-nv_{2}(\rho)}}{|\eta\rho-l_{\eps}|^{2}}d\rho
\\
&=\int_{-\tilde{\rho}}^{\hat{\rho}}\mathrm{Im}\left(e^{inu_{2}(\rho)}(\rho-l_{\eps}\eta)\right)\frac{e^{-nv_{2}(\rho)}}{|\rho-\bar{\eta}l_{\eps}|^{2}}d\rho
\\
&=\int_{-\tilde{\rho}}^{\hat{\rho}}\frac{\cos(nu_{2}(\rho))(-l_{\eps}\kappa)+\sin(nu_{2}(\rho))(\rho-l_{\eps}\kappa)}{(\rho-\kappa l_{\eps})^{2}+(\kappa l_{\eps})^{2}}e^{-nv_{2}(\rho)}d\rho
\\
             &=\int_{-\tilde{\rho}}^{\hat{\rho}}\frac{\cos(nu_{2}(\rho))\tilde{\rho}+\sin(nu_{2}(\rho))(\rho+\tilde{\rho})}{(\rho+\tilde{\rho})^{2}+\tilde{\rho}^{2}}e^{-nv_{2}(\rho)}d\rho\\
  &=W_{n}^{(1)}(\eps)+W_{n}^{(2)}(\eps),
\end{align*}
where 
\[
W_{n}^{(1)}(\eps)=\tilde{\rho}\int_{-\tilde{\rho}}^{\hat{\rho}}\frac{\cos(nu_{2}(\rho))+\sin(nu_{2}(\rho))}{(\rho+\tilde{\rho})^{2}+\tilde{\rho}^{2}}e^{-nv_{2}(\rho)}d\rho,
\]
\[
W_{n}^{(2)}(\eps)=\int_{-\tilde{\rho}}^{\hat{\rho}}\rho\frac{\sin(nu_{2}(\rho))}{(\rho+\tilde{\rho})^{2}+\tilde{\rho}^{2}}e^{-nv_{2}(\rho)}d\rho.
\]
Now we estimate $W_{n}^{(1)}(\eps),W_{n}^{(2)}(\eps)$. Note first
that the function $v_{2}(\rho)$ is non-negative for $\rho\in[-\tilde{\rho},\hat{\rho}]$:
\begin{equation}
  v_{2}(\rho)=\frac{\rho^{2}}{2}\Bigl(b_{\eps}+\frac{\kappa}{3}\rho\Bigr)\geqslant\frac{\rho^{2}}{2}\Bigl(b_{\eps}+\frac{\kappa^{2}}{3}l_{\eps}\Bigr)
  =\frac{\rho^{2}}{2}\Bigl(b_{\eps}+\frac{1}{6}l_{\eps}\Bigr)>0.\label{v2ineqminor}
\end{equation}
The last inequality follows from (\ref{bepslepsxepsabs}). Then 
\begin{align*}
  |W_{n}^{(1)}(\eps)|&\leqslant2\tilde{\rho}\int_{-\tilde{\rho}}^{\hat{\rho}}\frac{1}{(\rho+\tilde{\rho})^{2}+\tilde{\rho}^{2}}d\rho\leqslant2\tilde{\rho}\int_{-\tilde{\rho}}^{+\infty}\frac{1}{(\rho+\tilde{\rho})^{2}+\tilde{\rho}^{2}}d\rho\\
  &\mbox{[substitute \ensuremath{\rho=\tilde{\rho}u}]} \qquad
    =2\int_{-1}^{+\infty}\frac{1}{(u+1)^{2}+1}du\\
  &=2\int_{-1}^{0}\frac{1}{(u+1)^{2}+1}du+2\int_{0}^{+\infty}\frac{1}{(u+1)^{2}+1}du\leqslant2+\pi.
\end{align*}
Now we estimate $W_{n}^{(2)}(\eps)$. To do this we rewrite it as
follows 
\[
W_{n}^{(2)}(\eps)=W_{n}^{(2),-}(\eps)+W_{n}^{(2),+}(\eps),
\]
where 
\[
W_{n}^{(2),-}(\eps)=\int_{-\tilde{\rho}}^{0}\rho\frac{\sin(nu_{2}(\rho))}{(\rho+\tilde{\rho})^{2}+\tilde{\rho}^{2}}\, e^{-nv_{2}(\rho)}d\rho,
\]
\[
W_{n}^{(2),+}(\eps)=\int_{0}^{\hat{\rho}}\rho\frac{\sin(nu_{2}(\rho))}{(\rho+\tilde{\rho})^{2}+\tilde{\rho}^{2}} \, e^{-nv_{2}(\rho)}d\rho.
\]
The integral $W_{n}^{(2),-}(\eps)$ is estimated similarly to $W_{n}^{(1)}(\eps)$:
\[
|W_{n}^{(2),-}(\eps)|\leqslant\int_{-\tilde{\rho}}^{0}\frac{|\rho|}{(\rho+\tilde{\rho})^{2}+\tilde{\rho}^{2}}d\rho=\int_{-1}^{0}\frac{|u|}{(u+1)^{2}+1}du\leqslant\frac{1}{2}.
\]
While analyzing $W_{n}^{(2),+}(\eps)$, a trivial constant estimate
for the sine does not allow to obtain an estimate of the corresponding
integral uniform in $n,\eps$. One of the reasons is that the integrand
(not taking the sine into account) for small $\eps$ at zero has order
$1 / \tilde{\rho} \sim 1 / \sqrt{\eps}$. Write now $W_{n}^{(2),+}(\eps)$
as follows: 
\begin{align*}
  W_{n}^{(2),+}(\eps)&=\sin(na_{\eps})\int_{0}^{\hat{\rho}}\rho\frac{\cos\left( \kappa n\rho^{3} / 6 \right)}{(\rho+\tilde{\rho})^{2}+\tilde{\rho}^{2}} \,
                       e^{-nv_{2}(\rho)}d\rho\\
  &\quad {} -\cos(na_{\eps})\int_{0}^{\hat{\rho}}\rho\frac{\sin\left( \kappa n\rho^{3} / 6 \right)}{(\rho+\tilde{\rho})^{2}+\tilde{\rho}^{2}} \, e^{-nv_{2}(\rho)}d\rho
\\
&=\sin(na_{\eps})C_{n}(\eps)-\cos(na_{\eps})S_{n}(\eps),
\end{align*}
where 
\[
  C_{n}(\eps)=\int_{0}^{\hat{\rho}}\rho\frac{\cos\left( \kappa n\rho^{3} / 6 \right)}{(\rho+\tilde{\rho})^{2}+\tilde{\rho}^{2}}e^{-nv_{2}(\rho)}d\rho,\quad
  S_{n}(\eps)=\int_{0}^{\hat{\rho}}\rho\frac{\sin\left( \kappa n\rho^{3} / 6 \right)}{(\rho+\tilde{\rho})^{2}+\tilde{\rho}^{2}}e^{-nv_{2}(\rho)}d\rho
\]
and we have to estimate it. Put $d_{\eps}= \left(b_{\eps}+ l_{\eps} / 6 \right) /2$.
Using  (\ref{v2ineqminor}) we get: 
\begin{align*}
|C_{n}(\eps)| &\leqslant\int_{0}^{\hat{\rho}}\frac{\rho}{(\rho+\tilde{\rho})^{2}+\tilde{\rho}^{2}}\, e^{-nv_{2}(\rho)}d\rho\leqslant\int_{0}^{\hat{\rho}}\frac{\rho}{(\rho+\tilde{\rho})^{2}+\tilde{\rho}^{2}}\, e^{-nd_{\eps}\rho^{2}}d\rho
\\
              &\leqslant\int_{0}^{\infty}\frac{\rho}{(\rho+\tilde{\rho})^{2}+\tilde{\rho}^{2}}\, e^{-nd_{\eps}\rho^{2}}d\rho
                \qquad \mbox{[substitute \ensuremath{\rho= u / \sqrt{nd_{\eps}}}]}
\\
              &=\int_{0}^{\infty}\frac{u}{(u+\tilde{\rho}\sqrt{nd_{\eps}})^{2}+(\tilde{\rho}\sqrt{nd_{\eps}})^{2}} \, e^{-u^{2}}du\\
  &\leqslant\frac{1}{(\tilde{\rho}\sqrt{nd_{\eps}})^{2}}\int_{0}^{\infty}ue^{-u^{2}}du=\frac{1}{2n\tilde{\rho}^{2}d_{\eps}}.
\end{align*}
Now we estimate $S_{n}(\eps)$. Note that $v_{2}(\rho)> \kappa \rho^{3} / 6$
for $\rho\geqslant0$. Consequently, 
\begin{align*}
  |S_{n}(\eps)|&\leqslant n\frac{\kappa}{6}\int_{0}^{\hat{\rho}}\frac{\rho^{4}}{(\rho+\tilde{\rho})^{2}+\tilde{\rho}^{2}} \, e^{-nv_{2}(\rho)}d\rho \\
               &\leqslant n\frac{\kappa}{6}\int_{0}^{\hat{\rho}}\rho^{2}e^{-n \kappa \rho^{3} / 6} d\rho\leqslant n\frac{\kappa}{6}
                 \int_{0}^{\infty}\rho^{2}e^{-n \kappa \rho^{3} / 6 }d\rho=\frac{1}{3}.
\end{align*}
Then 
\[
|W_{n}^{(2),+}(\eps)|\leqslant\frac{|\sin(na_{\eps})|}{2n\tilde{\rho}^{2}d_{\eps}}+\frac{1}{3}=\frac{|a_{\eps}|}{2\tilde{\rho}^{2}d_{\eps}}\frac{|\sin(na_{\eps})|}{n|a_{\eps}|}+\frac{1}{3}\leqslant\frac{|a_{\eps}|}{2\tilde{\rho}^{2}d_{\eps}}+\frac{1}{3}.
\]
The first term here should be estimated separately. Note that by definition
of number $l_{\eps}$ we have: 
\[
a_{\eps}+\frac{b_{\eps}}{2}l_{\eps}^{2}+\frac{1}{6}l_{\eps}^{3}=0.
\]
Then $a_{\eps}=- b_{\eps} l_{\eps}^{2} / 2 - l_{\eps}^{3} / 6 $.
From this, using inequality (\ref{bepslepsxepsrel}), we get: 
\[
  \frac{|a_{\eps}|}{2\tilde{\rho}^{2}d_{\eps}}
  = \frac{l_{\eps}^{2}}{2}
  \frac{|b_{\eps}+ l_{\eps}/3|}{\kappa^{2}l_{\eps}^{2}\left(b_{\eps}+ l_{\eps} / 6 \right)}
  =\frac{b_{\eps}+ l_{\eps} / 3}{b_{\eps}+ l_{\eps} / 6}
  =\frac{b_{\eps} / |l_{\eps}| - 1/3}{b_{\eps} / |l_{\eps}|- 1/6}
  \leqslant\frac{4-1/3}{1-1/6}=\frac{22}{5}
\]
and thus 
\[
|W_{n}^{(2),+}(\eps)|\leqslant\frac{22}{5}+\frac{1}{3}\leqslant5.
\]
Finally we get: 
\begin{equation}
\bigl| \lim_{\delta\rightarrow0+}\mathrm{Im}(I_{n}^{(2)}(\eps,\delta))\bigr| =|W_{n}(\eps)|\leqslant2+\pi+\frac{1}{2}+5\leqslant11.\label{201906082114}
\end{equation}

\subsubsection{Integral over $\boldsymbol{s_{3}(\eps)}$}

We want to estimate the integral over the interval $I_{n}^{(3)}(\eps,\delta)$.
The interval $s_{3}(\eps)$ can be written as follows: 
\[
z=\eta\hat{\rho}+\bar{\eta}\rho,\quad \eta=e^{i \pi / 4},\quad \rho\in[0,\hat{\rho}],\quad \hat{\rho}=\frac{r_{\eps}}{\sqrt{2}}.
\]
This gives 
\[
I_{n}^{(3)}(\eps,\delta)=\int_{s_{3}(\eps)}\frac{\exp(ing_{\eps}(z))}{z-(l_{\eps}-\delta)}\ dz=\bar{\eta}\int_{0}^{\hat{\rho}}\frac{\exp(ing_{\eps}(\eta\hat{\rho}+\bar{\eta}\rho))}{\eta\hat{\rho}+\bar{\eta}\rho-(l_{\eps}-\delta)}\ d\rho.
\]
The same arguments as for $I_{n}^{(2)}(\eps,\delta)$ show that it  is possible to pass 
to the limit in the integrand: 
\[
\lim_{\delta\rightarrow0+}\mathrm{Im}(I_{n}^{(3)}(\eps,\delta))=\mathrm{Im}\left(\bar{\eta}\int_{0}^{\hat{\rho}}\frac{\exp(ing_{\eps}(\eta\hat{\rho}+\bar{\eta}\rho))}{\eta\hat{\rho}+\bar{\eta}\rho-l_{\eps}}\ d\rho\right)=:X_{n}(\eps).
\]
Now let us find $g_{\eps}(\eta\hat{\rho}+\bar{\eta}\rho))$: 
\begin{align*}
  g_{\eps}(\eta\hat{\rho}+\bar{\eta}\rho))&=a_{\eps}+\frac{b_{\eps}}{2}(\eta\hat{\rho}+\bar{\eta}\rho)^{2}+\frac{1}{6}(\eta\hat{\rho}+\bar{\eta}\rho)^{3}\\
  &=a_{\eps}+\frac{b_{\eps}}{2}(i\hat{\rho}^{2}-i\rho^{2}+2\hat{\rho}\rho)
\\
                                          &\quad {} +\frac{1}{6}(\eta^{3}\hat{\rho}^{3}+3\eta\hat{\rho}^{2}\rho+3\bar{\eta}\hat{\rho}\rho^{2}+\bar{\eta}^{3}\rho^{3})\\
  &=u_{3}(\rho)+iv_{3}(\rho),
\end{align*}
where 
\begin{align*}
u_{3}(\rho)&=\mathrm{Re}(g_{\eps}(\eta\hat{\rho}+\bar{\eta}\rho)),
\\
v_{3}(\rho)&=\mathrm{Im}(g_{\eps}(\eta\hat{\rho}+\bar{\eta}\rho))=\frac{b_{\eps}}{2}(\hat{\rho}^{2}-\rho^{2})+\frac{\kappa}{6}(\hat{\rho}^{3}+3\hat{\rho}^{2}\rho-3\hat{\rho}\rho^{2}-\rho^{3})
\\
&=(\hat{\rho}-\rho)\Bigl(\frac{b_{\eps}}{2}(\hat{\rho}+\rho)+\frac{\kappa}{6}(\hat{\rho}^{2}+\rho^{2}+4\hat{\rho}\rho)\Bigr)\geqslant0
\end{align*}
for $\rho\in[0,\hat{\rho}]$. Then 
\[
X_{n}(\eps)=\int_{0}^{\hat{\rho}}e^{-nv_{3}(\rho)}\mathrm{Im}\left(\bar{\eta}\frac{e^{inu_{3}(\rho)}}{\eta\hat{\rho}+\bar{\eta}\rho-l_{\eps}}\right)\ d\rho.
\]
Note that for all $\rho\in[0,\hat{\rho}]$ the following inequality
holds: 
\begin{equation}
\Bigl| \frac{1}{\eta\hat{\rho}+\bar{\eta}\rho-l_{\eps}}\Bigr| \leqslant\frac{\sqrt{2}}{r_{\eps}+|l_{\eps}|}.\label{201906082106}
\end{equation}
Indeed, we shall drop a perpendicular from the point $l_{\eps}$
on the straight line containing the interval $s_{3}(\eps)$ and the
point $r_{\eps}$. It is easy to see that the length of this perpendicular
equals $(r_{\eps}+|l_{\eps}|) / \sqrt{2}$ and is less than the
distance from the point $l_{\eps}$ to the segment $s_{3}(\eps)$. From
this the inequality (\ref{201906082106}) follows. And also for $X_{n}(\eps)$
we get the inequality: 
\begin{equation}
|X_{n}(\eps)|\leqslant\frac{\sqrt{2}\hat{\rho}}{r_{\eps}+|l_{\eps}|}=\frac{r_{\eps}}{r_{\eps}+|l_{\eps}|}\leqslant1.\label{201906082116}
\end{equation}

Finally, from equality (\ref{201906082117}) and inequalities (\ref{201906082115}), (\ref{201906082114}), (\ref{201906082116})
we get the following bound for $\tilde{L}_{n}(\eps)$: 
\[
|\tilde{L}_{n}(\eps)|\leqslant1+11+36\leqslant49.
\]
Thus, Lemma \ref{gammagreateronelemma} is completely proved.
\end{proof}

\subsubsection{Lemma on the change of variables}

One of the key statements necessary for the proof of uniform boundedness
of $L_{n}(\eps)$ in case $\eps>0$ is the following lemma on the
change of variables.
\begin{lemma}[On the change of variables] \label{phisubstitutebolzero} 
There exists $0<\varepsilon'<1$ such that for all $\varepsilon\in(0;\varepsilon']$ there are $l_{\varepsilon}<0<r_{\varepsilon}$ so that the
following statements hold:
\begin{enumerate}
\item There exists continuous increasing function 
\[
\varphi_{\varepsilon}:[l_{\varepsilon},r_{\varepsilon}]\to[0,\frac{\pi}{4}],\quad\varphi_{\varepsilon}(l_{\varepsilon})=0,\ \varphi_{\varepsilon}(0)=x(\varepsilon),\ \varphi_{\varepsilon}(r_{\varepsilon})=\frac{\pi}{4}
\]
such that
\[
f_{\varepsilon}(\varphi_{\varepsilon}(u))=g_{\varepsilon}(u).
\]
\item $\varphi_{\varepsilon}(u)\in C^{1}([l_{\varepsilon},r_{\varepsilon}])$,
moreover, for any $u\in[l_{\varepsilon},r_{\varepsilon}]$ there exists $\varphi''_{\varepsilon}(u)$ 
and  the following inequalities hold: 
\[
0<c_{1}\leqslant\varphi'_{\varepsilon}(u)\leqslant c_{2},
\]
\[
|\varphi''_{\varepsilon}(u)|\leqslant c_{3},
\]
\begin{equation}
c_{4}\leqslant r_{\varepsilon}\leqslant c_{5}  \label{repsineq}
\end{equation}
for some positive constants $c_{1},c_{2},c_{3},c_{4},c_{5},c_{6}$
not depending on $\varepsilon$.
\item Also the following inequalities hold: 
\begin{equation}
-b_{\eps}<-x(\eps)<l_{\eps}<0,\label{bepslepsxepsabs}
\end{equation}
\begin{equation}
\frac{b_{\eps}}{|l_{\eps}|}<4.\label{bepslepsxepsrel}
\end{equation}
\end{enumerate}
\end{lemma}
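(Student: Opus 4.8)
The plan is to build $\varphi_\varepsilon$ out of the monotone branches of $f_\varepsilon$ and $g_\varepsilon$, and to push all the uniform-in-$\varepsilon$ bookkeeping onto the elementary quantities $\theta:=x(\varepsilon)$, $a_\varepsilon$, $b_\varepsilon$. First I would record elementary facts. Since $\cos\theta=1/(1+\varepsilon)$, one gets the clean identities $b_\varepsilon=(1+\varepsilon)\sin\theta=\tan\theta$ and $a_\varepsilon=\theta-\tan\theta<0$, so $|a_\varepsilon|=\tan\theta-\theta$. Fixing $\varepsilon'$ small enough that $\theta\in(0,\pi/6]$ (possible because $\theta\to0$ as $\varepsilon\to0$), elementary bounds for $\tan$ give $\theta\le b_\varepsilon\le\frac4\pi\theta$ and $\frac{\theta^3}{3}\le|a_\varepsilon|\le\frac23\theta^3$; in particular $|a_\varepsilon|\asymp b_\varepsilon^3\asymp\theta^3$, and $a_\varepsilon,b_\varepsilon,\theta$ are all of order $\sqrt\varepsilon$. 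These comparisons are exactly what will make the later estimates uniform.

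Next, the construction. From $g'_\varepsilon(x)=x(b_\varepsilon+x/2)$ and $g''_\varepsilon(x)=b_\varepsilon+x$ we see that $g_\varepsilon$ is strictly decreasing on $[-2b_\varepsilon,0]$, strictly increasing on $[0,\infty)$, and convex on $[-b_\varepsilon,\infty)$. Since $g_\varepsilon(0)=a_\varepsilon<0$ while $g_\varepsilon(-2b_\varepsilon)=a_\varepsilon+\frac23b_\varepsilon^3>0$ (using $|a_\varepsilon|\le\frac23\theta^3<\frac23b_\varepsilon^3$), there is a unique $l_\varepsilon\in(-2b_\varepsilon,0)$ with $g_\varepsilon(l_\varepsilon)=0$; let $r_\varepsilon>0$ be the unique solution of $g_\varepsilon(r_\varepsilon)=f_\varepsilon(\pi/4)$, which exists by monotonicity since $f_\varepsilon(\pi/4)=\pi/4-(1+\varepsilon)/\sqrt2$ stays between two positive constants for $\varepsilon\le\varepsilon'$. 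Then set $\varphi_\varepsilon:=(f_\varepsilon|_{[0,\theta]})^{-1}\circ g_\varepsilon$ on $[l_\varepsilon,0]$ and $\varphi_\varepsilon:=(f_\varepsilon|_{[\theta,\pi/4]})^{-1}\circ g_\varepsilon$ on $[0,r_\varepsilon]$. Both branches are increasing (decreasing composed with decreasing, resp.\ increasing composed with increasing), their ranges are $[0,\theta]$ and $[\theta,\pi/4]$, they agree at $u=0$ (both give $x(\varepsilon)$, as $g_\varepsilon(0)=a_\varepsilon=f_\varepsilon(\theta)$), $\varphi_\varepsilon(l_\varepsilon)=f_\varepsilon^{-1}(0)=0$, $\varphi_\varepsilon(r_\varepsilon)=\pi/4$, and $f_\varepsilon(\varphi_\varepsilon(u))=g_\varepsilon(u)$ by construction; this is item 1. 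The two-sided bound (\ref{repsineq}) on $r_\varepsilon$ follows from $\frac{r_\varepsilon^3}{6}\le g_\varepsilon(r_\varepsilon)-a_\varepsilon\le(\frac{b_\varepsilon}{2}+\frac{r_\varepsilon}{6})r_\varepsilon^2$ together with $b_\varepsilon\to0$, $a_\varepsilon\to0$.

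For item 3: $-b_\varepsilon<-\theta$ is just $\theta<\tan\theta=b_\varepsilon$; and $-\theta<l_\varepsilon<0$ because $g_\varepsilon$ is strictly decreasing on $[-2b_\varepsilon,0]\ni-\theta$ (note $\theta<2b_\varepsilon$) while $g_\varepsilon(-\theta)>0=g_\varepsilon(l_\varepsilon)$. The inequality $g_\varepsilon(-\theta)>0$ comes from evaluating the Taylor identity (\ref{epsbolzerfTailorViag}) at $x=0$: since $f_\varepsilon(0)=0$ we get $g_\varepsilon(-\theta)=-R_\varepsilon(0)$, and $R_\varepsilon(0)=-\frac{1+\varepsilon}{6}\int_0^\theta s^3\sin s\,ds<0$, hence $g_\varepsilon(-\theta)>0$. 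For (\ref{bepslepsxepsrel}), from $g_\varepsilon(l_\varepsilon)=0$ we get $\frac{b_\varepsilon}{2}l_\varepsilon^2=|a_\varepsilon|+\frac16|l_\varepsilon|^3\ge|a_\varepsilon|$, so $|l_\varepsilon|\ge\sqrt{2|a_\varepsilon|/b_\varepsilon}$ and therefore $b_\varepsilon/|l_\varepsilon|\le\sqrt{b_\varepsilon^3/(2|a_\varepsilon|)}\le\sqrt{\frac32(4/\pi)^3}<4$.

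Finally item 2. On $(l_\varepsilon,0)$, $(0,r_\varepsilon)$ and one-sidedly at the endpoints, $\varphi_\varepsilon$ is $C^\infty$, since $g_\varepsilon$ is a polynomial and the relevant branch of $f_\varepsilon^{-1}$ is smooth wherever $f'_\varepsilon\neq0$, which holds off $x(\varepsilon)$ and in particular at $0$ (there $f'_\varepsilon(0)=-\varepsilon\neq0$) and at $\pi/4$ (there $f'_\varepsilon(\pi/4)>0$ for small $\varepsilon$). The only delicate point is $u=0$, where $f_\varepsilon$ has a nondegenerate minimum at $\theta$ (with $f''_\varepsilon(\theta)=b_\varepsilon>0$) and $g_\varepsilon$ a nondegenerate minimum at $0$: writing $f_\varepsilon(x)-a_\varepsilon=(x-\theta)^2\Phi_1(x)$ and $g_\varepsilon(u)-a_\varepsilon=u^2\Phi_2(u)$ with $\Phi_1,\Phi_2$ smooth, positive near the respective points, and equal there to $b_\varepsilon/2$ (here $\Phi_1$ is smooth because $R_\varepsilon$ vanishes to order $4$ at $\theta$), the relation $f_\varepsilon(\varphi_\varepsilon)=g_\varepsilon$ gives $(x-\theta)\sqrt{\Phi_1(x)}\,|_{x=\varphi_\varepsilon(u)}=u\sqrt{\Phi_2(u)}$ (the signs match since $\varphi_\varepsilon$ increases through $\theta$), i.e.\ $\varphi_\varepsilon=S^{-1}\circ T$ near $0$ for the local diffeomorphisms $S(x)=(x-\theta)\sqrt{\Phi_1(x)}$, $T(u)=u\sqrt{\Phi_2(u)}$; hence $\varphi_\varepsilon$ is $C^1$ (in fact $C^\infty$) across $0$ with $\varphi'_\varepsilon(0)=1$, and $\varphi''_\varepsilon(0)$ exists. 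For the uniform bounds $0<c_1\le\varphi'_\varepsilon\le c_2$, $|\varphi''_\varepsilon|\le c_3$ I would then proceed exactly as in Lemma \ref{epslesszerosubstit}: use $\varphi'_\varepsilon=g'_\varepsilon(u)/f'_\varepsilon(\varphi_\varepsilon(u))$ and the corresponding formula for $\varphi''_\varepsilon$, control $\varphi_\varepsilon(u)-\theta$ against $u$ via the convexity of $g_\varepsilon$ and the remainder estimate for $R_\varepsilon$ (the analogues of (\ref{phiumaj})--(\ref{derphionecmp}) and (\ref{MainIneqDeltaSense})), and cancel the common vanishing factor $b_\varepsilon$ carried by numerator and denominator near the critical point. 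The main obstacle is precisely this last step: since $b_\varepsilon,\theta,a_\varepsilon,l_\varepsilon$ all shrink like $\sqrt\varepsilon$, the estimates near $u=0$ must be carried out uniformly in $\varepsilon$, and it is here that the comparisons $|a_\varepsilon|\asymp b_\varepsilon^3\asymp\theta^3$ and the lower bound $b_\varepsilon/|l_\varepsilon|<4$ are indispensable — this is the analogue of, and somewhat more involved than, the long computation in Lemma \ref{epslesszerosubstit}.
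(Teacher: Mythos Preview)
Your construction of $\varphi_\varepsilon$ and your proofs of items 1 and 3 are correct and essentially match the paper's. Two pleasant variations: your derivation of (\ref{bepslepsxepsrel}) directly from $g_\varepsilon(l_\varepsilon)=0$ together with the elementary comparison $|a_\varepsilon|\asymp\theta^3\asymp b_\varepsilon^3$ is shorter than the paper's route, which first establishes the two–sided estimate $|u|\le|\varphi_\varepsilon(u)-x(\varepsilon)|\le\alpha(\varepsilon')|u|$ with an explicit $\alpha(\varepsilon')<2$ and then reads off $x(\varepsilon)\le 2|l_\varepsilon|$; and your Morse–type factorization $f_\varepsilon-a_\varepsilon=(x-\theta)^2\Phi_1$, $g_\varepsilon-a_\varepsilon=u^2\Phi_2$ handles the $C^1$ regularity at $u=0$ cleanly, whereas the paper obtains $\varphi'_\varepsilon(0)=1$ only as a byproduct of the stronger bound $|\varphi'_\varepsilon(u)-1|\le c|u|$. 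One small slip: $|a_\varepsilon|\asymp\theta^3\asymp\varepsilon^{3/2}$, not $\sqrt\varepsilon$.

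For the heart of item 2 (the uniform $c_1,c_2,c_3$), your plan ``proceed exactly as in Lemma \ref{epslesszerosubstit}'' is the right one, and it is precisely what the paper does --- in the order (i) $|u|\le|\phi(u)|\le\alpha|u|$, (ii) the sharpening $|\phi(u)-u|\le cu^2$ via convexity of $g_\varepsilon$ and the remainder $R_\varepsilon$, (iii) $|\varphi'_\varepsilon(u)-1|\le c|u|$ from the chain--rule formula, (iv) the bound on $|\varphi''_\varepsilon|$. You correctly flag the uniformity as the main obstacle; be aware that the paper's execution is noticeably longer here than in Lemma \ref{epslesszerosubstit}: one must split into $u\ge0$ and $u\le0$ throughout, and in the negative case the lower bound on $|f'_\varepsilon(\varphi_\varepsilon(u))|$ needs a constant strictly below $1$ in front of $u^2$ (the paper gets $\tfrac{11}{12}$), which in turn forces a concrete upper bound on the constant $\alpha$ above and hence a further smallness restriction on $\varepsilon'$. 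Your description ``cancel the common vanishing factor $b_\varepsilon$'' is a little loose --- what actually cancels near $u=0$ is the factor $u(b_\varepsilon+\tfrac12 u)$ in $g'_\varepsilon$ against the comparable behavior of $f'_\varepsilon(\varphi_\varepsilon(u))$ --- but the idea is right.
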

\begin{proof}
  Let us prove the first assertion. Consider two cases.

  1. $x\in[x(\varepsilon); \pi / 4]$. In this case, the function $f_{\varepsilon}(x)$
is monotone increasing. For the derivative of $g_{\varepsilon}(u)$
we use: 
\begin{equation}
g'_{\varepsilon}(u)=b_{\varepsilon}u+\frac{1}{2}u^{2}=u\Bigl(b_{\varepsilon}+\frac{1}{2}u\Bigr).\label{gder}
\end{equation}
Then, for $u\geqslant0$ the function $g_{\varepsilon}(u)$ is increasing.
Moreover, $g_{\varepsilon}(0)=a_{\varepsilon}=f_{\varepsilon}(x(\varepsilon))$.
This means that there exists an increasing continuous function $\varphi_{\varepsilon}:[0,r_{\varepsilon}]\to[x(\varepsilon); \pi / 4]$
such that 
\[
f_{\varepsilon}(\varphi_{\varepsilon}(u))=g_{\varepsilon}(u),\quad\varphi_{\varepsilon}(u)=f_{\varepsilon}^{-1}(g_{\varepsilon}(u)).
\]

2.  $x\in[0,x(\varepsilon)]$. In this case $f_{\varepsilon}(x)$ is monotone
decreasing and takes its values on $[f_{\varepsilon}(x(\varepsilon)),f_{\varepsilon}(0)]$:
\[
f_{\varepsilon}(0)=0,\quad  f_{\varepsilon}(x(\varepsilon))=a_{\varepsilon}<0.
\]
By (\ref{gder}), $g_{\varepsilon}(u)$ is monotone decreasing for
$u\in[-2b_{\varepsilon};0]$. Now consider the set of values of $g(u)$
for $u\in[-2b_{\varepsilon};0]$. We want to show that this set of
values contains the segment $[f_{\varepsilon}(x(\varepsilon)),f_{\varepsilon}(0)]$.
It is clear that $g(0)=a_{\varepsilon}=f_{\varepsilon}(x(\varepsilon))$.
Moreover, from representation (\ref{epsbolzerfTailorViag}) we have
\[
f_{\varepsilon}(0)=0=g_{\varepsilon}(-x(\varepsilon))+R_{\varepsilon}(0),
\]
\[
R_{\varepsilon}(0)=\frac{1}{6}\int_{0}^{x(\varepsilon)}s^{3}f^{(4)}(s)ds,\quad f^{(4)}(s)=-(1+\varepsilon)\sin s.
\]
Consequently, $R_{\varepsilon}(0)<0$ and thus $g_{\varepsilon}(-x(\varepsilon))>0$.
So, $g_{\varepsilon}(u)$ for $u\leqslant0$ reaches its maximum value
in the point $u=-2b_{\varepsilon},$ and $g_{\varepsilon}(-2b_{\varepsilon})>0$.
It follows that there exists a point $l_{\varepsilon}\in(-2b_{\varepsilon},0)$
such that $g_{\varepsilon}(l_{\varepsilon})=0$. And consequently,
there exists increasing continuous function $\varphi_{\varepsilon}:[l_{\varepsilon},0]\to[0,x(\varepsilon)]$
such that 
\[
f_{\varepsilon}(\varphi_{\varepsilon}(u))=g_{\varepsilon}(u),\quad\varphi_{\varepsilon}(u)=f_{\varepsilon}^{-1}(g_{\varepsilon}(u)).
\]

Thus, we have proved the first assertion of the lemma. Now we shall
prove inequality (\ref{bepslepsxepsabs}): 
\[
-b_{\eps}<-x(\eps)<l_{\eps}.
\]
We showed that 
\[
a_{\eps}=x(\eps)-b_{\eps}<0.
\]
It follows that $-2b_{\eps}<-b_{\eps}<-x(\eps)$. From the inequality
$g_{\eps}(-x(\eps))>0$, proved above, and definition of the function
$\varphi_{\eps}(u)$ it follows that $l_{\eps}>-x(\eps)$.

We shall use now the following notation 
\[
\phi_{\eps}(u)=\varphi_{\varepsilon}(u)-x(\varepsilon)
\]
but sometimes we will omit index $\epsilon$. Further, for simplicity
we will not write the lower index for the corresponding functions.

From representation (\ref{epsbolzerfTailorViag}) we have: 
\begin{equation}
g(u)=g(\phi(u))+R_{\eps}(\varphi(u)).\label{201906021510}
\end{equation}
The remainder term $R_{\eps}$ can be written as: 
\[
R_{\eps}(x)=\frac{1}{4!}(x-x(\eps))^{4}f_{\eps}^{(4)}(\eta_{\eps}(x))=-\frac{1+\eps}{4!}(x-x(\eps))^{4}\sin\eta_{\eps}(x)
\]
for some point $\eta_{\eps}(x)$ from the segment connecting points
$x$ and $x(\eps)$.

We subdivide the proof of the second assertion  in several parts.

1. Let us prove that for all $u\in[l_{\eps},r_{\eps}]$ the following
inequality holds: 
\begin{align}
  |u|&\leqslant|\phi(u)|\leqslant\alpha(\eps')|u|, \label{201906021420} \\
  \alpha(\eps')&=\max\Bigl\{\Bigl(1-\frac{1+\eps'}{4}\Bigr)^{-1/3}, 
  \Bigl(1-\frac{1}{3}-\frac{1+\eps'}{12}\Bigr)^{-1/2}\Bigr\}. \nonumber 
\end{align}
We will need in the proof the exact value of the constant $\alpha(\eps')$.
It is clear that for $\eps'<1$ the constant $\alpha(\eps')$ is correctly
defined and $\alpha(\eps')<2$. Since $R_{\eps}(x)\leqslant0$, 
we conclude immediately that 
\[
g(u)\leqslant g(\phi(u)).
\]
If $\phi(u)\geqslant0$ and $g(u)$ is increasing for $u\geqslant0$,
it follows that $\phi(u)\geqslant u$. Vice versa, if $\phi(u)\leqslant0$
and $g(u)$ is decreasing for $u\leqslant0$, then $\phi(u)\leqslant u$.
This proves the left inequality in (\ref{201906021420}). To check
the right inequality consider two cases:

(a) \ 
 $u\geqslant0$. As $\phi(u)\leqslant \pi / 4<1$, we have: 
\begin{align*}
g(u)&\geqslant g(\phi(u))-\frac{1+\eps'}{4!}\phi^{4}(u)\geqslant g(\phi(u))-\frac{1+\eps'}{4!}\phi^{3}(u)
\\
    &=\Bigl(\frac{1}{6}-\frac{1+\eps'}{4!}\Bigr)\phi^{3}(u)+\frac{1}{2}b_{\eps}\phi^{2}(u)+a_{\eps}\\
  &=c(\eps')\frac{1}{6}\phi^{3}(u)+\frac{1}{2}b_{\eps}\phi^{2}(u)+a_{\eps}
\\
&\geqslant c(\eps')\frac{1}{6}\phi^{3}(u)+c^{2/3}(\eps')\frac{1}{2}b_{\eps}\phi^{2}(u)+a_{\eps}=g(c^{1/3}(\eps')\phi(u)),
\end{align*}
where 
\[
c(\eps')=1-\frac{1+\eps'}{4}.
\]
For $u\geqslant0$ the function $g$ is increasing, hence 
\[
\phi(u)\leqslant c^{-1/3}(\eps')u.
\]

(b) \ $u\leqslant0$. Again using the representation for the remainder term
$R_{\eps}(x)$, we get inequalities: 
\[
g(u)\geqslant g(\phi(u))+\frac{1+\eps'}{4!}\phi^{3}(u)=\Bigl(\frac{1}{6}+\frac{1+\eps'}{4!}\Bigr)\phi^{3}(u)+\frac{1}{2}b_{\eps}\phi^{2}(u)+a_{\eps}.
\]
Since $\phi(u)=\varphi(u)-x(\eps)\geqslant-x(\eps)\geqslant-b_{\eps}$,
we have $\phi^{3}(u)=\phi(u)\phi^{2}(u)\geqslant-b_{\eps}\phi^{2}(u)$.
It follows that 
\[
g(u)\geqslant\Bigl(-\frac{1}{6}-\frac{1+\eps'}{4!}+\frac{1}{2}\Bigr)b_{\eps}\phi^{2}(u)+a_{\eps}=\tilde{c}(\eps')\frac{1}{2}b_{\eps}\phi^{2}(u)+a_{\eps},
\]
where 
\[
\tilde{c}(\eps')=-\frac{1}{3}-\frac{1+\eps'}{12}+1>\frac{1}{2},\ \mbox{for}\ \eps'<1.
\]
On the other hand, for $u\leqslant0$ we have the inequality $g(u)\leqslant\frac{1}{2}b_{\eps}u^{2}+a_{\eps}$.
We can conclude that 
\[
|\phi(u)|\leqslant\frac{1}{\sqrt{\tilde{c}(\eps')}}|u|.
\]
Inequality (\ref{201906021420}) is thus completely proved.

In this inequality put $u=r_{\eps}$. We get: 
\[
r_{\eps}\leqslant\phi(r_{\eps})=\varphi(r_{\eps})-x(\eps)=\frac{\pi}{4}-x(\eps)\leqslant\frac{\pi}{4}.
\]
On the other hand, as $x_{\eps}\leqslant \pi / 6$, we get 
\[
2r_{\eps}\geqslant\varphi(r_{\eps})-x(\eps)\geqslant\frac{\pi}{4}-\frac{\pi}{6}=\frac{\pi}{12}.
\]
Consequently, 
\[
\frac{\pi}{24}\leqslant r_{\eps}\leqslant\frac{\pi}{4}.
\]
This proves (\ref{repsineq}). Let us prove now (\ref{bepslepsxepsrel}).
Substitute $u=l_{\eps}$ to  (\ref{201906021420}). We will
get 
\[
x_{\eps}\leqslant2|l_{\eps}|.
\]
Consequently, 
\[
\frac{b_{\eps}}{|l_{\eps}|}\leqslant2\frac{b_{\eps}}{x_{\eps}}=2(1+\eps)\frac{\sin x(\eps)}{x(\eps)}\leqslant2(1+\eps')<4
\]
and (\ref{bepslepsxepsrel}) is proved.

\medskip

2.  Now using the inequality (\ref{201906021420}), for small $u$ we
will prove more exact estimate: 
\begin{equation}
|\phi(u)-u|\leqslant cu^{2},\label{201906021503}
\end{equation}
that holds for all $u\in[l_{\eps},r_{\eps}]$ with some absolute constant
$c>0$ not depending on $\eps$ and $u$. From (\ref{201906021510})
and (\ref{201906021420}) we have:: 
\begin{equation}
g(\phi(u))\leqslant g(u)+\frac{1+\eps'}{4!}\phi^{4}(u)\leqslant g(u)+cu^{4},\label{201906021525}
\end{equation}
with $c< 2^{5} / 4!$. For the second derivative of $g$ we have
\[
g''(u)=b_{\eps}+u,
\]
because for $u\geqslant-b_{\eps}$ the function $g(u)$ is downward convex.
Now we draw the tangent to $g(u)$ at the point $u$: 
\[
y(v)=g(u)+g'(u)(v-u),\ v\in\mathbb{R}.
\]
If $u\geqslant0$, then due to convexity of $g$ and due to the fact
that $\phi(u)\geqslant u$ it follows that the point $v$, where the
tangent equals $g(u)+cu^{4}$ (see (\ref{201906021525})), lies to
the right of $\phi(u)$. Then 
\[
\phi(u)\leqslant u+\frac{c}{g'(u)}u^{4}=u+\frac{cu^{4}}{b_{\eps}u+ u^{2}/2}=u+\frac{cu^{3}}{b_{\eps}+ u/2}=u+2cu^{2}.
\]
Vice versa, if $u\leqslant0$, and since $\phi(u)\geqslant-b_{\eps}$,
the same arguments as in the case $u\geqslant0$ give the inequality
\[
\phi(u)\geqslant u+\frac{c}{g'(u)}u^{4}=u+\frac{2cu^{3}}{2b_{\eps}+u}\geqslant u-2cu^{2}.
\]
This proves (\ref{201906021503}).

\medskip

3.  Let us prove the inequality for the first derivative of $\varphi(u)$.
We have: 
\[
\varphi'(u)=\frac{g'(u)}{f'(\varphi(u))}.
\]
For the derivative of $f$ we have: 
\[
f'(x)=g'(x-x(\eps))+R'(x),
\]
where 
\[
R'(x)=\frac{1}{2!}\int_{x(\varepsilon)}^{x}(x-s)^{2}f_{\varepsilon}^{(4)}(s)ds=-\frac{1+\eps}{3!}(x-x(\eps))^{3}\sin\theta_{\eps}(x)
\]
for some $\theta_{\eps}(x)$ from the segment connecting points
$x$ and $x(\eps)$. Thus, 
\begin{equation}
f'(\varphi(u))=g'(\phi(u))-\frac{1+\eps}{3!}\phi^{3}(u)\sin\theta_{\eps}(\varphi(u)).\label{201906021615}
\end{equation}

Firstly, we will show that 
\begin{equation}
\varphi'(u)\geqslant c>0\label{minophiepsbolzero}
\end{equation}
for all $u\in[l_{\eps},r_{\eps}]$ and some constant $c>0$ not depending
on $\eps$ and $u$. Consider two cases: 

(a) \ $u\geqslant0$. As $g'(u)$ is increasing, by (\ref{201906021420})
and (\ref{201906021615}) we have the inequality: 
\[
f'(\varphi(u))\leqslant g'(\phi(u))\leqslant g'(cu)\leqslant\max\{c,1\}g'(u).
\]
Consequently, as $g'(u)\geqslant0$, we get: 
\[
\varphi'(u)\geqslant\frac{g'(u)}{\max\{c,1\}g'(u)}=\frac{1}{\max\{c,1\}} .
\]

(b) $u\leqslant0$. Again using (\ref{201906021420}) and (\ref{201906021615})
we get: 
\[
f'(\varphi(u))\geqslant g'(\phi(u))\geqslant g'(cu)=\frac{c^{2}}{2}u^{2}+cb_{\eps}u\geqslant cb_{\eps}u.
\]
Using this, from $g'(u)\leqslant0$, we have: 
\[
  \varphi'(u)=\frac{-g'(u)}{-f'(\varphi(u))}\geqslant\frac{-g'(u)}{-cb_{\eps}u}=\frac{1}{c}\Bigl(1+\frac{u}{2b_{\eps}}\Bigr)\geqslant
  \frac{1}{c}\Bigl(1+\frac{-b_{\eps}}{2b_{\eps}}\Bigr)=\frac{1}{2c}.
\]
Thus, (\ref{minophiepsbolzero}) is proved.

To get the upper bound for
the second derivative $\varphi$, we will need one more inequality
for $\varphi'$. Namely, we will prove that 
\begin{equation}
|\varphi'(u)-1|\leqslant cu\label{deltaprimemajorbound}
\end{equation}
for all $u\in[l_{\eps},r_{\eps}]$ with some constant $c>0$ not depending
on $\eps$ and $u$. Due to (\ref{201906021615}) we have: 
\[
\varphi'(u)-1=\frac{g'(u)-f'(\varphi(u))}{f'(\varphi(u))}=\frac{g'(u)-g'(\varphi(u))+ ((1+\eps)/3!)\phi^{3}(u)\sin\theta_{\eps}(\varphi(u))}{f'(\varphi(u))}
\]
and 
\[
g'(\phi(u))=g'(u)+g''(u)(\phi(u)-u)+\frac{g^{(3)}(u)}{2}(\phi(u)-u)^{2}.
\]
Also by (\ref{201906021503}) 
\begin{equation}
|\varphi'(u)-1|\leqslant c\frac{g''(u)u^{2}+|u|^{3}+|u|^{4}}{|f'(\varphi(u))|} \label{deltaprimebound}
\end{equation}
for some absolute constant $c>0$, not depending on $\eps$ and $u$.
Now for $|f'(\varphi(u))|$ we will get the lower bound. Consider two
cases: 

(a) \  $u\geqslant0$. By formula (\ref{201906021615}) and inequality (\ref{201906021420})
we have 
\[
f'(\varphi(u))\geqslant g'(\phi(u))-\frac{1+\eps'}{3!}\phi^{3}(u)\geqslant g'(\phi(u))-\frac{1+\eps'}{3!}\phi^{2}(u)\geqslant c(\eps')u^{2}+b_{\eps}u,
\]
where 
\[
c(\eps')=\frac{1}{2}-\frac{1+\eps'}{3!}>\frac{1}{4}\quad  \mbox{for}\ \eps'<\frac{1}{2}.
\]
Thus we showed that 
\begin{equation}
f'(\varphi(u))\geqslant\frac{1}{4}u^{2}+b_{\eps}u,\ u\in[0,r_{\eps}].\label{fprimeepsbolzerominor}
\end{equation}
And from (\ref{deltaprimebound}) we get: 
\begin{align*}
  |\varphi'(u)-1|&\leqslant c\frac{(b_{\eps}+u)u^{2}+u^{3}+u^{4}}{u^{2}/4+b_{\eps}u}
                   =c\frac{(b_{\eps}+u)u}{u/4+b_{\eps}}+c\frac{u^{2}+u^{3}}{u/4+b_{\eps}}\\
  &\leqslant4cu+4c(u+u^{2})=c'u
\end{align*}
for some absolute constant $c'>0$, not depending on $\eps$ and $u$.

(b) \  $u\leqslant0$. Using (\ref{201906021615}), (\ref{201906021420})
and that $g'(u)$ increases we get: 
\[
f'(\varphi(u))\leqslant g'(\phi(u))-\frac{1+\eps'}{3!}\phi^{3}(u)\leqslant g'(u)+\frac{1+\eps'}{3!}\phi^{2}(u)\leqslant\tilde{c}(\eps')u^{2}+b_{\eps}u,
\]
where 
\[
\tilde{c}(\eps')=\frac{1}{2}+\alpha^{2}(\eps')\frac{1+\eps'}{3!}.
\]
Note that 
\[
\tilde{c}(0)=\frac{1}{2}+\frac{1}{6}\max\Bigl\{\Bigl(\frac{3}{4}\Bigr)^{-1/3},\Bigl(\frac{7}{12}\Bigr)^{-1/2}\Bigr\} \leqslant\frac{1}{2}+\frac{2}{6}=\frac{5}{6}.
\]
As $\tilde{c}(\eps')$ is continuous in $\eps'$, there exists $E>0$
such that for any $\eps'<E$ the following inequality holds 
\[
\tilde{c}(\eps')<\frac{1}{2}\Bigl(1+\frac{5}{6}\Bigr)=\frac{11}{12}.
\]
It follows that 
\[
f'(\varphi(u))\leqslant\frac{11}{12}u^{2}+b_{\eps}u\leqslant0
\]
and 
\begin{equation}
|f'(\varphi(u))|\geqslant\left|\frac{11}{12}u^{2}+b_{\eps}u\right|,\ u\in[l_{\eps},0].\label{fprimeepsbolzerominor2}
\end{equation}

Using this inequality in (\ref{deltaprimebound}), we get: 
\begin{align}
  |\varphi'(u)-1|& \leqslant c\frac{(b_{\eps}+u)u^{2}+|u|^{3}+|u|^{4}}{| 11u^{2}/12 +b_{\eps}u|}=c\frac{(b_{\eps}+u)|u|}{11u/12 +b_{\eps}}
                   +c\frac{|u|^{2}+|u|^{3}}{11u/12+b_{\eps}} \nonumber 
\\
&\leqslant\frac{12}{11}c|u|+12c(|u|+u^{2})\leqslant c'|u|\label{c_prime_u}
\end{align}
for some absolute constant $c'>0$ not depending on $\eps$ and $u$.
In (\ref{c_prime_u}) we used that $11u/12 +b_{\eps}\geqslant 11u/12 -u\geqslant0$.
Thus, (\ref{deltaprimemajorbound}) is completely proved.

\medskip

4. Finally, we have to prove an estimate from above for the second derivative
of $\varphi(u)$. It will be useful to introduce the notation: 
\[
\Delta(u)=\phi(u)-u=\varphi(u)-x(\eps)-u.
\]
Then inequalities (\ref{201906021503}) and (\ref{deltaprimemajorbound})
can be rewritten as follows: 
\begin{equation}
|\Delta(u)|\leqslant cu^{2},\quad|\Delta'(u)|\leqslant cu.\label{deltaineqepsbolzero}
\end{equation}
From the definition we get: 
\begin{align*}
\varphi''(u)&=\frac{g''(u)}{f'(\varphi(u))}-\frac{g'(u)f''(\varphi(u))\varphi'(u)}{(f'(\varphi(u)))^{2}}=\frac{g''(u)}{f'(\varphi(u))}-\frac{f''(\varphi(u))(\varphi'(u))^{2}}{f'(\varphi(u)))}
\\
&=\frac{g''(u)-f''(\varphi(u))(\varphi'(u))^{2}}{f'_{}(\varphi(u))}
\end{align*}
and from (\ref{epsbolzerfTailor}) 
\[
f''(x)=g''(x-x({\eps}))+R''(x),
\]
\[
R''(x)=\int_{x(\varepsilon)}^{x}(x-s)f_{\varepsilon}^{(4)}(s)ds=-\frac{1+\eps}{2}(x-x(\eps))^{2}\sin\xi_{\eps}(x)
\]
for some point $\xi_{\eps}(x)$ from the segment connecting points
$x$ and $x(\eps)$. Then 
\[
f''(\varphi(u))=g''(\phi(u))+R''(\varphi(u)).
\]
Moreover, 
\[
g''(\phi(u))=g''(u)+(\phi(u)-u)=g''(u)+\Delta.
\]
Now, from (\ref{deltaineqepsbolzero}) and (\ref{201906021420})
we get: 
\begin{align*}
|\varphi''(u)|&=\Bigl|\frac{g''(u)-g''(\varphi(u))(\varphi'(u))^{2}-R''(\varphi(u))(\varphi'(u))^{2}}{f'_{}(\varphi_{\varepsilon}(u))}\Bigr|
\\
              &\leqslant\frac{|g''(u)-g''(\varphi(u))(\Delta'+1)^{2}|}{|f'_{}(\varphi_{\varepsilon}(u))|}+\frac{cu^{2}}{|f'_{}(\varphi_{\varepsilon}(u))|}\\
  &\leqslant g''(u)\frac{|1-(\Delta'+1)^{2}|}{|f'_{}(\varphi_{\varepsilon}(u))|}+\frac{c'u^{2}}{|f'_{}(\varphi_{\varepsilon}(u))|}
\\
&\leqslant\tilde{c}\frac{g''(u)u}{|f'_{}(\varphi_{\varepsilon}(u))|}+\frac{c'u^{2}}{|f'_{}(\varphi_{\varepsilon}(u))|}
\end{align*}
for some positive constants $c,c',\tilde{c}$ not depending on $\eps$
and $u$. Again we have to consider two cases -- two signs of $u$.
If $u\geqslant0$, then by (\ref{fprimeepsbolzerominor}) we have
\[
  |\varphi''(u)|\leqslant\tilde{c}\frac{g''(u)u}{u^{2}/4+b_{\eps}u}+\frac{c'u^{2}}{u^{2}/4+b_{\eps}u}
  =\tilde{c}\frac{u+b_{\eps}}{u/4+b_{\eps}}+\frac{c'u}{u/4 +b_{\eps}}\leqslant4\tilde{c}+4c'.
\]
If $u\leqslant0$, then using (\ref{fprimeepsbolzerominor2}), we
have 
\begin{align*}
 |\varphi''(u)|  &\leqslant\tilde{c}\frac{g''(u)|u|}{|11u^{2}/12+b_{\eps}u|}+\frac{c'u^{2}}{|11u^{2}/12+b_{\eps}u|}\\
&  =\tilde{c}\frac{u+b_{\eps}}{11u/12+b_{\eps}}+\frac{c'|u|}{11u/12+b_{\eps}}\leqslant\frac{12}{11}\tilde{c}+12c'.
\end{align*}
Thus Lemma \ref{phisubstitutebolzero} is completely proved.
\end{proof}

\section{Conclusion}

We considered here only one-dimensional lattice with accent on infinite
dimension, $l_{\infty}$-initial conditions and, most important, on
the uniform boundedness. Uniform boundedness is one of the most important
stability factors for large systems. Most rigorous papers on large
systems are dedicated to equilibrium situation with Gibbs states.
However most systems are very far from equilibrium and Gibbs distribution
 could hardly play important role. One reason is as follows. It
is known that equilibrium Coulomb systems with particles of different
signs do not exist. So, to exist, the particles should move sufficiently
quickly. Unfortunately, study of such problems is now not in the list
of main directions of modern mathematics. 

There are many applied interpretations of this problem, both physical
and social. We show that for uniform boundedness,  the system
should be initially ``well organized'' -- large smooth clans with rare gaps
between them. 

It is interesting that such problems appeared to be related to fine
questions of classical mathematics, for example to Bessel functions.
And we want to thank Yu.\ Neretin for useful information concerning
Bessel functions.

\end{document}